\documentclass[11pt,letterpaper]{article}
\usepackage[left=1in,top=1in,right=1in,bottom=1in]{geometry}
\usepackage{amsmath} 
\usepackage{enumitem} 

\usepackage{tikz}  
\usetikzlibrary{arrows}
\usetikzlibrary{patterns,snakes}
\usetikzlibrary{decorations.shapes}
\tikzstyle{overbrace text style}=[font=\tiny, above, pos=.5, yshift=5pt]
\tikzstyle{overbrace style}=[decorate,decoration={brace,raise=5pt,amplitude=3pt}]
\usetikzlibrary{shapes.geometric}

\newsavebox{\fmbox}
\newenvironment{fmpage}[1]
{\begin{lrbox}{\fmbox}\begin{minipage}{#1}}
		{\end{minipage}\end{lrbox}\fbox{\usebox{\fmbox}}}

\usepackage{thmtools}
\usepackage{thm-restate}
\usepackage{verbatim}
\usepackage{amsmath}
\usepackage{amssymb}
\usepackage{placeins} 

\usepackage{algorithm,algorithmicx}
\usepackage{algpseudocode}
 \usepackage{mathtools}

\usepackage{comment}

\newcount\Comments
\Comments=1
\definecolor{darkgreen}{rgb}{0,0.7,0}
\newcommand{\kibitz}[2]{\ifnum\Comments=1\textcolor{#1}{#2}\fi}

\newtheorem{theorem}{Theorem}[section]
\newtheorem{definition}{Definition}

\newtheorem{corollary}[theorem]{Corollary}
\newtheorem{observation}[theorem]{Observation}
\newtheorem{lemma}[theorem]{Lemma}


\newcommand{\rset}{\mbox{{\normalfont I\hspace{-.4ex}R}}}
\newcommand{\np}{\mbox{{\normalfont NP}}}
\newcommand{\conp}{\mbox{{\normalfont co-NP}}}
\newcommand{\ppad}{\mbox{{\normalfont PPAD}}}

\newcommand{\ppa}{\mbox{{\normalfont PPA}}}
\newcommand{\ppp}{\mbox{{\normalfont PPP}}}
\newcommand{\pls}{\mbox{{\normalfont PLS}}}
\newcommand{\tfnp}{\mbox{{\normalfont TFNP}}}

\newcommand{\true}{\mbox{{\sc true}}}
\newcommand{\false}{\mbox{{\sc false}}}

\newcommand{\hide}[1]{}
\newcommand{\lplus}{\ensuremath{A_+}}
\newcommand{\lminus}{\ensuremath{A_-}}
\newcommand{\sensor}{\ensuremath{{\cal S}}}

\def\tucker{{\sc Tucker}}
\def\twodtucker{2D-{\sc Tucker}}
\def\twodmstucker{2D-MS-{\sc Tucker}}
\def\ch{{\sc Consensus-halving}}

\def\vt{{\sc Variant Tucker}}

\def\ns{{\sc Necklace-splitting}}

\def\es{{\sc Equal Subsets}}
\def\fac{{\sc Factoring}}

\def\smith{{\sc Smith}}

\def\egc{{\sc $\epsilon$-Gcircuit}}
\newcommand{\necklace}{{\sc NeckSplit}}
\newcommand{\CKD}{{\sc Con-$1/k$-Division}}

\newcommand{\y}{100}

\newlength{\boxwidth}
\setlength{\boxwidth}{\linewidth}
\addtolength{\boxwidth}{7em}

\makeatletter
\DeclareRobustCommand{\qed}{%
  \ifmmode 
  \else \leavevmode\unskip\penalty9999 \hbox{}\nobreak\hfill
  \fi
  \quad\hbox{\qedsymbol}}
\newcommand{\openbox}{\leavevmode
  \hbox to.77778em{%
  \hfil\vrule \vbox to.675em{\hrule width.6em\vfil\hrule}%
  \vrule\hfil}}
\newcommand{\qedsymbol}{\openbox}

\newenvironment{proof}[1][\proofname]{\par \normalfont
  \topsep6\p@\@plus6\p@ \trivlist
  \item[\hskip\labelsep\bfseries\itshape #1.]\ignorespaces }{%
  \qed\endtrivlist }

\newcommand{\proofname}{Proof}

\def\fac{{\sc Factoring}}
\def\phc{{\sc Pigeonhole Circuit}}

\author{Aris Filos-Ratsikas\footnote{Oxford University, UK {\tt Aris.Filos-Ratsikas@cs.ox.ac.uk}}
\and
Paul W. Goldberg\footnote{Oxford University, UK {\tt Paul.Goldberg@cs.ox.ac.uk}
}}

\date{\today}

\pagestyle{plain}

\begin{document}

\title{Consensus Halving is PPA-Complete}

\maketitle

\begin{abstract}
\noindent
We show that the computational problem CONSENSUS-HALVING is PPA-complete,
the first PPA-completeness result for a problem whose definition
does not involve an explicit circuit.
We also show that an approximate version of this problem is
polynomial-time equivalent to NECKLACE SPLITTING, which
establishes PPAD-hardness for NECKLACE SPLITTING, and suggests
that it is also PPA-complete.
\end{abstract}

\begin{paragraph}{Keywords:}
Computational complexity; TFNP; Tucker's lemma
\end{paragraph}

\section{Introduction}\label{sec:intro}

The class \tfnp\ \cite{MP91} of {\em total} search problems in \np\ (where every instance
has an easily-checkable solution) does not seem to have complete problems.
Moreover, no problem in \tfnp\ can be \np-complete unless \np=\conp.
Consequently, alternative notions of computational hardness need to be
developed and applied
in our effort to understand the many and varied problems in \tfnp\
that seem to be intractable.

The complexity class \pls\ (Johnson et al.~\cite{JPY}),
and the classes \ppad, \ppa, and \ppp\ (Papadimitriou~\cite{Pap}) are
subclasses of \tfnp\ associated with various
combinatorial principles that guarantee totality.
Each principle has a corresponding definition of
a computational problem whose totality applies that principle in the
most general way possible, and a complexity class of problems
reducible to it. In more detail:
\begin{itemize}
\item \pls\ consists of problems whose totality invokes the principle
that every directed acyclic graph has a sink vertex;
\item \ppad\ consists of problems whose totality is based on the principle
that given a source in a directed graph whose vertices have in-degree
and out-degree at most 1, there exists another degree-1 vertex;
\item \ppa\ differs from \ppad\ in that the graph need not be directed;
being a more general principle, \ppa\ is thus a superset of \ppad;
\item \ppp, based on the pigeonhole principle, consists of problems
reducible to \phc.
\end{itemize}
Of these complexity classes, so far only \pls\ and \ppad\ has succeeded in capturing
the complexity of ``natural'' computational problems, and the main
point of the present paper is to show for the first time that this is also
true for \ppa. 

The \ch\ problem involves a set of $n$ agents each of whom has
a valuation function on a 1-dimensional line segment $A$ (the ``cake'', in
cake-cutting parlance). Consider the problem of selecting $k$
``cut points'' in $A$ that partition $A$ into $k+1$ pieces,
then labelling each piece either ``positive'' or ``negative'' in such
a way that each agent values the positive pieces equally to the negative ones.
In 2003, Simmons and Su \cite{SS03} showed that this can always be
done for $k=n$; their proof applies the Borsuk-Ulam theorem and
is a proof of existence analogous to Nash's famous
existence proof of equilibrium points of games, proved using Brouwer's or
Kakutani's fixed point theorem.
Significantly, Borsuk-Ulam is the {\em undirected} version of Brouwer, and
already from \cite{Pap} we know that it relates to \ppa, making
\ch\ a candidate for \ppa-completeness.
As detailed in Definition~\ref{def:ch}, we assume that valuations are
presented as step functions using the logarithmic cost model of
numbers.

\subsection{Related work}

The complexity class \ppad\ has been successful in capturing the
complexity of many versions of Nash equilibrium
\cite{DGP,CDT,EGG06,M14,R15,CDO15} and market equilibrium
computation \cite{CSVY08,CDDT09,VY11,CPY13,SSB},
also cake-cutting~\cite{DQS09}.
Kintali et al.~\cite{KPRST13} extend \ppad-completeness to further domains
including network routing, coalitional games, combinatorics, and social networks.
Rubinstein~\cite{R16} introduced an exponential-time hypothesis for \ppad\ to
rule out a PTAS for approximate Nash equilibrium computation on bimatrix games.
The class \pls\ represents
the complexity of an even larger number of local optimisation problems.
These results speak to the importance of \ppad\ and \pls\ as complexity classes.
By contrast, hitherto the only problems known to be \ppa-complete
are ones that involve circuits
(or equivalently, polynomial-time Turing machines)
in their definition, which represented a critique of \ppa.
Noting that consensus-halving is a kind of social-choice problem, our result
can be seen as an example of computational social choice helping to
populate ``lonely'' complexity classes, a phenomenon recently reviewed by Hemaspaandra~\cite{H17}.
The complexity class \ppp\ still suffers from
that problem, although the present paper should raise our hope that problems
such as \es\ will turn out to be complete for \ppp.
Oracle separations of all these classes are known from \cite{BCEIP}.

The distinction between \ppad\ and \ppa\ revolves around whether we
are searching for a fixpoint in an oriented topological space,
or an unoriented one.
For example, while Papadimitriou~\cite{Pap} showed that it's
\ppad-complete to find a Sperner solution in a 3D cube,
Grigni~\cite{Grigni} showed that it's \ppa-complete to find a solution
to Sperner's lemma in a 3-manifold consisting of the product of a M\"{o}bius
strip and a line segment.
The 2-dimensional versions of these results are given in \cite{CD09,DEFKQX}.
Despite the apparent similarity between the definitions of \ppad\ and \ppa,
there is more progress in basing the hardness of \ppa\ on standard
cryptographic assumptions: \fac\ can be reduced to \ppa\ (with a
randomised reduction) \cite{J12}, while so far, the hardness of \ppad\ has
relied on problems from indistinguishability obfuscation~\cite{BPR15,GPS15};
Garg et al.~\cite{GPS16} make progress in weakening the cryptographic
assumptions on which to base the hardness of \ppad, but these are
still less satisfying than in the case of \ppa.

Examples of problems known to be \ppa-complete include the following.
Aisenberg et al.~\cite{ABB} introduce the problem 2D-\tucker:
suppose we have a colouring of
an exponentially-fine grid on a square region, the colouring being
concisely represented via a circuit.
Tucker's Lemma (the discrete version of Borsuk-Ulam) guarantees that
if certain boundary conditions are obeyed, then two adjacent squares in
the grid will get opposite colours.
2D-\tucker\ is the search for such a solution, or alternatively a violation of
the boundary conditions.
As it happens, we use 2D-\tucker\ as the starting-point for
our reductions here.
Deng et al.~\cite{DEFKQX} show \ppa-completeness for finding fully-coloured
points of triangulations of various non-oriented surfaces; the colourings
are presented concisely via a circuit.
Recently, Deng et al.~\cite{DFK} showed that {\em Octahedral Tucker}
is \ppa-complete, reducing from 2D-\tucker\ and using a snake-embedding
style technique that packages-up the exponential grid in 2 dimensions,
into a grid of constant size in high dimension.
Belovs et al.~\cite{BIQSY17} show \ppa-completeness for novel
problems presented in terms of arithmetic circuits representing instances
of the Chevalley-Warning Theorem, and Alon's Combinatorial Nullstellensatz.
There remain other problems in \ppa\ that are not defined in terms
of circuits, and are conjectured to be \ppa-complete.
They include \smith, the problem of finding a second
Hamiltonian cycle (given one as part of the input) in a odd-degree graph
\cite{Pap,Th}, and the discrete Ham Sandwich problem \cite{Pap}
(given $n$ sets of $2n$ points in general position in $n$-space, find
a hyperplane that splits each of these sets into two subsets of size $n$).
Also the problem \ns\ \cite{Alon87,AW86}, discussed in~\cite{Pap}, Simmons
and Su \cite{SS03} note the connection with consensus-halving.

A precursor of this paper \cite{FFGZ} established that \ch\ is \ppad-hard,
even when we allow constant-size approximation errors for the agents.
Taken with the computational equivalence of \ch\ and \ns\ established
here, we immediately obtain \ppad-hardness of \ns, thus in a well-established
sense, \ns\ is computationally intractable.
This partially answers a question posed in \cite{ABB} about the
hardness of \ns.

\subsection{Overview of the proof}

We begin by explaining the ground covered by \cite{FFGZ} (where \ppad-hardness
was established), and then give an overview of the proof in the present paper.
In \cite{FFGZ}, each agent $a$ in a \ch\ instance, has a particular cut $c(a)$
associated with $a$.
In an instance $I_{CH}$ of \ch, we refer to the interval $A$ on
which agents have valuation functions, as the {\em domain} of $I_{CH}$.

\cite{FFGZ} established \ppad-hardness by reduction from the \ppad-complete problem \egc\ 
($\epsilon$-approximate Generalised Circuit) in which the challenge is
to find a fixpoint of a circuit in which each node computes (with error at most $\epsilon$)
a real value in the range $[0,1]$, consisting of a function of
at most two other nodes in the circuit; these may be certain simple arithmetic operations,
or boolean operations (regarding 0 and 1 as representing \false\ and
\true\ respectively).
In \cite{FFGZ}'s reduction from \egc\ to \ch, each node $\nu$ of a
generalised circuit has a corresponding agent $a_\nu$, and the value computed at $\nu$ 
is represented by the position taken by the cut $c(a_\nu)$.
$a_\nu$'s valuation function is designed to enforce the relationship
that $\nu$'s value has with the node(s) providing input to $\nu$.
Here we re-use some of the circuit ``gate gadgets'' of \cite{FFGZ}, in particular the boolean ones.
A cut that encodes the value computed at a boolean gate is expected to lie
in one of two short intervals, associated with \true\ and \false.

In moving from \ppad-hardness to \ppa-hardness, we encounter a fundamental
limitation to the above approach, which is that distinct cuts are constrained to
lie in distinct (non-overlapping) regions of $A$, and {\em collectively,
the cuts lie in an oriented domain}. A new idea is needed, and we
construct two special agents (the ``coordinate-encoding
agents'') along with two cuts that correspond to those agents, which
are less constrained regarding where, in principle, they may occur,
in a solution to the resulting \ch\ instance $I_{CH}$.
These two cuts are regarded as representing a point on a
M\"{o}bius strip, and a distance metric between two pairs of positions
for these cuts, does indeed correspond to distance between points on
a M\"{o}bius strip.
New problems arise from this freedom regarding where these cuts can occur,
mainly the possibility that one of them may occur outside of the
intended ``coordinate-encoding region'' of the domain of $I_{CH}$.
Consequently it may interfere with the circuitry that $I_{CH}$ uses
to encode an instance of \twodtucker\ (which, recall, is the problem
we reduce from).
We deal with this possibility by making multiple copies of the circuit,
so that an unreliable copy is ``out-voted'' by the reliable ones.
The duplication (we use 100 copies) of the circuit serves a further purpose
reminiscent of the  the ``averaging manoeuvre'' introduced in \cite{DGP}:
we need to deal with the possibility of values occurring at nodes of the
circuit that fail to correspond to boolean values.
The duplication corresponds to a sampling of a cluster of points on
the M\"{o}bius strip, most of which get converted to boolean values.

One other significant obstacle addressed here, is due to
coordinate-encoding cuts directly representing the location
of a point on the M\"{o}bius strip {\em with exponential precision}.
We construct a novel mechanism that reads off $\Theta(n)$ bits
of precision from the locations of these cuts, which are then fed in to
the circuit-encoding part of the consensus-halving instance.
(It is this part of the proof that requires us to work with a definition
of $\epsilon$-\ch\ that may require $\epsilon$ to be inverse exponential.
\ppa-hardness for inverse polynomial $\epsilon$ would lead to \ppa-completeness
of \ns, but there seems to be no way to achieve this while reducing
from \twodtucker\ in a way that directly encodes the location of
a solution to \twodtucker.)

Our reductions start out from the 2D-\tucker\ result of \cite{ABB}.
In Section~\ref{sec:tucker-reds} we give a straightforward proof
of \ppa-completeness of a restricted version called \twodmstucker,
in which two opposite sides of the domain are each monochromatic.
We reduce from this to
an artificial-looking problem called \vt, which is essentially
a messy-looking version of \twodmstucker: the purpose of introducing
\vt\ is to extract some of the technical clutter from the main
event, which is the reduction from there to \ch\ (Section~\ref{sec:consensus}).

\section{Preliminaries}\label{sec:prelims}

\subsection{The Consensus Halving problem}

Simmons and Su~\cite{SS03} were not concerned with computational issues;
their result is essentially topological and shows that a solution exists
provided that agents' valuations are infinitely divisible.
A computational analogue requires us to identify how functions are represented,
and we assume they are given as step functions, or piecewise constant
functions, as have also been considered in the cake-cutting literature \cite{CLPP13,AP16}.
A problem instance also includes an approximation parameter $\epsilon$,
the allowed difference in value between the two sides of the partition,
applicable to any agent.

\begin{definition}\label{def:ch}
$\epsilon$-\ch:
An instance $I_{CH}$ incorporates, for each of $i\in[n]$, a non-negative measure $\mu_i$
of a finite line interval $A=[0,x]$, where each $\mu_i$ integrates to 1 and $x>0$ is part of the input.
We assume that $\mu_i$ are step functions represented in a standard
way, in terms of the endpoints of intervals where $\mu_i$ is constant,
and the value taken in each such interval.
We use the bit model (logarithmic cost model) of numbers.
$I_{CH}$ also incorporates $\epsilon\geq 0$ also represented using the
bit model.
We regard $\mu_i$ as the value function held by agent $i$ for subintervals of $A$.

A solution consists firstly of a set of $n$ {\em cut points} in $A$ (also given in
the bit model of numbers).
These points partition $A$ into (at most) $n+1$ subintervals, and the second
element of a solution is that
each subinterval is labelled $\lplus$ or $\lminus$.
This labelling is a correct solution provided that for each $i$,
$|\mu_i(\lplus)-\mu_i(\lminus)|\leq\epsilon$, i.e.\ each agent has a value
in the range $[\frac{1}{2}-\frac{\epsilon}{2},\frac{1}{2}+\frac{\epsilon}{2}]$ for the
subintervals labelled $\lplus$ (thus, also values the subintervals
labelled $\lminus$ in that range).
\end{definition}

\noindent 
A version where the domain $A$ is take to be $[0,1]$ is polynomial time equivalent to that of
Definition \ref{def:ch} (by scaling the valuations appropriately). In the instances that we construct,
$x$ is polynomial in $n$ but one can equivalently allow $x$ to be exponential in $n$; the rescaling
changes the size of the encoding of the problem instances by a polynomial factor.

Note that it's not hard to check that an instance of \ch\ is well-formed in
the sense that the valuation functions should integrate to 1.
Also, note that the bit complexity of numbers involved in an approximate
solution need not be excessive, so, together with the proof of \cite{SS03}
we have containment in \ppa.
A couple of relevant remarks are the following:
\begin{itemize}
	\item Definition \ref{def:ch} allows the accuracy parameter $\epsilon$ to be inverse exponential in $n$, which will be essential for our reduction. In fact, \cite{FFGZ} established \ppad-hardness of the problem even for constant $\epsilon$. An interesting open question is whether our \ppa-hardness result can be extended for constant or even inverse polynomial $\epsilon$ (which would lead to \ppa-completeness for \ns; Section~\ref{sec:necklace}).
	\item The fact that the functions $\mu_i$ are step-functions which integrate to $1$ over the whole interval $A$ is desirable, since this makes the hardness result stronger, compared to arbitrary functions. Note that while the step functions $\mu_i$ must have polynomially-many steps, the values they may take can differ by exponential (in $n$) ratios. The ``in PPA'' result on the other hand is established for arbitrary (bounded, non-atomic) functions, which also makes it as strong as possible, given that it is a containment result.
\end{itemize}

%

\begin{paragraph}{Solutions with alternating labels:}
We assume without loss of generality that we seek solutions to \ch\ in which the labels $\lplus$ and $\lminus$ alternate as we consider the subintervals formed by the cuts, from left to right.
If, say, there are two consecutive subintervals labelled $\lplus$ in a solution, we could combine them into a single subinterval, leaving us with a un-needed cut, which could be placed at the right-hand endpoint of $A$.
We can also assume without loss of generality that the labelling sequence starts with $\lplus$ on the leftmost subinterval of $A$ defined by the set of cuts.
\end{paragraph}

\subsection{The $2D$-\tucker\ problem}

We review the total search problem $2D$-\tucker,
as defined and shown \ppa-complete in \cite{ABB}. 
(Definition~\ref{def:2d-tucker} is a variant of it, that we use, that's easily
seen to be equivalent to the version of \cite{ABB}.)
An instance of $2D$-\tucker\ consists of a labelling
$\lambda:[m]\times[m]\rightarrow\{\pm 1,\pm 2\}$
satisfying the boundary conditions: for $1\leq i,j\leq m$, $\lambda(i,1)=-\lambda(m-i-1,m)$
and $\lambda(1,j)=-\lambda(m,m-j+1)$.
A solution to such an instance of 2D-\tucker\ is a pair $(x_1,y_1)$, $(x_2,y_2)$
($x_1,x_2,y_1,y_2\in[m]$)
with $|x_1-x_2|\leq 1$ and $|y_1-y_2|\leq 1$ such that $\lambda(x_1,y_1)=-\lambda(x_2,y_2)$.

In the above definition, $m$ is exponential, and $\lambda$
is presented via a circuit that computes it.
We use Definition~\ref{def:2d-tucker}, a variant of the
above whose \ppa-completeness easily follows; it is a more
convenient version for us to use.

\begin{definition}\label{def:2d-tucker}
An instance $I_T$ of $2D$-\tucker\ (with complexity parameter $n$) is defined as follows.
Consider the square region $[0,2^n]\times[0,2^n]$.
For $1\leq i,j\leq 2^n$, the $(i,j)$-squarelet denotes the unit square whose top right vertex is at $(i,j)$.
$I_T$ consists of a boolean circuit $C$ having $2n$ input bits representing
the coordinates of a squarelet,
and 2 output bits representing values $1,-1,2,-2$.
$C$'s labelling should obey the boundary conditions of \cite{ABB} noted above.
A solution consists of two squarelets that touch at at least one point,
and have opposite labels (i.e. labels that sum to 0).
\end{definition}

The containment of the problem in \ppa\ was known from \cite{Pap}.
Aisenberg et al. \cite{ABB} proved that the problem is also \ppa-hard.

\begin{theorem}\cite{ABB,Pap}\label{thm:tuckerppa}
$2D$-\tucker\ is \ppa-complete.
\end{theorem}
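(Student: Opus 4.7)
The proof has two directions: containment in \ppa\ and \ppa-hardness. For the inclusion \twodtucker$\,\in\,$\ppa, my plan is to give a polynomial-time many-one reduction to the canonical \ppa\ problem, in which one is given a circuit that describes an undirected graph of maximum degree $2$ on $\{0,1\}^{\mathrm{poly}(n)}$ together with a known leaf, and must find another leaf. I would mimic the constructive proof of Tucker's lemma: build an auxiliary graph whose vertices are squarelets, connecting two squarelets whenever they share a grid edge whose endpoints carry labels summing to zero from one distinguished pair, say $\{+1,-1\}$. The antipodal boundary conditions force a unique such ``door'' to appear on the outer frame, supplying the canonical starting leaf. Every other leaf of the auxiliary graph is a squarelet that additionally contains a complementary edge of the other pair $\{+2,-2\}$ shared with an adjacent squarelet, which is exactly a \twodtucker\ solution. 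Since the whole graph is described by a polynomial-size circuit derived from $C$, this completes the reduction.

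For the hardness direction, I would reduce from a canonical circuit-defined \ppa-complete problem. Given such a graph presented by a circuit, together with a known leaf, the plan is to embed it into the grid $[0,2^n]\times[0,2^n]$ by routing each edge as a pairwise-disjoint snake-like path, and to label squarelets so that the transition $+1\leftrightarrow -1$ occurs while moving along a path, while the transition $+2\leftrightarrow -2$ occurs while crossing a path. Two adjacent squarelets with labels summing to zero can then only occur where a path terminates, i.e., at a degree-$1$ vertex of the input graph. A narrow boundary frame would be pre-filled with a prescribed pattern realising the antipodal identifications $\lambda(i,1)=-\lambda(m-i+1,m)$ and $\lambda(1,j)=-\lambda(m,m-j+1)$, with the given ``trivial'' leaf of the input graph planted on this frame so that it is absorbed into the boundary rather than contributing a spurious solution. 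The circuit $C$ of the resulting \twodtucker\ instance computes, from the binary coordinates of a squarelet, which edge of the input graph (if any) passes nearby, on which side of it the squarelet lies, and which of the four labels to output.

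The main obstacle is managing the antipodal boundary condition, which glues opposite sides of the square in a Möbius-like fashion. This rules out naive layouts that keep the edge-paths in the interior: any path that terminates near the boundary may, through the antipodal identification, accidentally create a complementary pair at a point that does not correspond to a graph leaf. One has to design the boundary frame to be sufficiently rigid that it is entirely determined by the antipodal rule and never produces such a spurious collision, while simultaneously allowing the known leaf of the input graph to be mapped to a specific grid cell that is consumed by the frame. A secondary technical obstacle is that the snake embedding must be simple enough that $C$ can, in polynomial time, decode the local path geometry and corresponding label from $(i,j)$; this suggests a recursive, space-filling-curve style layout so that path identity and orientation are computable by shallow bit manipulations on the binary representations of $i$ and $j$.
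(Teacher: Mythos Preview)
The paper does not actually prove this theorem. Theorem~\ref{thm:tuckerppa} is stated with a citation to \cite{ABB,Pap} and the surrounding text says only that ``The containment of the problem in \ppa\ was known from \cite{Pap}. Aisenberg et al.~\cite{ABB} proved that the problem is also \ppa-hard.'' There is no proof in the paper to compare your proposal against; the result is imported wholesale as prior work and used as the starting point for the paper's own reductions in Section~\ref{sec:tucker-reds}.

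Your sketch is a plausible outline of how the cited proofs go. The containment argument you describe (build a degree-$\leq 2$ graph on squarelets via ``doors'' labelled $\{+1,-1\}$, with the antipodal boundary supplying the known leaf) is essentially the Freund--Todd/Papadimitriou constructive argument. For hardness, your high-level picture of embedding the edges of a circuit-presented graph as snake paths in the grid, with labels encoding side-of-path information, is in the spirit of \cite{ABB}, and you correctly flag the genuine difficulty: the antipodal identification makes the domain non-orientable, and controlling spurious complementary pairs near the boundary while keeping the labelling circuit polynomial-size is where the real work lies. But none of that work appears in this paper, so if your goal was to reproduce the paper's proof, the honest answer is that the paper offers none; it simply invokes the theorem.
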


\subsection{Organization of the paper}
In Section~\ref{sec:tucker-reds}, we reduce from \twodtucker\ 
(the version of Definition~\ref{def:2d-tucker})
to a restricted version \twodmstucker\ where two opposite sides of the Tucker 
square are completely labelled $1$ and $-1$ (a monochromatic sides version). 
From there, we reduce to an artificial looking variant, \vt, which however 
will prove to be very useful for our main reduction to \ch. Section~\ref{sec:consensus}
presents the main reduction and in Section \ref{sec:pf}, we establish 
the correctness of the reduction.

Finally, in Section \ref{sec:necklace}, we show a computational equivalence between 
approximate Consensus Halving and the well-known Necklace Splitting problem.

\section{Reducing from $2D$-TUCKER to VARIANT-TUCKER}\label{sec:tucker-reds}

In this section, we reduce from \twodtucker\ to a variant of
the Tucker problem, which will be more appropriate to use for proving
\ppa-hardness of approximate Consensus Halving. The \ppa-hardness of
the \vt\ problem will be established through a sequence of two reductions.

\begin{itemize}
	\item First, we reduce from \twodtucker\ to a version of the problem when two opposite sides of the square are assigned only a single label (with opposite signs), e.g. $1$ and $-1$ (Definition~\ref{def:2d-ms-tucker}).
	We will refer to this problem as the \twodmstucker\ (where MS stands for ``monochromatic sides''). See Definition~\ref{def:2d-ms-tucker}, Figure \ref{fig:mstucker}.
	\item Then, we reduce from \twodmstucker\ to \vt, by embedding the regions of the \twodmstucker\ instance (the squarelets) into a triangle-domain and extending the labelling function to points outside these regions. In this process, there is a designated significant sub-domain which contains the embedded regions along with diagonal strips that emerge from the embedded regions and go out the edge of the triangle-domain. The embedding is such that the lines separating the regions are piecewise rectilinear, with sufficiently long pieces. Intuitively, the regions will not be separated by diagonal lines but rather by ``zig-zag'' rectilinear lines that approximate the diagonal ones, which results in set of regions that we refer to as \emph{tiles}. See Figure~\ref{fig:vt}.
\end{itemize}

\begin{definition}\label{def:2d-ms-tucker}
An instance $I_{MS}$ of \twodmstucker\ (with complexity parameter $n$) is defined as follows.
Consider the square region $[0,2^n]\times[0,2^n]$.
For $1\leq i,j\leq 2^n$, the $(i,j)$-squarelet denotes the unit square whose top right vertex is at $(i,j)$.
$I_{MS}$ consists of a boolean circuit $C$ having $2n$ input bits representing
the coordinates of a squarelet,
and 2 output bits representing values $1,-1,2,-2$.
$C$'s labelling should obey the boundary conditions of \cite{ABB} noted above,
but in addition, all squarelets $(x,y)$ with $y=1$ get labelled 1,
and all squarelets $(x,y)$ with $y=2^n$ get labelled -1.
(So, two opposite sides are monochromatic.)
As before, a solution consists of two squarelets that touch at at least one point.
\end{definition}

We start from the \ppa-hardness of \twodmstucker.

\begin{lemma}\label{lem:tuckertomstucker}
	\twodtucker\ is polynomial-time reducible to \twodmstucker.
\end{lemma}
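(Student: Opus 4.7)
The plan is to reduce \twodtucker\ to \twodmstucker\ by taking an instance $I_T$ on the $m\times m$ grid (where $m=2^n$) and constructing a new instance $I_{MS}$ on a larger grid of side $M=2^{n+c}$ for some small constant $c$. I would place a copy of $I_T$'s labelling in a central $m\times m$ region of the new grid, leaving a padding margin of thickness $\Theta(m)$ on each side. The new top row will then be forced to be labelled $-1$, the new bottom row to be labelled $+1$, and the left/right sides will be designed to satisfy the antipodal boundary condition of Definition~\ref{def:2d-ms-tucker}. Because the new boundary lies entirely inside the padding (not touching the embedded original), we have full freedom in choosing boundary labels; in particular the corners $(1,1),(M,1),(1,M),(M,M)$ are consistently labelled so as to satisfy both the monochromatic constraints and the antipodal condition $\lambda(i,1)=-\lambda(M-i+1,M)$ and $\lambda(1,j)=-\lambda(M,M-j+1)$.

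The core of the construction is the transition region between the (arbitrarily labelled) boundary of the embedded $I_T$ and the monochromatic top/bottom rows. For the top transition, I would proceed in layers: the first layer (the row just above the embedded original) reproduces the labels on the top row of $I_T$, so that no new antipodal adjacencies are created at that interface; subsequent layers progressively ``sweep'' labels towards $-1$ while exploiting the key observation that $\{+1,-2\}$ and $\{-1,+2\}$ are \emph{not} antipodal pairs (their sums are $-1$ and $+1$, respectively). This means that $\pm 2$ labels can be used as buffer cells separating $+1$ and $-1$ regions: a $+1$ region can be safely abutted by $-2$, which can then be abutted by $-1$. A symmetric construction (flipping all signs) is used below the embedded original to transition to monochromatic $+1$. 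The left/right padding columns can likewise be designed to carry whatever labelling is needed to make the two vertical sides antipodally related; again, the freedom granted by the $\pm 2$ buffers makes this straightforward.

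Correctness of the reduction requires verifying that every antipodal adjacent pair in $I_{MS}$ either lies inside the embedded copy of $I_T$ (so is a solution of $I_T$) or else can be mapped in polynomial time to such a pair. The padding is designed so that no antipodal pair can arise strictly inside it, nor across the interface between the padding and the embedded original, so every solution of $I_{MS}$ is a solution of $I_T$, as required. The main obstacle I anticipate is the careful case analysis needed to rule out spurious solutions in the transition layers: because the top row of $I_T$ can have arbitrary labels from $\{\pm 1,\pm 2\}$, including adjacent $+1$ and $+2$ cells, a simple one-row transition does not suffice, and one must either use several transition rows or scale the original cell-by-cell to a $k\times k$ block so that block interiors and block boundaries can be handled independently. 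With $k$ chosen to be a small constant, the blow-up remains polynomial and $M$ can be arranged to be a power of $2$, completing the reduction.
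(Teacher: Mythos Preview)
Your high-level strategy---embed $I_T$ centrally and pad out to monochromatic top and bottom rows---matches the paper's, but your treatment of the padding differs, and the hard step is not actually carried out.

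You aim to make the padding entirely free of antipodal pairs by using $\pm 2$ as buffers between $+1$ and $-1$. You flag this transition as ``the main obstacle'' and then defer it to an unspecified case analysis or blow-up. The difficulty is genuine because adjacency in Definition~\ref{def:2d-tucker} includes \emph{diagonal} contacts: a cell in the first transition row is constrained by three cells of the row below. For instance, if the top row of $I_T$ has $+2$ and $+1$ in horizontally adjacent positions, then any cell in the next row touching both cannot be $-1$ (antipodal to $+1$) nor $-2$ (antipodal to $+2$), so it is forced into $\{+1,+2\}$; one can check that this obstruction persists through further transition rows, and a constant blow-up does not obviously help at block boundaries. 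Something more is needed, and you have not supplied it.

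The paper sidesteps this by \emph{not} insisting that the padding be solution-free. It labels the attached top and bottom $m\times m$ squares by propagating the boundary labels of $I_T$ diagonally outward (Algorithm~\ref{alg:mstucker}, Figure~\ref{fig:mstucker}), then extends constantly along rows to a $3m\times 3m$ square. The invariant maintained is that any two adjacent cells in the padding either share a label or carry labels that were already adjacent on the relevant side of $I_T$. Hence any antipodal pair found in the padding witnesses an antipodal pair on $\partial I_T$, and one can follow the propagation direction back to it in polynomial time. Allowing but \emph{tracing back} spurious solutions, rather than trying to exclude them, is the idea that makes the argument clean.
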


\begin{proof}
	Let $I_T$ be an instance of \twodtucker\ of size $m \times m$. We will construct an instance $I_{MS}$ of \twodmstucker\ of size $3m \times 3m$ such that a solution to $I_{MS}$ will let us efficiently recover a solution to $I_T$. We will need to establish three facts, namely that (i) $I_{MS}$ will be defined on square that satisfies the labelling conditions of Definition~\ref{def:2d-ms-tucker}, (ii) that we don't introduce any solutions during the construction, i.e. that for any solutions to $I_{MS}$ there is a corresponding solution to $I_T$ and (iii) that given such a solution to $I_{MS}$, we can find a solution to $I_T$ in polynomial time.
	
	First, we augment instance $I_T$ with squares of size $m \times m$ attached to the top side (denoted $\mathcal{T}$) and the bottom side (denoted $\mathcal{B}$) of the square\footnote{Equivalently, we can attach squares of size $m \times m$ to the left and right sides of the square.}, which results in a $3m \times m$ rectangle, where only the squarelets with coordinates $(i,j)$ with $i=m+1,\ldots,2m$ and $j = 1, \ldots, m$ are labelled (i.e. the squarelets of the original square, before the augmentation). For the labelling of the remaining squarelets, we will explain how to label squarelets $(i,j)$ with $i=1,\ldots,m$, $j=1,\ldots,m$, i.e. the top-attached square $top$; the colouring of the squarelets of the bottom-attached square $bottom$ is symmetric by the fact that the labels of the squarelets of $\mathcal{T}$ and $\mathcal{B}$ satisfy the antipodal labelling of Tucker's lemma.
	
	To describe the labelling of the square, let $\mathcal{L}$ be the set of squarelets of $top$ that have been assigned a label; obviously at the beginning of the labelling $\mathcal{L} = \emptyset$. The detailed labelling procedure is described in Algorithm \ref{alg:mstucker}.
	\begin{algorithm}
		\begin{algorithmic}
			\State Set $r=m$.
			\While{$r \neq 0$}
			\For{$i=1,\ldots,m$}
			\For{$j=1,\ldots,m-i+1$},
			\State Let $\ell(i,j)=(m+1,r)$
			\EndFor
			\EndFor
			\State $r=r-1$.	
			\EndWhile
			\For {all $(i,j) \notin \mathcal{L}$}
			\State Set $\ell(i,j)=\ell(m+1,j)$
			\EndFor	
		\end{algorithmic}
	\caption{The labelling procedure of the \twodmstucker\ square.}
		\label{alg:mstucker}
	\end{algorithm}
	Next, we transform the $(3m \times m)$-sized rectangle to a square, to ensure that $I_{MS}$ will be a valid instance to \twodtucker. This can be done by attaching two rectangles of size $3m \times m$, one on each side of the rectangle to create a $(3m \times 3m)$-sized square $C$. For the labelling of the rectangles, we do the following: For each row $i$, we label all squarelets $(i,j)$ such that $j < m$ (using the new coordinate system, where the top-most, right-most squarelet of $C$ has coordinates $(1,1)$) with $\ell(i,m+1)$ and all squarelets $(i,j)$ such that $j > 2m$ with $\ell(i,2m)$. Note that all other squarelets have already been coloured in the previous step. Additionally, note that the squarelets in row $1$ of the resulting square have a single label, opposite to that of the squarelets of row $3m$.

	It is not hard to see that the labelling function ensures that ``neighbouring'' squarelets are either assigned the same label or are assigned labels corresponding only to neighbouring labels of the top side $\mathcal{T}$ or the bottom side $\mathcal{B}$ of the original square. Therefore, if there is a complementary edge that appears in any of the added squares, we can follow the direction of the endpoints of the edge, first towards the centre of the corresponding squares $top$ or $bottom$ (i.e. left or right) and then towards the sides of ($\mathcal{T}$ or $\mathcal{B}$ with directions down or up respectively). This obviously can be done in time polynomial in $m$.
\end{proof}

\begin{figure}
\centering
 \begin{tikzpicture}
    \draw[step=0.5cm,color=gray] (-3,-3) grid (3,3);
    \node[fill=blue,opacity=.1,text opacity=1] at (-0.75,+0.75) {1};
    \node  at (-0.25,+0.75) {2};
    \node at (+0.25,+0.75) {-1};
    \node at (+0.75,+0.75) {-1};
    \node[fill=green,opacity=.1,text opacity=1] at (-0.75,+0.25) {2};
    \node at (-0.25,+0.25) {-1};
    \node at (+0.25,+0.25) {-1};
    \node[fill=blue,opacity=.1,text opacity=1] at (+0.75,+0.25) {-1};
    \node[fill=red,opacity=.1,text opacity=1] at (-0.75,-0.25) {-1};
     \node at (-0.25,-0.25) {-1};
      \node at (+0.25,-0.25) {-2};
     \node[fill=green,opacity=.1,text opacity=1] at (+0.75,-0.25) {-2};
    
    \node at (-0.75,-0.75) {1};
     \node at (-0.25,-0.75) {1};
      \node at (+0.25,-0.75) {-2};
    \node[fill=red,opacity=.1,text opacity=1]  at (+0.75,-0.75) {-1};
    
    \draw[color=red,very thick] (-1,1) -- (-1,-1);
     \draw[color=red,very thick] (-1,1) -- (1,1);
     \draw[color=red,very thick] (1,1) -- (1,-1);
     \draw[color=red,very thick] (1,-1) -- (-1,-1);
     
    \draw[very thick] (-1,0.5) -- (-1.5,0.5);
     \draw[very thick] (-1.5,0.5) -- (-1.5,3);
     \draw[very thick] (-1,1) -- (-1,3);
     
        \draw[very thick] (1,-0.5) -- (1.5,-0.5);
         \draw[very thick] (1.5,-0.5) -- (1.5,-3);
    
     \draw[very thick] (1,0) -- (2,0);
     \draw[very thick] (2,0) -- (2,-3);
     
      \draw[very thick] (1,0.5) -- (2.5,0.5);
     \draw[very thick] (2.5,0.5) -- (2.5,-3);
     
       \draw[very thick] (-1,0) -- (-2,0);
     \draw[very thick] (-2,0) -- (-2,3);
     
     \draw[very thick] (-0.5,1) -- (-0.5,3);
     \draw[very thick] (-1,-1) -- (-1,-3);
     \draw[very thick] (-0.5,-1) -- (-0.5,-3);
     
      \draw[very thick] (0,1) -- (0,3);
     \draw[very thick] (0,-1) -- (0,-3);
     
     \draw[very thick] (0.5,1) -- (0.5,3);
     \draw[very thick] (0.5,-1) -- (0.5,-3);
     
     \draw[very thick] (1,1) -- (1,3);
     \draw[very thick] (1,-1) -- (1,-3);

      \node(a)[fill=blue,opacity=.1,text opacity=1] at (-1.25,+0.75) {1};
      \node(a)[fill=blue,opacity=.1,text opacity=1] at (-1.25,+1.25) {1};
      \node(a)[fill=blue,opacity=.1,text opacity=1] at (-1.25,+1.75) {1};
      \node(a)[fill=blue,opacity=.1,text opacity=1] at (-1.25,+2.25) {1};
      \node(a)[fill=blue,opacity=.1,text opacity=1] at (-1.25,+2.75) {1};
      
       \draw[very thick] (-1,0) -- (-2,0);
      \draw[very thick] (-2,0) -- (-2,3);
    
      \node(a)[fill=green,opacity=.1,text opacity=1] at (-1.25,+0.25) {2};
      \node(a)[fill=green,opacity=.1,text opacity=1] at (-1.75,+0.25) {2};
      \node(a)[fill=green,opacity=.1,text opacity=1] at (-1.75,+0.75) {2};
      \node(a)[fill=green,opacity=.1,text opacity=1] at (-1.75,+1.25) {2};
      \node(a)[fill=green,opacity=.1,text opacity=1] at (-1.75,+1.75) {2};
      \node(a) [fill=green,opacity=.1,text opacity=1] at (-1.75,+2.25) {2};
      \node(a)[fill=green,opacity=.1,text opacity=1] at (-1.75,+2.75) {2};
      
      \draw[very thick] (-1,-0.5) -- (-2.5,-0.5);
      \draw[very thick] (-2.5,-0.5) -- (-2.5,3);
      
      \node(a)[fill=red,opacity=.1,text opacity=1] at (-1.25,-0.25) {-1};
      \node(a)[fill=red,opacity=.1,text opacity=1] at (-1.75,-0.25) {-1};
      \node(a)[fill=red,opacity=.1,text opacity=1] at (-2.25,0.75) {-1};
       \node(a)[fill=red,opacity=.1,text opacity=1] at (-2.25,-0.25) {-1};
       \node(a)[fill=red,opacity=.1,text opacity=1] at (-2.25,0.25) {-1};
      \node(a)[fill=red,opacity=.1,text opacity=1] at (-2.25,1.25) {-1};
      \node(a)[fill=red,opacity=.1,text opacity=1] at (-2.25,1.75) {-1};
      \node(a)[fill=red,opacity=.1,text opacity=1] at (-2.25,2.25) {-1};
      \node(a)[fill=red,opacity=.1,text opacity=1] at (-2.25,+2.75) {-1};
      
       \node(a) at (-2.75,2.75) {1};
        \node(a) at (-2.75,2.25) {1};
         \node(a) at (-2.75,1.75) {1};
          \node(a) at (-2.75,1.25) {1};
           \node(a) at (-2.75,0.75) {1};
           
          \node(a) at (-2.75,0.25) {1}; 
          \node(a) at (-2.75,-0.25) {1}; 
          \node(a) at (-2.75,-0.75) {1};
           \node(a) at (-2.75,-1.25) {1};
              \node(a) at (-2.75,-1.75) {1};
               \node(a) at (-2.75,-2.25) {1};
                \node(a) at (-2.75,-2.75) {1};

     \node(a) at (-2.25,-0.75) {1};
     \node(a) at (-2.25,-1.25) {1};
     \node(a) at (-2.25,-1.75) {1};
     \node(a) at (-2.25,-2.25) {1};
     \node(a) at (-2.25,-2.75) {1};
     
     \node(a) at (-1.75,-0.75) {1};
     \node(a) at (-1.75,-1.25) {1};
     \node(a) at (-1.75,-1.75) {1};
     \node(a) at (-1.75,-2.25) {1};
     \node(a) at (-1.75,-2.75) {1};
     
     \node(a) at (-1.25,-0.75) {1};
     \node(a) at (-1.25,-1.25) {1};
     \node(a) at (-1.25,-1.75) {1};
     \node(a) at (-1.25,-2.25) {1};
     \node(a) at (-1.25,-2.75) {1};
     
     \node(a) at (-0.75,-0.75) {1};
     \node(a) at (-0.75,-1.25) {1};
     \node(a) at (-0.75,-1.75) {1};
     \node(a) at (-0.75,-2.25) {1};
     \node(a) at (-0.75,-2.75) {1};
     
     \node(a) at (-0.25,-0.75) {1};
     \node(a) at (-0.25,-1.25) {1};
     \node(a) at (-0.25,-1.75) {1};
     \node(a) at (-0.25,-2.25) {1};
     \node(a) at (-0.25,-2.75) {1};
     
     \node(a) at (0.25,-0.75) {-2};
     \node(a) at (0.25,-1.25) {-2};
     \node(a) at (0.25,-1.75) {-2};
     \node(a) at (0.25,-2.25) {-2};
     \node(a) at (0.25,-2.75) {-2};
     
     \node(a) at (0.75,-0.75) {-1};
     \node(a) at (0.75,-1.25) {-1};
     \node(a) at (0.75,-1.75) {-1};
     \node(a) at (0.75,-2.25) {-1};
     \node(a) at (0.75,-2.75) {-1};

     \node(a) at (-0.75,1.25) {1};
     \node(a) at (-0.75,1.75) {1};
     \node(a) at (-0.75,2.25) {1};
     \node(a) at (-0.75,2.75) {1};
     
     \node(a) at (-0.25,0.75) {2};
     \node(a) at (-0.25,1.25) {2};
     \node(a) at (-0.25,1.75) {2};
     \node(a) at (-0.25,2.25) {2};
     \node(a) at (-0.25,2.75) {2};
     
     \node(a) at (0.25,0.75) {-1};
     \node(a) at (0.25,1.25) {-1};
     \node(a) at (0.25,1.75) {-1};
     \node(a) at (0.25,2.25) {-1};
     \node(a) at (0.25,2.75) {-1};
     
     \node(a) at (0.75,0.75) {-1};
     \node(a) at (0.75,1.25) {-1};
     \node(a) at (0.75,1.75) {-1};
     \node(a) at (0.75,2.25) {-1};
     \node(a) at (0.75,2.75) {-1};
     
     \node(a) at (2.75,2.75) {-1};
     \node(a) at (2.75,2.25) {-1};
     \node(a) at (2.75,1.75) {-1};
     \node(a) at (2.75,1.25) {-1};
     \node(a) at (2.75,0.75) {-1};
     \node(a) at (2.75,0.25) {-1}; 
     \node(a) at (2.75,-0.25) {-1}; 
     \node(a) at (2.75,-0.75) {-1};
     \node(a) at (2.75,-1.25) {-1};
     \node(a) at (2.75,-1.75) {-1};
     \node(a) at (2.75,-2.25) {-1};
     \node(a) at (2.75,-2.75) {-1};
     
     \node(a) at (2.25,2.75) {-1};
     \node(a) at (2.25,2.25) {-1};
     \node(a) at (2.25,1.75) {-1};
     \node(a) at (2.25,1.25) {-1};
     \node(a) at (2.25,0.75) {-1};
     \node(a)[fill=blue,opacity=.1,text opacity=1] at (2.25,0.25) {-1}; 
     \node(a)[fill=blue,opacity=.1,text opacity=1] at (2.25,-0.25) {-1}; 
     \node(a)[fill=blue,opacity=.1,text opacity=1] at (2.25,-0.75) {-1};
     \node(a)[fill=blue,opacity=.1,text opacity=1] at (2.25,-1.25) {-1};
     \node(a)[fill=blue,opacity=.1,text opacity=1] at (2.25,-1.75) {-1};
     \node(a)[fill=blue,opacity=.1,text opacity=1] at (2.25,-2.25) {-1};
     \node(a)[fill=blue,opacity=.1,text opacity=1] at (2.25,-2.75) {-1};
     
     \node(a) at (1.75,2.75) {-1};
     \node(a) at (1.75,2.25) {-1};
     \node(a) at (1.75,1.75) {-1};
     \node(a) at (1.75,1.25) {-1};
     \node(a) at (1.75,0.75) {-1};
     \node(a)[fill=blue,opacity=.1,text opacity=1] at (1.75,0.25) {-1}; 
     \node(a)[fill=green,opacity=.1,text opacity=1] at (1.75,-0.25) {-2}; 
     \node(a)[fill=green,opacity=.1,text opacity=1] at (1.75,-0.75) {-2};
     \node(a)[fill=green,opacity=.1,text opacity=1] at (1.75,-1.25) {-2};
     \node(a)[fill=green,opacity=.1,text opacity=1] at (1.75,-1.75) {-2};
     \node(a)[fill=green,opacity=.1,text opacity=1] at (1.75,-2.25) {-2};
     \node(a)[fill=green,opacity=.1,text opacity=1] at (1.75,-2.75) {-2};
     
     \node(a) at (1.25,2.75) {-1};
     \node(a) at (1.25,2.25) {-1};
     \node(a) at (1.25,1.75) {-1};
     \node(a) at (1.25,1.25) {-1};
     \node(a) at (1.25,0.75) {-1};
     \node(a)[fill=blue,opacity=.1,text opacity=1] at (1.25,0.25) {-1}; 
     \node(a)[fill=green,opacity=.1,text opacity=1] at (1.25,-0.25) {-2}; 
     \node(a)[fill=red,opacity=.1,text opacity=1]  at (1.25,-0.75) {-1};
     \node(a)[fill=red,opacity=.1,text opacity=1]  at (1.25,-1.25) {-1};
     \node(a)[fill=red,opacity=.1,text opacity=1]  at (1.25,-1.75) {-1};
     \node(a)[fill=red,opacity=.1,text opacity=1]  at (1.25,-2.25) {-1};
     \node(a)[fill=red,opacity=.1,text opacity=1]  at (1.25,-2.75) {-1};
  \end{tikzpicture}

\caption{An instance of \twodmstucker, according to the construction of Lemma \ref{lem:tuckertomstucker}. Notice that two opposite sides of the \twodmstucker\ are monochromatically labelled with opposite labels. The original square of the \twodtucker\ instance is highlighted in the middle of the new square.}
\label{fig:mstucker}
 \end{figure} 

\subsection{Reduction to \vt}

\vt\ is essentially a technically-cluttered version
of \twodmstucker. 
It's helpful as an intermediate stage towards our eventual goal of \ch,
since the technical clutter emanates from the way we encode
\twodtucker\ in terms of \ch, and by reducing from \vt, we simplify
the proof that our final reduction to \ch\ does indeed work.

\begin{definition}\label{def:subregion}
A {\em subregion} of the plane consists of an equivalence class of points $(x,y)$ 
that are equivalent when their binary expansions are truncated after
$n+4$ bits of precision; thus any subregion is a square with an edge
length of $\frac{1}{16}2^{-n}$.
\end{definition}

\begin{definition}\label{def:tile}
Consider pairs $(a,b)$ of non-negative even numbers, for which
either $a$ and $b$ are both multiples of 4, or neither are.

Define the {\em $(a,b)$-tile} to be a union of 8 subregions arranged as in Figure~\ref{fig:vt},
with central point at $(\frac{1}{16}a\cdot2^{-n},\frac{1}{16}b\cdot2^{-n})$,
having a height and width of $\frac{1}{4}2^{-n}$.
(Thus all horizontal and vertical line segments have coordinates that
are multiples of $\frac{1}{16}2^{-n}$.)
If $a$ or $b$ is equal to zero, the tile consists of just the parts
of this region with non-negative coordinates.
\end{definition}
Observe that (for the values of $a,b$ allowed in Definition~\ref{def:tile})
tiles tessellate the positive quadrant of the plane as in Figure~\ref{fig:vt}.

\begin{definition}\label{def:vt}
An instance of \vt\ with complexity parameter $n$, consists of a boolean circuit
$C$ that takes as input $2n+22$ bits. These input bits represent the coordinates of a point
$(x,y)$ for $x,y\in[0,1]$, each of $x$ and $y$ represented as a bit string with $n+11$
binary places of precision.
$C$ has 4 boolean outputs that we use to represent the values $1,-1,2,-2$, respectively as $1110,0001,0111,1000$.\footnote{Note that we can enforce syntactically that these are only values that the output of the circuit can take. We use this convention instead of the usual $2$-bit circuit output in order to simplify the construction of the Consensus-Halving instance in Section \ref{sec:consensus}.} 
$C$ obeys the following constraints that may be enforced syntactically:
\begin{enumerate}
\item if $y<\frac{3}{8}-x$ then $C$ must output $1$;
\item if $y>\frac{5}{8}-x$ then $C$ must output $-1$;
\item\label{ppty3} if $y>x+\frac{1}{8}$ then the output of $C$ should be opposite to
its output on $(1-x,1-y)$, and similarly for points with $y<x-\frac{1}{8}$;
\item the output value of $C$ may not depend on the last 7 bits of $x$ or $y$;
\item Moreover, $C$'s output value is constant within tiles
(Definition~\ref{def:tile}, Figure \ref{fig:vt}):
a tile consists of 8 square regions with 128 discrete points along their
edges, arranged as in Figure~\ref{fig:vt}.
\item\label{exception} We allow the following exception to the above rules, which is that for
input bit-strings that represent
points that lie adjacent to the boundary of any subregion, $C$'s output value is unrestricted.
\end{enumerate}
A solution consists of a sequence of 100 points $(x_i,y_i)$ for $1\leq i\leq 100$,
where $y_1\leq 1-x_1$, and for $i>1$ we have $x_i=x_{i-1}+2^{-(n+11)}$ and $y_i=y_{i-1}-2^{-(n+11)}$,
where addition and subtraction are taken modulo 1.
These 100 points should
contain a set of 10 points that all produce the same output, and
another set of 10 points that produce the opposite output.
In the case that $y_1<100\cdot 2^{-(n+11)}$ and the sequence of points ``wraps around'',
this property must instead hold after
we negate the outputs of the wrapped-around subsequence.
\end{definition}

\begin{figure}
\center{
\begin{tikzpicture}[scale=0.6]

\draw[thick,dashed](-1,-1)--(9,9);
\draw[thick,dashed](7,9)--(17,-1);
\draw[thick,dashed](-1,9)--(9,-1);
\draw[thick,dashed](7,-1)--(17,9);
\draw[thick,dashed](-1,1)--(1,-1);
\draw[thick,dashed](-1,7)--(1,9);
\draw[thick,dashed](15,-1)--(17,1);
\draw[thick,dashed](15,9)--(17,7);

\draw[ultra thick](7,8)--(10,8)--(10,4)--(14,4)--(14,0)--(17,0);
\draw[ultra thick](8,9)--(8,6)--(4,6)--(4,2)--(0,2)--(0,-1);
\draw[ultra thick](-1,8)--(2,8)--(2,4)--(6,4)--(6,0)--(10,0)--(10,-1);
\draw[ultra thick](-1,6)--(0,6)--(0,9);
\draw[ultra thick](-1,0)--(2,0)--(2,-1);
\draw[ultra thick](8,-1)--(8,2)--(12,2)--(12,6)--(16,6)--(16,9);
\draw[ultra thick](14,9)--(14,8)--(17,8);
\draw[ultra thick](16,-1)--(16,2)--(17,2);

\draw[thick,<->](-2,6)--(-2,8);\node at(-3,7){$\frac{1}{16}2^{-n}$};

\node at(0,4){$\bullet$};\node at(0,4.5){$(\frac{1}{16}a2^{-n},\frac{1}{16}b2^{-n})$};
\node at(8,4){$\bullet$};\node at(8,4.5){$(\frac{1}{16}(a+4)2^{-n},\frac{1}{16}b2^{-n})$};

\draw[thick,dotted](11,2.5)--(12.2,1.3);
\node at(11,2.5){\tiny $\bullet$};\node at(12.2,1.3){\tiny $\bullet$};
\draw[thick,->](18,4)--(12,1.5);
\node[text width=2cm]at(20,5){possible solution to \vt};

\end{tikzpicture}
\caption{ \small{
Tiles in \vt\ are regions enclosed by the heavy lines.
Horizontal line segments have $y$-coordinates that are multiples of $2^{-n}$,
vertical line segments have $x$-coordinates that are multiples of $2^{-n}$.
Since numbers have $n+7$ bits of precision, each of the 8 square regions
contained in a tile has 128 discrete points along its edges.
}}\label{fig:vt}}
\end{figure}
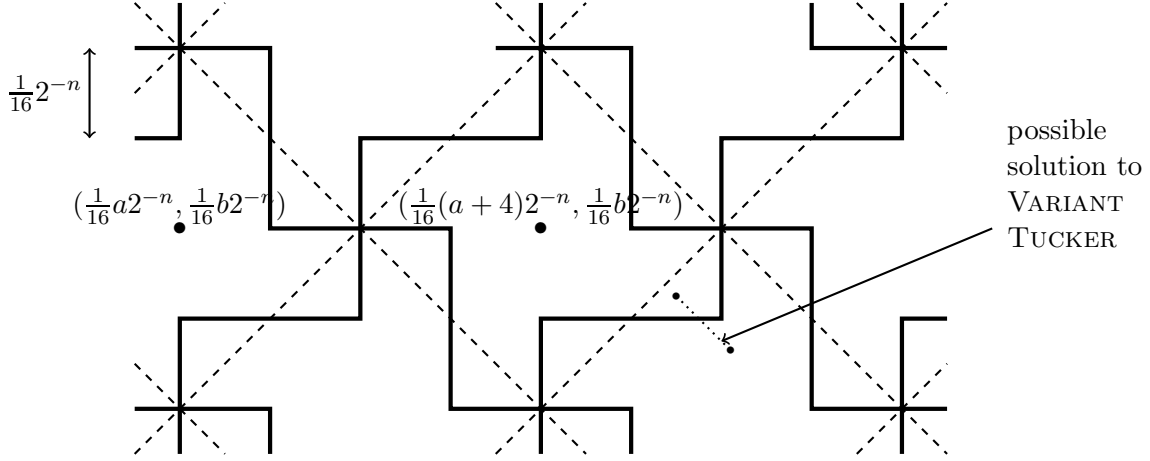

\begin{figure}
\center{
\begin{tikzpicture}[scale=0.6]

\draw[ultra thick](0,0)--(12,0)--(0,12)--cycle;
\draw[thick](0,4.5)--(-0.3,4.5);\node at(-1,4.5){$(0,\frac{3}{8})$};
\draw[thick](0,7.5)--(-0.3,7.5);\node at(-1,7.5){$(0,\frac{5}{8})$};
\node at(-1,12){$(0,1)$};\node at(-1,0){$(0,0)$};\node at(12,-1){$(1,0)$};
\draw[thick,dashed](0,7.5)--(7.5,0);\draw[thick,dashed](0,4.5)--(4.5,0);
\draw[thick,dashed](1.5,3)--(3,4.5);\draw[thick,dashed](3,1.5)--(4.5,3);
\node at(1,1){$C=1$};\node at(5,5){$C=-1$};
\node at(3,3){$R$};\node at(1.5,4.5){$R'$};\node at(4.5,1.5){$R''$};

\end{tikzpicture}
\caption{ \small{
Region of points where we look for a solution to \vt.
$R$ is the region that contains a copy of $2D$-MS-\tucker.
}}\label{fig:tri-region}}
\end{figure}
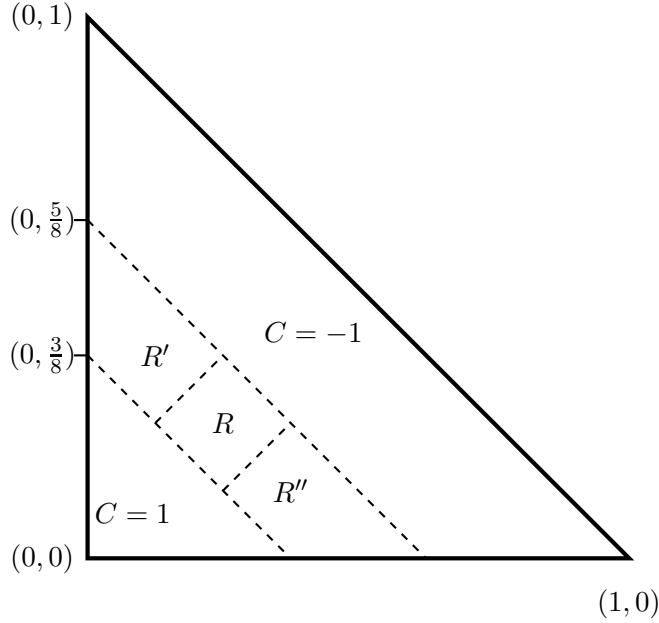

\begin{lemma}
\vt\ is \ppa-complete.
\end{lemma}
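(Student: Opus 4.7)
The plan is to prove both directions by noticing that \vt\ is essentially \twodmstucker\ with a thick diagonal strip inserted, together with an enforced antipodal extension on the two triangular regions outside the strip.

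\textbf{Containment in \ppa.} I would verify this by giving a direct reduction from \vt\ to \twodtucker, applied to the tile-centre grid. Constraints 1 and 2 force the labelling to be a constant $1$ below the anti-diagonal $y=3/8-x$ and a constant $-1$ above $y=5/8-x$, so the nontrivial labelling lives inside the central strip; constraint~\ref{ppty3} enforces antipodal labelling across that strip; constraints 4--5 make the labels piecewise-constant on tiles and independent of the low-order bits; and constraint~\ref{exception} only relaxes the labelling on measure-zero boundary sets that the tile-centre grid never hits. These ingredients are exactly what is needed to invoke Tucker's lemma on the grid of tile-centres, yielding two tile-adjacent points of opposite labels. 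Since the 100-point diagonal is shorter than a tile diagonal, such a pair of adjacent tiles of opposite labels immediately produces a valid \vt\ witness (10 consecutive points inside the first tile, 10 consecutive points inside the second). Hence \vt\ reduces to \twodtucker\ which is in \ppa.

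\textbf{Hardness: reducing from \twodmstucker.} Given an instance $I_{MS}$ of \twodmstucker\ of size $2^n\times 2^n$, I would build a circuit $C$ for \vt\ with complexity parameter $n+O(1)$ as follows. The region $R$ in Figure~\ref{fig:tri-region} is a square of side $1/8$ (rotated $45^\circ$) and it contains a $\Theta(2^n)\times\Theta(2^n)$ array of tiles. I map the squarelets of $I_{MS}$ one-to-one onto the tiles of $R$ by a rotation-and-scaling, orienting $I_{MS}$ so that its monochromatically-$1$ side sits along the edge of $R$ adjacent to the anti-diagonal $y=3/8-x$ and its monochromatically-$(-1)$ side sits along the edge adjacent to $y=5/8-x$. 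Outside $R$ but still inside the diagonal strip, the labelling is extended by propagating the boundary labels of $R$ along the anti-diagonal direction until meeting constraints 1 or 2, which are then satisfied automatically because the sides of $I_{MS}$ facing those anti-diagonals are monochromatic with the correct sign. On $R'$ and $R''$ the labelling is \emph{defined} as the negation of the label at the antipodal point $(1-x,1-y)$, so property~\ref{ppty3} holds by construction, and this antipodal extension is consistent with the rest because the central point $(1/2,1/2)$ of the triangle is handled symmetrically.

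\textbf{Extracting a solution.} A \vt\ solution is a diagonal segment of 100 points of length $100\cdot 2^{-(n+11)}\ll 2^{-(n+2)}$ (a tile dimension), so it meets at most two adjacent tiles. The existence of 10 outputs equal to one value and 10 equal to its opposite therefore forces the segment to straddle two adjacent tiles carrying opposite labels. Such a pair cannot lie in regions governed by constraints 1 or 2 (all labels there are $1$ or all are $-1$), so it lies either in $R$ or in $R'\cup R''$; in the latter case the antipodal extension gives a corresponding pair of adjacent opposite-labelled tiles inside $R$. Either way, we read off two squarelets of $I_{MS}$ that touch and have opposite labels, i.e.\ a solution to $I_{MS}$, and this extraction is polynomial in $|I_{MS}|$.

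\textbf{Main obstacle.} The hardness part is mostly bookkeeping; the step I expect to demand the most care is ensuring that the antipodal extension on $R'\cup R''$ meshes with both the propagated labelling emanating from $R$ and the ``outer'' regions controlled by constraints 1 and 2, without generating spurious adjacent-opposite pairs that do not correspond to solutions of $I_{MS}$. Getting this right requires choosing the embedding of $I_{MS}$ into $R$ so that the monochromatic sides line up precisely with the boundary of the anti-diagonal strip, and checking tile-by-tile that the only place where opposite labels can occur within one tile-diagonal of each other is inside $R$ or its antipodal image — which is why I would use the \emph{monochromatic-sides} version \twodmstucker\ rather than reducing directly from \twodtucker.
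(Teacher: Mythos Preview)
Your approach is essentially the paper's: reduce from \twodmstucker, embed the squarelets as tiles in the region $R$, extend the colouring into the flanking triangles $R'$ and $R''$, and recover a solution by observing that the 100-point diagonal is short enough to touch only two adjacent tiles. However, two steps in your write-up are not right as stated.

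First, your labelling of $R'$ and $R''$ is muddled. You first say you propagate the boundary colours of $R$ along the anti-diagonal direction into ``outside $R$ but still inside the strip''; then you say that on $R'$ and $R''$ the label is \emph{defined} as the negation of the label at $(1-x,1-y)$. But $R'\cup R''$ \emph{is} the part of the strip outside $R$, so you are giving two different prescriptions for the same region. Worse, the second prescription is circular: the antipode of a point in $R'$ lies in $R''$ and vice versa, so ``negate the antipodal label'' does not determine anything by itself. What actually works (and what the paper does) is to extend into both $R'$ and $R''$ by nearest-neighbour from the two sides of $R$ that correspond to the non-monochromatic sides of $I_{MS}$, and then \emph{verify} that Property~\ref{ppty3} holds automatically because those two sides of $I_{MS}$ already satisfy the Tucker antipodal boundary condition $\lambda(1,j)=-\lambda(2^n,2^n-j+1)$.

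Second, your solution-extraction argument ignores item~\ref{exception} of Definition~\ref{def:vt}: points adjacent to a subregion boundary may be labelled arbitrarily, so you cannot simply say ``10 points output $+\ell$ and 10 output $-\ell$, hence two adjacent tiles of opposite label''. You need the counting step the paper supplies: the 100-point segment crosses subregion boundaries in at most two places, so at most four points can receive garbage labels under the exception, leaving the two blocks of ten with at least six genuine tile-labels each. You also do not mention the wrap-around case (when $y_1<100\cdot 2^{-(n+11)}$ and the sequence crosses the bottom edge of the triangle), which requires a separate check that the negation-of-the-suffix convention in Definition~\ref{def:vt} lines up with Property~\ref{ppty3} so that you can still read off a solution on the NW or SE edge of $R$.
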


\begin{proof}
We reduce from \twodmstucker.
Squarelets in an instance $I_{2DMST}$ of \twodmstucker\ correspond to tiles
in an instance $I_{VT}$ of \vt\ as follows.

$I_{2DMST}$ contains squarelets $(i,j)$ for $1\leq i,j\leq 2^n$.
Each $(i,j)$ squarelet determines the value taken by $C$ on
the $(2\cdot2^n+2i+2j,4\cdot2^n-2i+2j)$-tile.
With this rule, the squarelets of $I_{2DMST}$ are mapped into tiles in the region
$R$ in Figure~\ref{fig:tri-region} in such a way that adjacencies are
preserved: two squarelets are adjacent if and only if their corresponding
tiles are adjacent.

Suppose that the monochromatic sides of $I_{2DMST}$ are squarelets $(i,j)$
with $j=1$ having label 1, and squarelets $(i,j)$ with $j=2^n$ having label $-1$.
As a result, these squarelets get mapped to sides
of the region $R$ that are adjacent to and match the monochromatic
regions adjacent to $R$
(the regions where $y<\frac{3}{8}-x$, alternatively $y>\frac{5}{8}-x$).

Any tile in the remaining parts $R',R''$ of the triangular domain of
Figure~\ref{fig:tri-region}
is allocated the same colour as its closest (Euclidean distance) tile in $R$.
That is, the colour of the $(2\cdot2^n+2j,4\cdot2^n+2j)$-tile is allocated
to the $(2\cdot2^n+2j-2k,4\cdot2^n+2j+2k)$-tile, for positive integers $k$,
and the colour of the $(4\cdot2^n+2j,2\cdot2^n+2j)$-tile is allocated
to the $(4\cdot2^n+2j+2k,2\cdot2^n+2j-2k)$-tile, for positive integers $k$.
Notice that this rule obeys Property~\ref{ppty3} of Definition~\ref{def:vt},
due to the boundary condition on the colouring of squarelets in $I_{2DMST}$.

Given that $I_{2DMST}$ has a concise circuit that labels its squarelets,
it's not hard to see that the corresponding instance $I_{VT}$ has
a concise circuit that takes as input, points in the triangular region
(at the slightly higher numerical precision), checks which tile
a point belongs to, and labels it according to the above rules.

We claim that for a sequence of 100 points to contain
two sets of 10 points having opposite labels, as required
for a solution, this will only happen when that sequence crosses
two adjacent tiles having opposite labels.
Any sequence of 100 points constructed as in the problem definition,
may cross the boundaries of subregions in at most 2 places, resulting
in at most 4 points where $C$ can disobey the tile colouring due to the
exception in item (\ref{exception}).
So most of the points in the two sets of 10 oppositely-labelled points must indeed come from
two oppositely-labelled tiles.

If this happens in region $R$ of Figure~\ref{fig:tri-region}, the two tiles correspond
directly to two adjacent squarelets in $I_{2DMST}$ having opposite
labels. It could also occur in the regions $R'$ or $R''$,
in which case we find a solution in the closest edge of $R$.
Suppose the sequence of 100 points ``wraps around'',
i.e. straddles $R'$ and $R''$, crossing the line
between $(0,0)$ and $(1,0)$
and appearing just to the right of the line between $(0,0)$ and $(0,1)$.
Labels get negated at the point where we wrap around, but recall that
in this case, we flip the suffix of the sequence occurring in $R'$,
before applying the test
that two sets of 10 points have equal and opposite labels.
From such a sequence of points we can identify either of two solutions
on the north-west or south-east sides of $R$ that are closest to
the sequence of points.
\end{proof}

\section{Construction of the Consensus Halving Instance}\label{sec:consensus}

In this section, we describe how to construct an instance $I_{CH}$ of $\epsilon$-\ch\ from an instance of \vt, for inverse-exponential precision parameter $\epsilon$. At a high level, the domain $A$ of $I_{CH}$ will have two designated regions -- a small one, typically containing $2$ cuts in a solution, which represent coordinates of points in the triangular domain of Figure~\ref{fig:tri-region} (the ``coordinate-encoding region'') and a larger one for the encoding of the labelling circuit (the ``circuit-encoding region''). Certain sensoring gadgets will detect the position of coordinate-encoding cuts and will feed this information to a set of gadgets which encode the inputs to the labelling circuit of the \vt\ instance. This information will be propagated through the circuit-encoding gadgets and fed back to the coordinate-encoding region. The idea is that two designated agents of the Consensus Halving instance, which will be associated with the coordinate-encoding region, will only be satisfied with the balance between \lplus\ and \lminus\ if the detected cuts correspond to points on a sequence that is a solution to \vt\ (Sections \ref{sec:cer} and \ref{sec:cdr}).

The construction will actually encode multiple copies of the labelling circuit of \vt\ for two different reasons. The main reason is that for each copy of the circuit, the cuts in the coordinate-encoding region will encode a different point in the domain, with these points being sufficiently close to each other (this will be achieved by small shifts in the valuation blocks corresponding to the circuits in the coordinate-encoding region) and with all of them lying on the same line segment. We will ensure that a solution to $I_{CH}$ will correspond to (sufficiently many) points of this segment with coordinates in squarelets with equal and opposite labels. The other reason is to deal with ``stray cuts'', i.e. cuts that are intended to lie in the coordinate-encoding region but actually cut through the circuit-encoding region. These cuts might ``invalidate'' the circuits that they cut through, but the construction will ensure that the rest of the circuits will remain unaffected, and there will still be sufficiently many reliable points in the sequence (Section \ref{sec:sol} and Section \ref{sec:pf}).\\

\noindent More concretely, given an instance $I_{VT}$ of \vt\ with
complexity parameter $n$,
we will construct an instance $I_{CH}$ of $\epsilon$-\ch\ for $\epsilon = 2^{-2n}$.
Let $A$ denote the Consensus-Halving domain, an interval of the form $[0,x]$ where
$x$ is of size polynomial in $n$. Any agent $a$ in $I_{CH}$ has a
measure $\mu_a:A\longrightarrow\rset$ which will be represented by
a step function (having a polynomial in $n$ number of steps). 

\subsection{Regions and agents of the instance $I_{CH}$}.
The domain $A$ will consist of two main regions:\begin{itemize}
	\item The \emph{coordinate-encoding region} $[0,1]$ (abbreviated as the \emph{c-e region}).
	\item The \emph{circuit-encoding region} $(1,x]$ (abbreviated as \emph{R}).
\end{itemize} 
Our construction contains $\y$ copies of an encoding of the labelling circuit $C$ of 
\vt\ and for the purpose, the circuit-encoding region $R$ will be further divided into $\y$ non-intersecting sub-regions $R_1,\ldots,R_{\y}$, one for each copy of the circuit.
The regions $R_i$ are of equal length and constitute a partition of $R$.
We further divide each region $R_i$ into three sub-regions $R_i^{in}$, $R_i^{mid}$ and $R_i^{out}$, which again are non-intersecting and partition $R_i$. These regions 
correspond with parts of a circuit that deal respectively with the input bits, the intermediate bits and the outputs.\\

\noindent The instance $I_{CH}$ will have the following sets of agents:
\begin{itemize}
	\item
	$2$ {\em coordinate-encoding agents}
	$\alpha_1,\alpha_2$ whose valuation functions $\mu_{a_1}$, $\mu_{a_2}$ are only positive in $\bigcup_{i=1}^{\y} R_i^{out}$. (See Subsection \ref{sec:cer} and Figure \ref{fig:ce-agents}). 
	\item
	\y\ \emph{circuit-encoders} $C_1,\ldots,C_{\y}$ (see Subsection \ref{sec:cdr}). 
	\begin{itemize}
		\item
		Each $C_i$ has an associated circuit-encoding region $R_i$ of the domain.
		\item
		With each $R_i$, there is a polynomial number of associated {\em circuit-encoding agents}. Let ${\cal{A}}_i$ be the set of those agents; the set ${\cal{A}}_i$ consists of the following sets of agents.
		\begin{itemize}
			\item
			A set ${\cal{S}}_i \subset {\cal A}_i$ of $8(n+8)+1$ \emph{sensor agents} with value in $[0,1] \cup R_i^{in}$. Among those, there will be a designated agent that we will refer to as the \emph{blanket sensor agent}. (See Subsection \ref{sec:sensors} and Figures \ref{fig:sensor_agents} and \ref{fig:sensor_agents2}).
			\item A set ${\cal G}_i \subset {\cal A}_i$ of polynomially-many \emph{gate agents}, with value in $R_i^{in} \cup R_i^{mid} \cup R_i^{out}$. (See Subsection \ref{sec:gate} and Figure \ref{fig:gates}).
		\end{itemize}
	\end{itemize}
\end{itemize}
\input{overview_general}
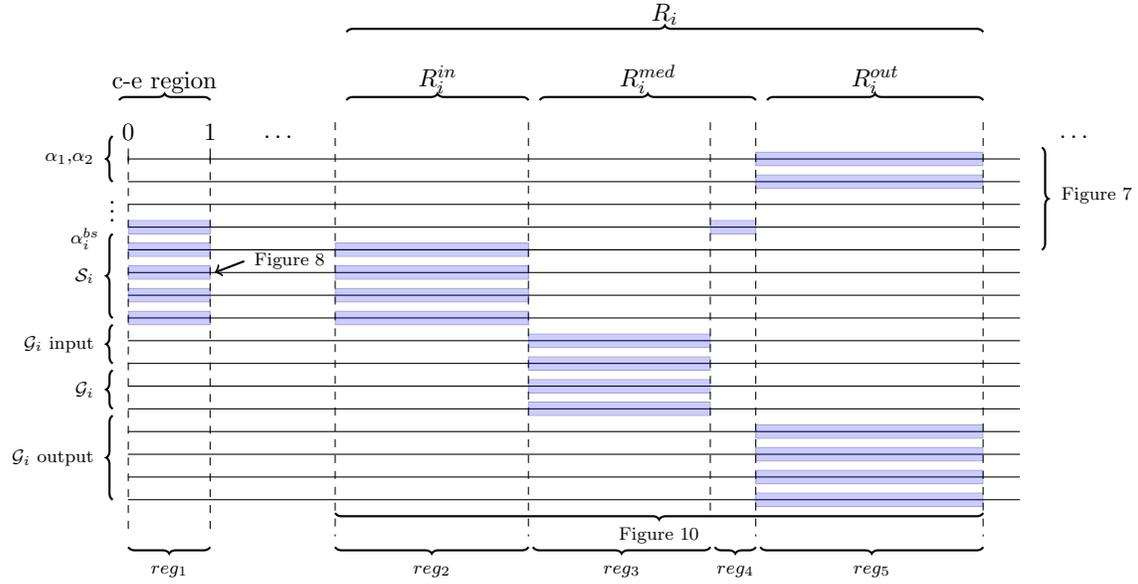
\begin{figure}
\centering

\begin{tikzpicture}[scale=0.86, transform shape]
	\node (a_1) at (0pt,0pt) {}; 
	\node (a_2) at (400pt, 0pt) {};
	\draw (a_1)--(a_2);
	\node (a_1) at (0pt,-10pt) {}; 
	\node (a_2) at (400pt, -10pt) {};
	\draw (a_1)--(a_2);
	\node (a_1) at (0pt,-30pt) {}; 
	\node (a_2) at (400pt, -30pt) {};
	\draw (a_1)--(a_2);
	\node (a_1) at (0pt,-40pt) {}; 
	\node (a_2) at (400pt, -40pt) {};
	\draw (a_1)--(a_2);
	\node (a_1) at (0pt,-50pt) {}; 
	\node (a_2) at (400pt, -50pt) {};
	\draw (a_1)--(a_2);
	\node (a_1) at (0pt,-60pt) {}; 
	\node (a_2) at (400pt, -60pt) {};
	\draw (a_1)--(a_2);
	\node (a_1) at (0pt,-70pt) {}; 
	\node (a_2) at (400pt, -70pt) {};
	\draw (a_1)--(a_2);
	\node (a_1) at (0pt,-80pt) {}; 
	\node (a_2) at (400pt, -80pt) {};
	\draw (a_1)--(a_2);
	\node (a_1) at (0pt,-90pt) {}; 
	\node (a_2) at (400pt, -90pt) {};
	\draw (a_1)--(a_2);
	\node (a_1) at (0pt,-100pt) {}; 
	\node (a_2) at (400pt, -100pt) {};
	\draw (a_1)--(a_2);
	\node (a_1) at (0pt,-110pt) {}; 
	\node (a_2) at (400pt, -110pt) {};
	\draw (a_1)--(a_2);
	\node (a_1) at (0pt,-120pt) {}; 
	\node (a_2) at (400pt, -120pt) {};
	\draw (a_1)--(a_2);
	\node (a_1) at (0pt,-130pt) {}; 
	\node (a_2) at (400pt, -130pt) {};
	\draw (a_1)--(a_2);
	\node (a_1) at (0pt,-140pt) {}; 
	\node (a_2) at (400pt, -140pt) {};
	\draw (a_1)--(a_2);
	\node (a_1) at (0pt,-150pt) {}; 
	\node (a_2) at (400pt, -150pt) {};
	\draw (a_1)--(a_2);

	\node (b_1) at (4pt, 12pt) {$0$};
	\node (b_2) at (4pt, -5pt) {};	
	\draw(b_1) -- (b_2);
	
	\node (b_1) at (40pt, 12pt) {$1$};
	\node (b_2) at (40pt, -5pt) {};	
	\draw(b_1) -- (b_2);
	
	\node (a_1) at (0pt,-20pt) {}; 
	\node (a_2) at (400pt, -20pt) {};
	\draw (a_1)--(a_2);

	\node (b_1) at (40pt, 5pt) {};
\node (b_2) at (40pt, -170pt) {};	
\draw[dashed] (b_1) -- (b_2);

	\node (b_1) at (4pt, 5pt) {};
\node (b_2) at (4pt, -170pt) {};	
\draw[dashed] (b_1) -- (b_2);

	\node (b_1) at (95pt, 20pt) {};
\node (b_2) at (95pt, -170pt) {};	
\draw[dashed] (b_1) -- (b_2);

	\node (b_1) at (280pt, 20pt) {};
\node (b_2) at (280pt, -170pt) {};	
\draw[dashed] (b_1) -- (b_2);

	\node (b_1) at (180pt, 20pt) {};
\node (b_2) at (180pt, -170pt) {};	
\draw[dashed] (b_1) -- (b_2);

	\node (b_1) at (260pt, 20pt) {};
\node (b_2) at (260pt, -160pt) {};	
\draw[dashed] (b_1) -- (b_2);
	
\node (b_1) at (380pt, 20pt) {};
\node (b_2) at (380pt, -170pt) {};	
\draw[dashed] (b_1) -- (b_2);	
	
		\node (a_1) at (0pt,-40pt) {}; 
	\node (a_2) at (400pt, -40pt) {};
	\draw (a_1)--(a_2);

	\draw [
	thick,
	decoration={
		brace,
		raise=5pt
	},
	decorate
	] (0pt,20pt) -- (40pt,20pt)
	node [pos=0.5,anchor=south,yshift=5pt] {c-e region};


\node (a) at (70pt,10pt) {$\ldots$};
\node (b) at (420pt,10pt) {$\ldots$};
	\draw [
thick,
decoration={
	brace,
	raise=5pt
},
decorate
] (100pt,20pt) -- (180pt,20pt)
node [pos=0.5,anchor=south,yshift=5pt] {$R_i^{in}$};

	\draw [
thick,
decoration={
	brace,
	raise=5pt
},
decorate
] (185pt,20pt) -- (280pt,20pt)
node [pos=0.5,anchor=south,yshift=5pt] {$R_i^{med}$};

	\draw [
thick,
decoration={
	brace,
	raise=5pt
},
decorate
] (285pt,20pt) -- (380pt,20pt)
node [pos=0.5,anchor=south,yshift=5pt] {$R_i^{out}$};

	\draw [
thick,
decoration={
	brace,
	raise=5pt
},
decorate
] (100pt,50pt) -- (380pt,50pt)
node [pos=0.5,anchor=south,yshift=5pt] {$R_i$};

\node (a) at (-3pt, -20pt){$\vdots$};

\draw [
thick,
decoration={
	brace,
	mirror,
	raise=5pt
},
decorate
] (3pt,-33pt) -- (3pt,-70pt)
node [pos=0.5,anchor=east,xshift=-10pt] {\scriptsize{$\mathcal{S}_i$}};

\draw [
thick,
decoration={
	brace,
	mirror,
	raise=5pt
},
decorate
] (3pt,10pt) -- (3pt,-10pt)
node [pos=0.5,anchor=east,xshift=-10pt] {\scriptsize{$\alpha_1$,$\alpha_2$}};

\node (a) at (-15pt,-35pt){\scriptsize{$\alpha_{i}^{bs}$}};

\draw [
thick,
decoration={
	brace,
	mirror,
	raise=5pt
},
decorate
] (3pt,-73pt) -- (3pt,-90pt)
node [pos=0.5,anchor=east,xshift=-10pt] {\scriptsize{$\mathcal{G}_i$ input}};

\draw [
thick,
decoration={
	brace,
	mirror,
	raise=5pt
},
decorate
] (3pt,-93pt) -- (3pt,-110pt)
node [pos=0.5,anchor=east,xshift=-10pt] {\scriptsize{$\mathcal{G}_i$}};

\draw [
thick,
decoration={
	brace,
	mirror,
	raise=5pt
},
decorate
] (3pt,-113pt) -- (3pt,-150pt)
node [pos=0.5,anchor=east,xshift=-10pt] {\scriptsize{$\mathcal{G}_i$ output}};

\draw[thick,->](55pt,-45pt)--(42pt,-50pt);
\node at (75pt,-45pt) {\scriptsize{Figure \ref{fig:sensor_agents}}};


\draw [
thick,
decoration={
	brace,
	mirror,
	raise=5pt
},
decorate
] (95pt,-150pt) -- (380pt,-150pt)
node [pos=0.5,anchor=north,yshift=-8pt] {\scriptsize{Figure \ref{fig:gates}}};

\draw [
thick,
decoration={
	brace,
	raise=5pt
},
decorate
] (400pt,5pt) -- (400pt,-40pt)
node [pos=0.5,anchor=south,xshift=30pt, yshift=-6pt] {\scriptsize{Figure \ref{fig:ce-agents}}};

	\draw [
thick,
decoration={
	brace,
	mirror,
	raise=5pt
},
decorate
] (4pt,-165pt) -- (40pt,-165pt)
node [pos=0.5,anchor=north,yshift=-10pt] {\scriptsize{$reg_1$}};

	\draw [
thick,
decoration={
	brace,
	mirror,
	raise=5pt
},
decorate
] (95pt,-165pt) -- (180pt,-165pt)
node [pos=0.5,anchor=north,yshift=-10pt] {\scriptsize{$reg_2$}};

	\draw [
thick,
decoration={
	brace,
	mirror,
	raise=5pt
},
decorate
] (182pt,-165pt) -- (260pt,-165pt)
node [pos=0.5,anchor=north,yshift=-10pt] {\scriptsize{$reg_3$}};

	\draw [
thick,
decoration={
	brace,
	mirror,
	raise=5pt
},
decorate
] (262pt,-165pt) -- (280pt,-165pt)
node [pos=0.5,anchor=north,yshift=-10pt] {\scriptsize{$reg_4$}};

	\draw [
thick,
decoration={
	brace,
	mirror,
	raise=5pt
},
decorate
] (282pt,-165pt) -- (380pt,-165pt)
node [pos=0.5,anchor=north,yshift=-10pt] {\scriptsize{$reg_5$}};



\draw[fill=blue,opacity=0.2] (280pt, 3pt) rectangle (380pt,-3pt);
\draw[fill=blue,opacity=0.2] (280pt, -7pt) rectangle (380pt,-13pt);

\draw[fill=blue,opacity=0.2] (4pt, -33pt) rectangle (40pt,-27pt);
\draw[fill=blue,opacity=0.2] (260pt, -33pt) rectangle (280pt,-27pt);

\draw[fill=blue,opacity=0.2] (4pt, -43pt) rectangle (40pt,-37pt);
\draw[fill=blue,opacity=0.2] (4pt, -53pt) rectangle (40pt,-47pt);
\draw[fill=blue,opacity=0.2] (4pt, -63pt) rectangle (40pt,-57pt);
\draw[fill=blue,opacity=0.2] (4pt, -73pt) rectangle (40pt,-67pt);

\draw[fill=blue,opacity=0.2] (95pt, -43pt) rectangle (180pt,-37pt);
\draw[fill=blue,opacity=0.2] (95pt, -53pt) rectangle (180pt,-47pt);
\draw[fill=blue,opacity=0.2] (95pt, -63pt) rectangle (180pt,-57pt);
\draw[fill=blue,opacity=0.2] (95pt, -73pt) rectangle (180pt,-67pt);

\draw[fill=blue,opacity=0.2] (180pt, -83pt) rectangle (260pt,-77pt);
\draw[fill=blue,opacity=0.2] (180pt, -93pt) rectangle (260pt,-87pt);
\draw[fill=blue,opacity=0.2] (180pt, -103pt) rectangle (260pt,-97pt);
\draw[fill=blue,opacity=0.2] (180pt, -113pt) rectangle (260pt,-107pt);

\draw[fill=blue,opacity=0.2] (280pt, -123pt) rectangle (380pt,-117pt);
\draw[fill=blue,opacity=0.2] (280pt, -133pt) rectangle (380pt,-127pt);
\draw[fill=blue,opacity=0.2] (280pt, -143pt) rectangle (380pt,-137pt);
\draw[fill=blue,opacity=0.2] (280pt, -153pt) rectangle (380pt,-147pt);

	\end{tikzpicture}
	\caption{\small{An overview of the construction of the agents that encode a single copy $C_i$ of the circuit-encoder. The highlighted intervals indicate that the corresponding agents have non-zero value in those regions. The precise valuations are not shown here, but the corresponding figures where they are presented in more detail are noted. In region \emph{$reg_1$}, the agents that have positive valuations are the coordinate-encoding agents, the blanket sensor agent and the bit-detectors of $C_i$.  Region \emph{$reg_2$}} contains the machinery of the bit-extracting sensor agents that extracts the position of the cuts (see Figure \ref{fig:sensor_agents}) as well as the valuation of the input gate-agents, corresponding to the input gates of $C_i$ (see Figure \ref{fig:gates}). Region \emph{$reg_3$} contains the values of all gate-agents of $C$, including the outputs of the input gate-agents and the inputs of the output gate agents. Region \emph{$reg_4$} contains the bit-detection gadget of the blanket sensor agent, which provides feedback to the coordinate encoding agents (see Figure {\ref{fig:ce-agents}). Finally, Region \emph{$reg_5$} contains the value of the output gate-agents, i.e. the encodings of the outputs to the circuit which are fed back to the coordinate encoding agents who also have value in this interval.}}
	\label{fig:overview}
	\end{figure}
We associate one cut with each agent; recall that for agent $\alpha$, $c(\alpha)$ is the cut associated with the agent. In a solution to $I_{CH}$, 
for any agent $\alpha_i \in \mathcal{A}_i$, these cuts will lie in a specific region, where most of the value of agent $\alpha_i$ 
will be concentrated. We will use $R(\alpha_i)$ to denote this region.
The cuts $c(a_1), c(a_2)$ for the coordinate-encoding agents, are called the \emph{coordinate-encoding cuts} 
and the associated region for them is the c-e region, i.e. $R(\alpha_1)=R(\alpha_2)=[0,1]$.
We will see that in any solution, either both or one of the coordinate-encoding
cuts must lie in the coordinate-encoding region and the other cuts must lie in region $R$.
In the event that a coordinate-encoding cut lies outside the c-e region, we refer to it as a \emph{stray cut}, 
and while such a cut may initially appear to interfere with the functioning
of the circuitry, we will see that the duplication of the circuit using \y\
circuit-encoders, allows it to be robust to this problem. 
(For the appropriate definitions and the details, see Section \ref{sec:pf}).

\subsection{Useful gadgetry}\label{sec:gadgets}

\noindent \textbf{Parity gadgets:} We will use term ``element'' to refer to a sub-interval $R(\alpha_i)$ where most of the valuation of agent $\alpha_1$ lies. For most elements of our construction, we will assume that for a cut that intersects the block (normally either a rectangle valuation block or two thin blocks of larger height or both), the label to the left of the cut is $\lplus$ and the label to the right of the cut is $\lminus$. Since the labels are generally alternating (as otherwise cuts can be merged), to achieve this, we will need to be able to switch the ``parity'' of the label sequence. This will be achieved with the following very simple \emph{parity gadget}. 

We construct an agent $\alpha_{par}$ that has a single valuation block (i.e. an interval where the agent has a constant, non-zero value) of sufficiently small height and width, in a region between two such distinct valuation blocks of some other agent or agents (where we need the parity switch to take place), and furthermore, no other agent has any value in that interval. Since we are only allowed to use $n$ cuts, in a solution to $I_{CH}$, only one cut is allowed to lie in the region $c(\alpha_{par})$ and therefore intersect this valuation block; obviously the cut has to lie close to the midpoint of the valuation block interval and it will switch the parity of the cut sequence. Throughout the reduction, we will not explain how to explicitly place the parity gadgets in the instance of $I_{CH}$ but rather we will assume without loss of generality that the left-hand sides of the cuts are labelled $\lplus$, unless stated otherwise.\\

\noindent \textbf{Bit detection gadgets:} Throughout the reduction, we will make use of specific blocks of valuations that we will refer to as \emph{bit detection gadgets}. The bit detection gadgets will be two \emph{thin} and \emph{dense} valuation blocks of relatively large height and relatively small length, situated next to each other, with no other valuation block in between them (e.g. see Figure \ref{fig:ce-agents}, the valuations of the top two agents in $R_1^{out}$ or Figure \ref{fig:boolean-gate-gadgets}, the values in the $out$ intervals). The precise volume of each valuation block will depend on the corresponding agents, but they will always constitute most of the agent's valuation over the related interval. The point of these gadgets is that if the discrepancy between \lplus\ and \lminus\ is (significantly) in the favour of one against the other, there will be a cut intersecting one of the two valuation blocks; which block is intersected will correspond to a $0/1$ value, i.e. a bit that indicates the ``direction'' of the discrepancy in the two labels.\\

\noindent \textbf{Boolean gate gadgets:} Consider a boolean gate that is either an AND an OR or a NOT gate, denoted $g_\land$, $g_\lor$ and $g_\neg$ respectively. Let $in_1$, $in_2$ and $out$ be intervals such that $|in_1| = |in_2| = |out| = 1$. We will encode these gates using the following gate-gadgets.\\

	\begin{center}
		$ g_{\neg}(in_1,out)=$ 
		$\begin{aligned}
		\begin{cases}
		0.25								& \text{if } t \in in_1 \\
		7.5			& \text{if } t \in [\ell(out),\ell(out)+1/20] \\
		7.5			& \text{if } t \in [r(out)-1/20,r(out)] \\
		0								& \text{otherwise}   
		\end{cases}
		\end{aligned}$
		\bigskip
		
			$g_{\lor}(in_1,in_2,out)=$
		$\begin{aligned}
		\begin{cases}
		0.125								& \text{if } t \in {in_1} \cup {in_2} \\
		6.25		& \text{if } t \in [\ell(out),\ell(out)+1/20] \\
		8.75		& \text{if } t \in [r(out)-1/20,r(out)] \\
		0								& \text{otherwise} 
		\end{cases}
		\end{aligned}$
		\bigskip
		
	$g_{\land}(in_1,in_2,out)=$
	$\begin{aligned}
	\begin{cases}
	0.125							& \text{if } t \in {in_1} \cup {in_2} \\
	8.75		& \text{if } t \in [\ell(out),\ell(out)+1/20] \\
	6.25		& \text{if } t \in [r(out)-1/20,r(out)] \\
	0								& \text{otherwise} 
	\end{cases}
	\end{aligned}$
	\end{center}
	\bigskip

\noindent Note that the gadget corresponding to the NOT gate only has one input, whereas the gadgets for the AND and OR gates have two inputs. In the interval $out$, each gadget has two bit detection gadgets - in the case of the NOT gate these are even, but in the case of the AND and OR gates, they are uneven. (see Figure \ref{fig:boolean-gate-gadgets}). Also note that for the inputs, as well as the output of the NOT gate, the label on the left-hand side of the cut is $\lplus$ and the label on the right-hand side will be $\lminus$, whereas for the outputs to the OR and AND gate, the label on the left-hand side of the cut is $\lminus$ and the label on the right hand side is $\lplus$ . This can be achieved with the appropriate use of parity gadgets (see Figure \ref{fig:boolean-gate-gadgets}).

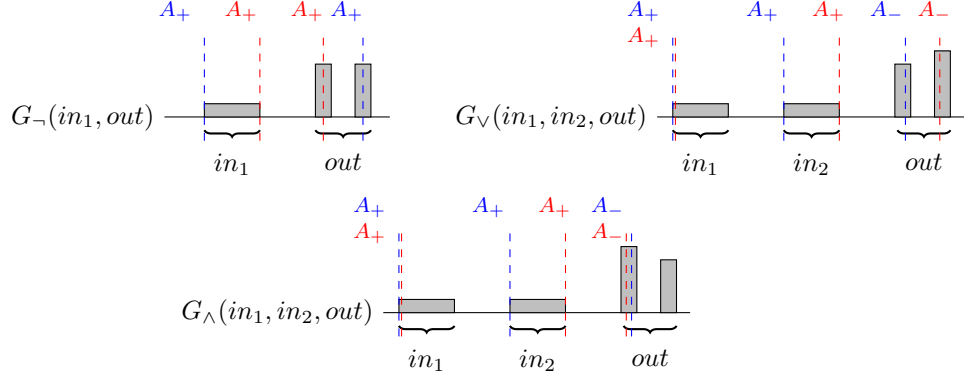
\begin{figure}
	\centering
	\begin{minipage}{0.35\textwidth}
\begin{tikzpicture}[scale=1,transform shape]
	\node (a_1) at (-10pt,0pt) {\small{$G_{\neg}(in_1,out)$}}; 
	\node (a_2) at (110pt, 0pt) {};
	\draw (a_1)--(a_2);
	\draw[fill=lightgray] (36pt,0pt) rectangle (57pt,5pt);
	\draw[fill=lightgray] (78pt,0pt) rectangle (84pt,20pt);
	\draw[fill=lightgray] (93pt,0pt) rectangle (99pt,20pt);
	
	\draw [
    thick,
    decoration={
        brace,
        mirror,
        raise=5pt
    },
    decorate
] (36pt,0pt) -- (57pt,0pt)
node [pos=0.5,anchor=north,yshift=-10pt] {\small{${in_1}$}}; 

\draw [
    thick,
    decoration={
        brace,
        mirror,
        raise=5pt
    },
    decorate
] (78pt,0pt) -- (99pt,0pt)
node [pos=0.5,anchor=north,yshift=-10pt] {\small{$out$}};

\draw[dashed,color=blue] (36pt,30pt) -- (36pt, -10pt);
\draw[dashed,color=blue] (96pt,30pt) -- (96pt, -10pt);

\draw[dashed,color=red] (57pt,30pt) -- (57pt, -10pt);
\draw[dashed,color=red] (81pt,30pt) -- (81pt, -10pt);

\node[color=blue] at (25pt,40pt) {\scriptsize{$\lplus$}};
\node[color=red] at (50pt,40pt) {\scriptsize{$\lplus$}};

\node[color=blue] at (90pt,40pt) {\scriptsize{$\lplus$}};
\node[color=red] at (75pt,40pt) {\scriptsize{$\lplus$}};

	\end{tikzpicture}
\end{minipage}
	\begin{minipage}{0.35\textwidth}
	\begin{tikzpicture}[scale=1,transform shape]
	\node (a_1) at (-10pt,0pt) {\small{$G_{\lor}(in_1,in_2,out)$}}; 
	\node (a_2) at (150pt, 0pt) {};
	\draw (a_1)--(a_2);
	\draw[fill=lightgray] (36pt,0pt) rectangle (57pt,5pt);
	\draw[fill=lightgray] (78pt,0pt) rectangle (99pt,5pt);
	\draw[fill=lightgray] (120pt,0pt) rectangle (126pt,20pt);
	\draw[fill=lightgray] (135pt,0pt) rectangle (141pt,25pt);
	
	\draw [
	thick,
	decoration={
		brace,
		mirror,
		raise=5pt
	},
	decorate
	] (36pt,0pt) -- (57pt,0pt)
	node [pos=0.5,anchor=north,yshift=-10pt] {\small{${in_1}$}}; 
	
	\draw [
	thick,
	decoration={
		brace,
		mirror,
		raise=5pt
	},
	decorate
	] (78pt,0pt) -- (99pt,0pt)
	node [pos=0.5,anchor=north,yshift=-10pt] {\small{$in_2$}};

	\draw [
	thick,
	decoration={
		brace,
		mirror,
		raise=5pt
	},
	decorate
	] (121pt,0pt) -- (141pt,0pt)
	node [pos=0.5,anchor=north,yshift=-10pt] {\small{$out$}};

	\draw[dashed,color=blue] (36pt,30pt) -- (36pt, -10pt);
	\draw[dashed,color=red] (37pt,30pt) -- (37pt, -10pt);
	\draw[dashed,color=red] (99pt,30pt) -- (99pt, -10pt);
	\draw[dashed,color=blue] (78pt,30pt) -- (78pt, -10pt);

	\draw[dashed,color=blue] (124pt,30pt) -- (124pt, -10pt);
	\draw[dashed,color=red] (137pt,30pt) -- (137pt, -10pt);
	
	\node[color=blue] at (25pt,40pt) {\scriptsize{$\lplus$}};
	\node[color=red] at (25pt,30pt) {\scriptsize{$\lplus$}};
	
	\node[color=red] at (95pt,40pt) {\scriptsize{$\lplus$}};
	\node[color=blue] at (70pt,40pt) {\scriptsize{$\lplus$}};
	
	\node[color=red] at (135pt,40pt) {\scriptsize{$\lminus$}};
	\node[color=blue] at (117pt,40pt) {\scriptsize{$\lminus$}};
	
	\end{tikzpicture}
	\end{minipage}
%
	\begin{tikzpicture}[scale=1,transform shape]
	\node at (10pt,40pt) {};
	\node (a_1) at (-10pt,0pt) {\small{$G_{\land}(in_1,in_2,out)$}}; 
	\node (a_2) at (150pt, 0pt) {};
	\draw (a_1)--(a_2);
	\draw[fill=lightgray] (36pt,0pt) rectangle (57pt,5pt);
	\draw[fill=lightgray] (78pt,0pt) rectangle (99pt,5pt);
	\draw[fill=lightgray] (120pt,0pt) rectangle (126pt,25pt);
	\draw[fill=lightgray] (135pt,0pt) rectangle (141pt,20pt);
	
	\draw [
	thick,
	decoration={
		brace,
		mirror,
		raise=5pt
	},
	decorate
	] (36pt,0pt) -- (57pt,0pt)
	node [pos=0.5,anchor=north,yshift=-10pt] {\small{${in_1}$}}; 
	
	\draw [
	thick,
	decoration={
		brace,
		mirror,
		raise=5pt
	},
	decorate
	] (78pt,0pt) -- (99pt,0pt)
	node [pos=0.5,anchor=north,yshift=-10pt] {\small{$in_2$}};
	
	\draw [
	thick,
	decoration={
		brace,
		mirror,
		raise=5pt
	},
	decorate
	] (121pt,0pt) -- (141pt,0pt)
	node [pos=0.5,anchor=north,yshift=-10pt] {\small{$out$}};
	
		\draw[dashed,color=blue] (36pt,30pt) -- (36pt, -10pt);
	\draw[dashed,color=red] (37pt,30pt) -- (37pt, -10pt);
	\draw[dashed,color=red] (99pt,30pt) -- (99pt, -10pt);
	\draw[dashed,color=blue] (78pt,30pt) -- (78pt, -10pt);

	\draw[dashed,color=blue] (124pt,30pt) -- (124pt, -10pt);
	\draw[dashed,color=red] (122pt,30pt) -- (122pt, -10pt);
	
	\node[color=blue] at (25pt,40pt) {\scriptsize{$\lplus$}};
	\node[color=red] at (25pt,30pt) {\scriptsize{$\lplus$}};
	
	\node[color=red] at (95pt,40pt) {\scriptsize{$\lplus$}};
	\node[color=blue] at (70pt,40pt) {\scriptsize{$\lplus$}};
	
	\node[color=red] at (115pt,30pt) {\scriptsize{$\lminus$}};
	\node[color=blue] at (115pt,40pt) {\scriptsize{$\lminus$}};
\end{tikzpicture}
	\caption{\small{The Boolean Gate Gadgets encoding the NOT, OR and AND gates. For visibility, the valuation blocks are not according to scale. For the NOT gate, the input has value $0.25$ and the output blocks have volume $0.375$ each. For the OR (respectively AND) gate, the input blocks have value $0.125$ each and the output blocks have value $0.3125$ and $0.4375$ (respectively $0.4375$ and $0.3125$). The cuts corresponding to pairs or triples of inputs and outputs have the same colour, and the labels on the left-side of these cuts are shown and colour-coded in the same way. For the NOT gate, when the input cut sits of the left (the blue cut), then the output cut must sit on the right (the blue cut), to compensate for the excess if $\lminus$ and oppositely for when the input cut sits of the right (the red cut). For the OR and AND gates, again the cuts corresponding to two inputs and one output have the same colour. For the OR gate, when both inputs cut sit on the left (the blue cuts), the output cut sits on the left as well, to compensate for the excess of $\lminus$ (notice that the left-hand side of the output cut is labelled $\lminus$. When one input sits on the left and the other one on the right, the inputs detect no discrepancy in the balance of labels and the output jumps to the right, because the output blocks are uneven (the red cuts). The operation of the AND gate is very similar; here the cases shown are those where the inputs are $0$ and $0$ and the ouput is $0$ (the blue cuts) and where the inputs are $0$ and $1$ and the output is still $0$ (the red cuts).}}
		\label{fig:boolean-gate-gadgets}
	\end{figure}

\begin{observation}\label{obs:gate_gadgets}
	The boolean gate gadgets described above encode valid boolean NOT, OR and AND operations.
\end{observation}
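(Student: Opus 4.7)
The plan is a direct verification: for each gate-agent, whose total valuation integrates to $1$, a balanced consensus-halving requires the mass contributed to \lplus\ to equal $0.5$, and we check that the unique placement of the output cut that achieves this is in the thin bit-detection block encoding the correct boolean output, for every configuration of the input cut(s). Adopting the convention that a cut in the left (resp.\ right) thin block of an $in_k$ or $out$ interval encodes the bit $0$ (resp.\ $1$), the parity gadgets from Section~\ref{sec:gadgets} can be taken to enforce the label alternation stated in the paper: left-of-cut equals \lplus\ on every input interval and on the NOT output, while left-of-cut equals \lminus\ on the OR and AND outputs. Recall also that each input cut is the output cut of a preceding gate-agent, so it is pinned inside one of the two thin blocks of its interval (with slack of order at most $\tfrac{1}{20}$ of the unit length).

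For the NOT gate, integrating the input density $0.25$ shows that the \lplus\ contribution of $in_1$ lies in $[0,0.0125]$ if the input cut is in the left thin block and in $[0.2375,0.25]$ if it is in the right. Similarly, the \lplus\ contribution of $out$ lies in $[0,0.375]$ when the output cut is in the left thin block and in $[0.375,0.75]$ when it is in the right. The balance equation $\text{\lplus from }in_1+\text{\lplus from }out=0.5$ then forces the output cut into the right block precisely when the input cut is on the left, and vice versa, implementing $\neg$. For the OR gadget, each of $in_1,in_2$ contributes either $\approx 0$ or $\approx 0.125$ to \lplus, yielding three possible input totals: $0$, $0.125$, $0.25$. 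Because the output labeling is reversed, the \lplus\ contribution of $out$ lies in $[0.4375,0.75]$ when the output cut is on the left and in $[0,0.4375]$ when it is on the right; the asymmetric output block masses $0.3125$ and $0.4375$ place the threshold exactly so that only the $(0,0)$ input forces the cut into the left block (encoding $0$) while the other three rows force it into the right block (encoding $1$). The AND gadget is analysed identically, with the output block masses swapped ($0.4375$ on the left, $0.3125$ on the right); this shifts the threshold so that only the $(1,1)$ input places the cut in the right block, realising $\land$.

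There is no substantive obstacle beyond careful bookkeeping: the gadgets are designed so that the three possible input \lplus\ sums (corresponding to the number of ``$1$'' inputs) straddle the threshold induced by the asymmetric output masses in exactly the direction required by each truth table. Moreover, in every case the computed output cut position lies strictly inside its intended thin block with slack of order $10^{-2}$, which is exponentially larger than the approximation tolerance $\epsilon=2^{-2n}$ demanded by $I_{CH}$, so the encoding is robust to the permitted $\epsilon$-error and to the small positional freedom of the input cut within its own thin block.
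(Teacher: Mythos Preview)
Your proposal is correct and follows essentially the same approach as the paper's proof: a direct case-by-case verification that the balance condition $\mu(\lplus)=\tfrac12$ forces the output cut into the thin block encoding the correct truth-table value. Your treatment is in fact somewhat more careful than the paper's, which argues qualitatively via ``compensation'' without writing out the explicit interval arithmetic or the robustness to the $1/20$ input slack that you include.
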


\begin{proof}
These gadgets encode the boolean gate operations in the following way: We will interpret the position of a cut $c$ relatively to $\ell(in_1)$, $\ell(in_2)$ and $\ell(out)$ as the input or the output to the gates respectively. Specifically, for $j \in \{1,2\}$, if $c \in [\ell(in_1),\ell(in_1)+\epsilon]$, the input will be $0$ and if $c \in [r(in_1)-\epsilon,r(in_1)]$, the input will be $1$. Similarly, if $c \in [\ell(out),\ell(out)+1/20]$, the output will be $0$ and if $c \in [r(out)-1/20,r(out)]$, the output will be $1$. If the inputs or the outputs lie on any other point in the corresponding intervals, the gate inputs and outputs are undefined, but it will be enforced by our construction that in a solution to $I_{CH}$, this will never happen.

For $g_{\neg}$, let's assume that the cut in $in_1$ lies in $[\ell(in_1),\ell(in_1)+\epsilon]$, which means that a total value of approximately $0.25$ is assigned to $\lminus$ in the interval $in_1$ (recall that all cuts have an $\lplus$ label on their left-hand side). To compensate, since the agent only has further value in $out$, the cut would have to lie in $[r(out)-1/20,r(out)]$ and therefore by the interpretation of the inputs above, we can see that when the input is $0$, the output is $1$ and the gate constraint is satisfied. If the cut in $in_1$ lies in $[r(in_1)-\epsilon,r(in_1)]$ then the value in the interval $in_1$ has been labelled $\lplus$ and for the same reason, the cut in $out$ has to lie in $[\ell(out),\ell(out)+1/20]$ which encodes the case when the input is $1$ and the output is $0$. The arguments for the $g_{\lor}$ and $g_{\land}$ gadgets encoding the OR and AND gates respectively are very similar (noting that the cut intersecting $out$ will have $\lminus$ on its left-hand side). See Figure \ref{fig:boolean-gate-gadgets} of an illustration.
\end{proof}

\subsection{The coordinate-encoding agents}\label{sec:cer}

The coordinate-encoding region $[0,1]$ is the region from which the value of the solution to $\epsilon$-\ch\ will be read and will be translated to coordinates of a grid point on $I_{VT}$. Associated with this region, there are $2$ coordinate-encoding agents $\alpha_1$ and $\alpha_2$. However, these agents will have $0$ value in the subinterval $[0,1]$ and all of their value will lie in the circuit-encoding region $R_i$ and specifically in $\bigcup_{i=1}^{\y} R_i^{out}$. 

The value of the c-e agents in $R_i^{out}$ will corresponds to the feedback mechanism from the blanket sensor agent of $R_i$ (see Subsection \ref{sec:sensors}) and the feedback mechanisms from the gate-agents corresponding to the output gates of the circuit (see Subsection and \ref{sec:feedback}). Concretely the valuation of the coordinate-encoding agents is defined as follows:

\[ \mu_{\alpha_1}(t)= \begin{cases*}
\ \  30/800  & if $t \in \bigcup_{i=1}^{\y}[\ell(R_{i}^{out})+1/10,\ell(R_i^{out})+2/10]$,\\
\ \ 30/800  & if $t \in \bigcup_{i=1}^{\y}[\ell(R_{i}^{out})+1-2/10,\ell(R_i^{out})+1-1/10]$,\\
\ \ 1/400 & if $t \in \bigcup_{i=1}^{\y}[\ell(R_{i}^{out})+1.25,\ell(R_i^{out})+1.75]$,\\
\ \ 1/400 & if $t \in \bigcup_{i=1}^{\y}[\ell(R_{i}^{out})+3.25,\ell(R_i^{out})+4.75]$.
\end{cases*} \]%

\[ \mu_{\alpha_2}(t)= \begin{cases*}
\ \  30/800 & if $t \in \bigcup_{i=1}^{\y}[\ell(R_{i}^{out})+1/10,\ell(R_i^{out})+2/10]$,\\
\ \ 30/800  & if $t \in \bigcup_{i=1}^{\y}[\ell(R_{i}^{out})+1-2/10,\ell(R_i^{out})+1-1/10]$,\\
\ \ 1/400 & if $t \in \bigcup_{i=1}^{\y}[\ell(R_{i}^{out})+5.25,\ell(R_i^{out})+5.75]$,\\
\ \ 1/400 & if $t \in \bigcup_{i=1}^{\y}[\ell(R_{i}^{out})+7.25,\ell(R_i^{out})+7.75]$.
\end{cases*} \]%
Note that for each $i=1,\ldots,100$, the value of the agent in $R_{i}$ adds up to $1/100$ and therefore the agent's total valuation in $R$ is $1$. Intuitively, each coordinate encoding region has values that consist of the following components in each region $R_i$:
\begin{itemize}
	\item A bit detection gadget positioned in the interior of the interval where the bit detection gadget of the corresponding blanket sensor agent is situated (see also Subsection \ref{sec:sensors}).
	\item Two blocks of valuation situated in the interior of the intervals where the bit detection gadgets of the output gate agents are situated in $R_i^{out}$ (see also Subsection \ref{sec:gate}). There are four output gate agents in each $R_i$; $\alpha_i$ has value in the corresponding intervals for two of those and $\alpha_2$ has value in the corresponding intervals for the other two.
\end{itemize}
For an illustration, see Figure \ref{fig:ce-agents}.

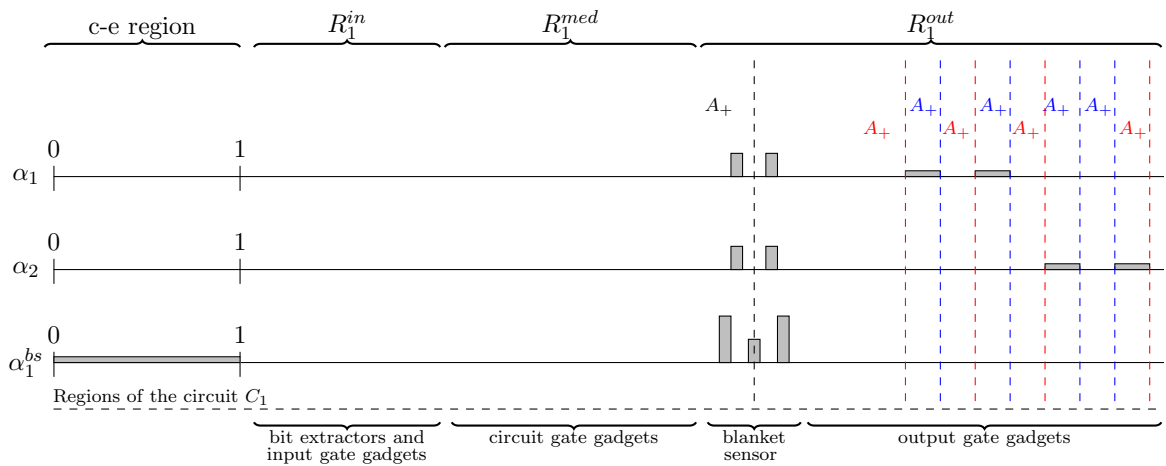
\begin{figure}

\begin{tikzpicture}[scale=0.88, transform shape]
	\node (a_1) at (0pt,0pt) {}; 
	\node (a_2) at (485pt, 0pt) {};
	\draw (a_1)--(a_2);
	
	\node (b_1) at (4pt, 12pt) {$0$};
	\node (b_2) at (4pt, -10pt) {};	
	\draw(b_1) -- (b_2);
	
	\node (b_1) at (84pt, 12pt) {$1$};
	\node (b_2) at (84pt, -10pt) {};	
	\draw(b_1) -- (b_2);
	
	\node (a_1) at (0pt,-40pt) {}; 
	\node (a_2) at (485pt, -40pt) {};
	\draw (a_1)--(a_2);
	
	\node (b_1) at (4pt, -28pt) {$0$};
	\node (b_2) at (4pt, -50pt) {};	
	\draw(b_1) -- (b_2);
	
	\node (b_1) at (84pt, -28pt) {$1$};
	\node (b_2) at (84pt, -50pt) {};	
	\draw(b_1) -- (b_2);
	
		\node (a_1) at (0pt,-100pt) {}; 
	\node (a_2) at (485pt, -100pt) {};
	\draw[dashed] (a_1)--(a_2);
	
	\node (a) at (50pt,-95pt) {\scriptsize{Regions of the circuit $C_1$}};

	\draw [
	thick,
	decoration={
		brace,
		raise=5pt
	},
	decorate
	] (0pt,50pt) -- (84pt,50pt)
	node [pos=0.5,anchor=south,yshift=5pt] {c-e region};

\draw[fill=lightgray] (295pt,0pt) rectangle (300pt,10pt);
\draw[fill=lightgray] (310pt,0pt) rectangle (315pt,10pt);	

\draw[fill=lightgray] (370pt,0pt) rectangle (385pt,2.5pt);
\draw[fill=lightgray] (400pt,0pt) rectangle (415pt,2.5pt);	

\draw[fill=lightgray] (295pt,-40pt) rectangle (300pt,-30pt);
\draw[fill=lightgray] (310pt,-40pt) rectangle (315pt,-30pt);	

\draw[fill=lightgray] (430pt,-40pt) rectangle (445pt,-37.5pt);
\draw[fill=lightgray] (460pt,-40pt) rectangle (475pt,-37.5pt);	

	\draw [
thick,
decoration={
	brace,
	raise=5pt
},
decorate
] (90pt,50pt) -- (170pt,50pt)
node [pos=0.5,anchor=south,yshift=5pt] {$R_1^{in}$};

	\draw [
thick,
decoration={
	brace,
	raise=5pt
},
decorate
] (172pt,50pt) -- (280pt,50pt)
node [pos=0.5,anchor=south,yshift=5pt] {$R_1^{med}$};

	\draw [
thick,
decoration={
	brace,
	raise=5pt
},
decorate
] (282pt,50pt) -- (480pt,50pt)
node [pos=0.5,anchor=south,yshift=5pt] {$R_1^{out}$};

	\node (a_1) at (0pt,-80pt) {}; 
\node (a_2) at (485pt, -80pt) {};
\draw (a_1)--(a_2);

\node (b_1) at (4pt, -68pt) {$0$};
\node (b_2) at (4pt, -90pt) {};	
\draw(b_1) -- (b_2);

\node (b_1) at (84pt, -68pt) {$1$};
\node (b_2) at (84pt, -90pt) {};	
\draw(b_1) -- (b_2);

\draw[fill=lightgray] (4pt,-80pt) rectangle (84pt,-77.5pt);

\draw[fill=lightgray] (290pt,-80pt) rectangle (295pt,-60pt);
\draw[fill=lightgray] (315pt,-80pt) rectangle (320pt,-60pt);

\draw[fill=lightgray] (302.5pt,-80pt) rectangle (307.5pt,-70pt);

	\draw [
thick,
decoration={
	brace,
	mirror,
	raise=5pt
},
decorate
] (285pt,-100pt) -- (325pt,-100pt)
node [pos=0.5,anchor=north,yshift=-5pt] {\scriptsize{blanket}};

\node (a) at (303.5pt,-120pt)  {\scriptsize{sensor}};

	\draw [
thick,
decoration={
	brace,
	mirror,
	raise=5pt
},
decorate
] (90pt,-100pt) -- (170pt,-100pt)
node [pos=0.5,anchor=north,yshift=-5pt] {\scriptsize{bit extractors and}};
\node (a) at (130pt,-120pt)  {\scriptsize{input gate gadgets}};

\draw [
thick,
decoration={
	brace,
	mirror,
	raise=5pt
},
decorate
] (175pt,-100pt) -- (280pt,-100pt)
node [pos=0.5,anchor=north,yshift=-5pt] {\scriptsize{circuit gate gadgets}};

\draw [
thick,
decoration={
	brace,
	mirror,
	raise=5pt
},
decorate
] (328pt,-100pt) -- (480pt,-100pt)
node [pos=0.5,anchor=north,yshift=-5pt] {\scriptsize{output gate gadgets}};

\node at (-8pt,0pt) {$\alpha_1$};
\node at (-8pt,-40pt) {$\alpha_2$};
\node at (-8pt,-80pt) {$\alpha_1^{bs}$};

\node at (290pt, 30pt) {\scriptsize{$\lplus$}};

\node[color=blue] at (378pt, 30pt) {\scriptsize{$\lplus$}};
\node[color=red] at (358pt, 20pt) {\scriptsize{$\lplus$}};

\node[color=blue]  at (408pt, 30pt) {\scriptsize{$\lplus$}};
\node[color=red] at (392pt, 20pt) {\scriptsize{$\lplus$}};


\node[color=blue]  at (435pt, 30pt) {\scriptsize{$\lplus$}};
\node[color=red] at (422pt, 20pt) {\scriptsize{$\lplus$}};

\node[color=blue]  at (453pt, 30pt) {\scriptsize{$\lplus$}};
\node[color=red] at (468pt, 20pt) {\scriptsize{$\lplus$}};

\draw[dashed] (305pt,50pt) -- (305pt, -100pt);

\draw[dashed,blue] (385pt,50pt) -- (385pt, -100pt);
\draw[dashed,blue] (415pt,50pt) -- (415pt, -100pt);

\draw[dashed,blue] (445pt,50pt) -- (445pt, -100pt);
\draw[dashed,blue] (460pt,50pt) -- (460pt, -100pt);

\draw[dashed,red] (370pt,50pt) -- (370pt, -100pt);
\draw[dashed,red] (400pt,50pt) -- (400pt, -100pt);

\draw[dashed,red] (430pt,50pt) -- (430pt, -100pt);
\draw[dashed,red] (475pt,50pt) -- (475pt, -100pt);

	\end{tikzpicture}
	\caption{\small{A representation of the valuations of the coordinate encoding agents $\alpha_1$ and $\alpha_2$ and the blanket sensor agent $\alpha_1^{bs}$ of $C_1$. For visibility, the valuation blocks are not according to scale. Each coordinate encoding agent has a total value of $1/100$ in $R_1^{in} \cup R_1^{out}$, consisting of a bit-detection gadget in $R_1^{in}$ of total volume $3/400$ and two rectangular valuation blocks in $R_1^{out}$ of volume $1/800$ each. The blanket sensor agent has value $1/10$ uniformly distributed in the c-e region $[0,1]$ and value $0.9$ in $R_1^{out}$, consisting of a bit-detection gadget with a rectangular valuation block of value $2/10$ in-between the two thin blocks of value $4/10$ of the gadget. The figure also denotes the different regions where the valuations of the agents in $C_1$ lie. If the output of the labelling circuit $C$ is $1$, which corresponds to the string $1110$ - depicted by the set of blue cuts, agent $\alpha_1$ is shown an excess of $1/400$ in favour of $\lplus$ in $R_1$. If the output of $C$ is $-1$, which corresponds to the string $0001$ - depicted by the set of red cuts, agent $\alpha_1$ is shown an excess of $1/400$ in favour of $\lminus$. In both cases, agent $\alpha_2$ is shown a balanced partition of $\lplus$ and $\lminus$ in $R_1$. Note that the label on the left-hand side of each cut is $\lplus$ in both cases and that the blanket sensor agent $\alpha_1^{bs}$ is passive.}}
	\label{fig:ce-agents}
	\end{figure}

\subsection{The circuit encoders}\label{sec:cdr}

In this subsection, we explain how to design the circuit-encoders $C_1, \ldots, C_{\y}$. Recall that these are sets of agents of $I_{CH}$ that encode the labelling circuit $C$ of \vt, including the inputs and the outputs to the circuit, via the use of sets ${\cal A}_1,\ldots {\cal A}_{\y}$ of circuit-encoding agents.
In the set ${\cal A}_i$, there are two different types of circuit-encoding agents:
\begin{itemize}
	\item The sensor agents ${\cal S}_i$ that are responsible for extracting the binary representation of the positions of the cuts in the c-e region, which will be used as inputs to the remaining circuit-encoding agents. These agents have value in $[0,1] \cup R_i^{in}$. Among those agents, there is a designated agent $\alpha_i^{bs}$ that we refer to as the \emph{blanket sensor agent}. 
	\item The gate agents ${\cal G}_i$ implement a circuit $C_i$, consisting of sub-circuits: a pre-processing circuit $C_i^\mathrm{pre}$ and a main circuit $C_i^\textrm{main}$, which further consists of a copy $C_i^{VT}$ of the labelling circuit $C$ of \vt\ as well as a small ``XOR operator'' circuit $C_i^{det}$. The pre-processing circuit will be responsible for transforming the information extracted from the sensor agents into the encoding of a point on the domain, which is then fed to $C_i^{VT}$. In particular, for each gate of the circuit $C_i$, we will have one associated agent of $I_{CH}$. For each gate agent that corresponds to an input gate of $C_i$, the agent will have value in $R_i^{in} \cup R_i^{mid}$ and for each gate agent that corresponds to an output gate of $C$, the agent will have value in $R_i^{out} \cup R_i^{mid}$. All other gate agents will have value only in $R_i^{mid}$. (See Subsection \ref{sec:gate}).
\end{itemize}
In the next subsections, we design the values of those agents explicitly. We will first explain how to construct the circuit-encoder $C_1$ and then based on this, we will construct the remaining circuit-encoders $C_2, \ldots, C_{\y}$. 

\subsubsection{The sensor agents}\label{sec:sensors}

In this subsection, we will design the set of \emph{sensor agents}, which is perhaps the most vital part of the construction. Roughly speaking, these agents will be responsible for detecting the position of a cut in the c-e region and extracting its binary representation. The set ${\cal S}_1$ contains $8(n+8)+1$ sensor agents, which consist of
\begin{itemize}
	\item the \emph{blanket sensor} agent $\alpha_1^{bs}$, that is responsible for detecting large discrepancies in the lengths of $\lplus$ and $\lminus$ in the c-e region,
	\item $8(n+8)$ \emph{bit-extractors}: 8 sets of $n+8$ agents, each set responsible for extracting a bit string of length $n+8$, which indicate the positions of cuts with respect to $8$ different intervals that span the c-e region; we will refer to these inputs as the \emph{raw data}. The raw data will then be inputted by the pre-processing circuit-encoding $C_i^{\mathrm{pre}}$ and will be transformed into the $n+11$ most significant bits of the binary representation of the positions of the cuts.
\end{itemize}
For an illustration, see Figures \ref{fig:ce-agents}, \ref{fig:sensor_agents} and \ref{fig:sensor_agents2}. We design these sets of agents below.

\subsubsection*{The blanket sensor agent} The valuation of the blanket sensor agent $\alpha_1^{bs}$ is defined as:

\[ \mu_{\alpha_1^{bs}}(t)= \begin{cases*}
\ \ 0.1 & if  $t \in [0,1]$,    \\
\ \  8.5 & if $t \in [\ell(R_{1}^{out}),\ell(R_1^{out})+1/20]$,\\
\ \ 8.5 & if $t \in [\ell(R_{1}^{out})+19/20,\ell(R_1^{out})+1]$, \\
\ \ 0.05 & if $t \in [\ell(R_{1}^{out})+1/4,\ell(R_1^{out})+3/4]$.\\
\end{cases*} \]%
In other words, the blanket sensor agent has a valuation block of volume $0.1$ spanning over the whole coordinate encoding region and two dense valuation blocks of volume $0.425$ over the intervals $[\ell(R_{1}^{in}),\ell(R_1^{in})+1/20]$ and $[\ell(R_{1}^{in})+19/20,\ell(R_1^{in})+1]$, with a valuation block of volume $0.025$ between them, in $[\ell(R_{1}^{in})+0.25,\ell(R_1^{in})+0.75]$, (see Figure \ref{fig:ce-agents} or Figure \ref{fig:sensor_agents}). Note that this latter part of the valuation is quite similar to a bit detection gadget, except for the fact that there is a small valuation block in between the two valuation blocks of large volume, which still constitute most of the agent's valuation over $A$. Furthermore, note that since the length of $R_1^{out}$ is polynomial in $n$, the whole valuation of agent $a^{bs}_1$ lies in $[0,1] \cup R^{out}_1$.

The blanket sensor agent is responsible for detecting large enough discrepancies in $\lplus$ and $\lminus$. As we will see, if such a discrepancy exists, the blanket sensor agents will provide feedback to the c-e agents, making sure that this is not a solution to $I_{CH}$. We state the lemma here but postpone its proof for Section \ref{sec:pf}.
\begin{restatable}{lemma}{sensor}\label{lem:blanket}
Let $\lplus^{(c-e)}$ and $\lminus^{(c-e)}$ be the total fraction of the c-e region labelled $\lplus$ and $\lminus$ respectively. The blanket sensor agents ensure that in a solution to $I_{CH}$, it holds that $$|\lplus^{(c-e)} - \lminus^{(c-e)}| \leq 1/4.$$
\end{restatable}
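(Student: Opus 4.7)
The plan is to derive $|D| \le 1/4$, where $D := \lplus^{(c-e)} - \lminus^{(c-e)}$, directly from the $\epsilon$-balance condition on the blanket sensor agent $\alpha_1^{bs}$. Because $\mu_{\alpha_1^{bs}}$ has uniform density $0.1$ over the c-e region $[0,1]$, its contribution to the agent's signed value imbalance $\mu_{\alpha_1^{bs}}(\lplus)-\mu_{\alpha_1^{bs}}(\lminus)$ is exactly $0.1\,D$. Its remaining mass in $R_1^{out}$ is concentrated in three blocks: two thin blocks $A$ and $B$ (each of density $8.5$ and length $1/20$, hence total value $0.425$) and a middle block $M$ (density $0.05$, length $1/2$, total value $0.025$). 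Writing $\Delta = 8.5\,\alpha + 8.5\,\beta + 0.05\,\gamma$ for the $R_1^{out}$ contribution, where $\alpha,\beta,\gamma$ denote the signed length imbalances ($\lplus$-length minus $\lminus$-length) in $A,B,M$, the $\epsilon$-balance condition reads
\[
|\,0.1\,D + \Delta\,|\le\epsilon,
\]
so the lemma reduces to establishing the structural bound $|\Delta|\le 0.025 + O(\epsilon)$.

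The argument I would give is a case analysis on whether any cut lies inside the dense thin blocks $A$ or $B$. A key observation is that $\alpha_1^{bs}$ is essentially the only agent with non-zero valuation inside $A, M, B$ (the coordinate-encoding agents and the other $R_1^{out}$-using agents place their mass in the gaps surrounding these three blocks or further to the right), so any cut landing inside $A$ or $B$ must be $\alpha_1^{bs}$'s own associated cut. In the primary case where no cut lies inside $A$ or $B$, each of $8.5\,\alpha, 8.5\,\beta$ equals $\pm 0.425$ according to the uniform label of that block. If the parities agree, then $|8.5\,\alpha+8.5\,\beta|=0.85$, and combined with $|0.05\,\gamma|\le 0.025$ this forces $|\Delta|\ge 0.825$, which is incompatible with $|0.1\,D|\le 0.1$ via the balance condition. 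Hence the parities over $A$ and $B$ must disagree, so $8.5\,\alpha+8.5\,\beta=0$ and $|\Delta|=|0.05\,\gamma|\le 0.025$, yielding $|D|\le 1/4 + O(\epsilon)$ as required.

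The remaining case, where a cut lies inside one of the thin blocks, is the step I expect to be the main obstacle. In this sub-case $8.5\,\alpha$ (say) varies continuously over $[-0.425,0.425]$, and the balance equation of $\alpha_1^{bs}$ alone, combined merely with the physical constraint $|D|\le 1$, still permits a range of configurations whose induced $|D|$ exceeds $1/4$. To close this, the plan is to combine the balance condition with the further constraints provided by the coordinate-encoding agents $\alpha_1,\alpha_2$ and the bit-extractor sensors, whose valuations sit inside $[0,1]$ and in the gaps surrounding $A,M,B$ and therefore constrain the label-parities at $A$ and $B$ and the length of the $\lplus$-labelled segments in $[0,1]$. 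The rigidity produced by the large density $8.5$ inside the thin blocks (a small shift in the cut's position there produces a large shift in $\Delta$) is what allows these external constraints to pin the cut's position down to a window of width $O(\epsilon)$, after which a direct algebraic check in each of the (finitely many) parity patterns in the gaps shows that $|\Delta|\le 0.025 + O(\epsilon)$ holds in this sub-case too.

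Combining the two cases and absorbing the $O(\epsilon)$ slack into $\epsilon=2^{-2n}$, we obtain $|\Delta|\le 0.025+O(\epsilon)$, and hence via the balance equation $|D|\le 1/4 + O(\epsilon) \le 1/4$, as claimed.
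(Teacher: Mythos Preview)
Your Case 1 is fine and is essentially the contrapositive of the first step of the paper's argument: if the blanket sensor's cut avoids both thin blocks $A$ and $B$, then opposite parities on $A,B$ are forced and $|D|\le 1/4+O(\epsilon)$.

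The gap is in Case 2, which is where all the actual content lies. Your plan there cannot work as stated, for a structural reason: you try to derive a contradiction from the balance condition of the coordinate-encoding agents $\alpha_1,\alpha_2$ while looking only at the single region $R_1^{out}$. But each c-e agent has only a $1/100$ share of its total mass in any single $R_i$ (indeed only $3/400$ in the bit-detection gadget sitting in the gaps around $A,M,B$). Even if the blanket sensor's cut in $R_1^{out}$ lands in a thin block and forces that entire $3/400$ to one label, this is nowhere near enough to violate $\alpha_1$'s $\epsilon$-balance, since the other $99$ regions can compensate. So no amount of local analysis in $R_1$ alone will close the case.

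What you are missing is precisely the multi-copy mechanism that the paper exploits. The lemma speaks of the blanket sensor \emph{agents}, plural: there is one $\alpha_i^{bs}$ per circuit copy $C_i$, and all of them share the same uniform valuation $0.1$ on $[0,1]$. Hence if $|D|>1/4$, the same discrepancy $>1/40$ is seen by every $\alpha_i^{bs}$, forcing the associated cut into a thin block in \emph{every} reliable region $R_i^{out}$ (at least $98$ of them, since at most two stray cuts can spoil a copy). In each such $R_i$, the c-e agent's bit-detection gadget---which sits entirely to one side of that thin block---gets a single label, contributing $3/400$ to one side. Aggregating over the $\ge 98$ reliable copies gives $\ge 147/200$ of $\alpha_1$'s total mass on one label, which violates $\alpha_1$'s $\epsilon$-balance and yields the contradiction. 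Your proposal never invokes this aggregation, and without it Case~2 does not close.
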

\noindent Whenever the blanket sensor agent $\alpha_i^{bs}$ does detect such a discrepancy (and therefore the cut $c(\alpha_i^{bs})$ in $R(\alpha_i^{bs})$ assumes one of the extreme positions, left or right), we will say that the blanket sensor agent is \emph{active} and that it \emph{overrides the circuit} $C_i$. Otherwise, we will say that the blanket sensor agent is $\emph{passive}$.\\

\subsubsection*{The bit extractors} 

The second set of agents in ${\cal S}_1$ will be responsible for detecting the position of the cuts and extracting their binary expansion. To be more precise, these agents will extract $8$ binary numbers of length $n+8$, from $8$ consecutive intervals of length $1/8$ each, which span the c-e region, and this number will encode the position of the cut within the interval. We will refer to these extracted binary strings as the \emph{raw data}. 

 Lemma \ref{lem:blanket} ensures that it is not possible for two cuts to intersect the same interval of length $1/8$. If for some interval of length $1/8$ there are no cuts intersecting it, the corresponding bit extractors will output a binary string which will consists of only $1$'s or only $0$'s; we will refer to such bit strings as $\emph{solid}$. 

\begin{definition}[Solid String]
A binary string is called \emph{solid} if either all of its bits are $1$ or all of its bits are $0$.
\end{definition}
\noindent If the interval is intersected by one cut, the bit extractors will output a binary string consisting of a non-trivial mixture of $0$'s and $1$'s.

The raw data extracted from the bit-extractors will be fed into the encoders of the input gates of $C_i$, and in particular to the pre-processing circuit $C_i^\textrm{pre}$ that will transform the extracted information into the binary representation of the coordinate of a point $(x,y)$ on the domain, which will then be fed into the encoding $C_i^{VT}$ of the labelling circuit $C$. We explain this in more detail in Section \ref{sec:gate}.\\

\noindent For each bit extractor, there are another $n+7$ sensor agents that will have exactly the same value in $[0,1]$. In particular, this value will be $1/10$ in volume, spanning over an interval of length $1/8$. We will refer to those $n+8$ sensor agents as \emph{c-e identical}, precisely because they have the same valuation in the c-e region. There will be exactly $8$ sets of $n+8$ c-e identical sensor agents. For $i=2,\ldots,8$, the values of the c-e identical agents for $i$ will be shifted by $1/8$ to the right, compared to the values of the c-e identical agents of $i-1$. Therefore, the set of sensor agents covers the whole c-e region. (See Figure \ref{fig:sensor_agents2}).

We will use $\alpha_{j,k}^s$ to denote a bit extractor agent, where $j \in \{1,\ldots,8\}$ and $k \in \{1,\ldots,n+8\}$. Note that here, we drop the subscript referring to the specific circuit encoder $C_1$ for ease of presentation and since there is no ambiguity. \\

\noindent \textbf{The agents in $\mathbf{[0,1/8]}$:} First, we will define the valuations of agents $\alpha_{1,k}^s,\ldots,\alpha_{1,n+8}^s$ and we will explain how to construct the valuations of the remaining agents from these agents. Note that these agents are c-e identical. First, let 

\[ \mu_{\alpha_{1,1}^s}(t)= \begin{cases*}
4/5 & if $t \in [0,1/8]$\\ 
9& if $t \in [\ell(R_i^{in}),\ell(R_i^{in})+1/20]$\\
9& if $t \in [\ell(R_i^{in})+1-1/20,\ell(R_i^{in})+1]$
\end{cases*}
\]
Then, define for $k=2,\ldots,n+8$,
\[ \mu_{\alpha_{1,k}^{s}}(t)= \begin{cases*}
4/5 & if $t \in [0,1/8]$\\ 
9-\sum_{j=1}^{k-1} (1/2)^j& if $t \in [\ell(R_i^{in})+2(k-1),\ell(R_i^{in})+2(k-1)+1/20]$\\
9-\sum_{j=1}^{k-1} (1/2)^j& if $t \in [\ell(R_i^{in})+2(k-1)+1-1/20,\ell(R_i^{in})+2(k-1)+1]$
\end{cases*}
\]
Additionally, for $j=1,\ldots,k-1$, for $t\in [\ell(R_i^{in})+2j+0.25,\ell(R_i^{in})+2j+0.75]$, we have that $\mu_{a_{1k}}^{s}(t)=1/5 \cdot (1/2)^j$.

\begin{restatable}{proposition}{bitextractors}\label{obs:bitextractors}
	Given a cut in the interval where $n+8$ c-e identical bit extractors have their value in the c-e region (e.g. the interval $[0,1/8]$ for the first $n+8$ c-e identical agents), the bit extractors recover a binary string of length $n+8$ which encodes the cut position in that interval.
\end{restatable}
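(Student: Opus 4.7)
The plan is to show by induction on $k$ that the position of each bit-extractor cut $c_k$ is pinned down by the balance constraint on $\alpha_{1,k}^s$ in such a way that the sequence of choices $\sigma_k\in\{-1,+1\}$ of whether $c_k$ lies in the left or the right thin block of its bit-detection gadget constitutes a binary encoding of the c-e cut location $x\in[0,1/8]$, with the continuous within-block offset of $c_{n+8}$ providing the remaining resolution.

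The first step is a placement lemma asserting that, in any $\epsilon$-balanced solution, each $c_k$ must sit inside the bit-detection gadget of $\alpha_{1,k}^s$, and in fact inside one of its two thin dense blocks of height $h_k=9-\sum_{j<k}2^{-j}$ and width $1/20$. This is a volume argument: the c-e contribution to $\alpha_{1,k}^s$'s imbalance has magnitude at most $4/5$ (and at most $1/4$ after invoking Lemma~\ref{lem:blanket}), and the agent's correction-block mass is bounded by $\sum_{j<k}(1/10)2^{-j}<1/10$, so no cut placement outside a dense block can cancel the accumulated imbalance within the $\epsilon=2^{-2n}$ slack.

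I then write the discrepancy of $\alpha_{1,k}^s$ (under the convention, enforced by parity gadgets, that $\lplus$ labels the left of every cut) as
\[
D_k \;=\; \Bigl(\tfrac{64}{5}x-\tfrac{4}{5}\Bigr) \;+\; D^{\mathrm{gad}}_k \;+\; \sum_{j=1}^{k-1}\sigma_j\cdot\tfrac{1}{10\cdot 2^{j}},
\]
where $D^{\mathrm{gad}}_k$ is a continuous function of the within-block offset of $c_k$ whose sign equals $\sigma_k$ and whose magnitude ranges over $[0,h_k/10]$, and where each correction-block term $\pm (1/10)2^{-j}$ is signed by whether the earlier cut $c_j$ sits to the left or the right of the associated correction block in gadget $j$. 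For $k=1$ the equation $|D_1|\leq\epsilon$ forces $\sigma_1=\mathrm{sgn}(4/5-(64/5)x)$, whose threshold is $x=1/16$; for $k\geq 2$, once $\sigma_1,\ldots,\sigma_{k-1}$ are fixed, the residual target
\[
T_k \;=\; \tfrac{4}{5}-\tfrac{64}{5}x-\sum_{j<k}\sigma_j\cdot\tfrac{1}{10\cdot 2^{j}}
\]
must be matched by $D^{\mathrm{gad}}_k$ up to $\epsilon$, which forces $\sigma_k=\mathrm{sgn}(T_k)$.

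The main obstacle, and the reason for the careful calibration of the gadget heights, is to simultaneously guarantee (i) that the residual $T_k$ always lies within the capacity $h_k/10$ of the gadget, and (ii) that the $\epsilon$-slack never swamps the sign of $T_k$. For (i) I use the recursion $|T_{k+1}|=\bigl||T_k|-(1/10)2^{-k}\bigr|$ implied by $\sigma_k=\mathrm{sgn}(T_k)$, which telescopes to $|T_k|\leq\max(|T_1|,(1/10)2^{-(k-1)})\leq 4/5$, comfortably below $h_k/10=(8+2^{-(k-1)})/10\geq 4/5$; this is precisely why the base height is chosen to be $9$ and why the decrements are $\sum_{j<k}2^{-j}<1$. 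For (ii) I observe that the smallest correction weight $(1/10)2^{-(n+7)}$ is exponentially larger than $\epsilon=2^{-2n}$, so each $\sigma_k$ is unambiguous outside a set of $x$ of measure $O(\epsilon)$ near the finitely many threshold values. Combining these bounds, $(\sigma_1,\ldots,\sigma_{n+8})$ is a well-defined $(n+8)$-bit encoding of $x$ within $[0,1/8]$, and together with the continuous offset of $c_{n+8}$ it recovers $x$ to the precision $2^{-(n+11)}$ that the \vt\ circuit consumes. The argument is verbatim for the seven other length-$1/8$ intervals after the appropriate horizontal shift of the valuations.
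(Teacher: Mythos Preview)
Your argument is correct and coincides with the paper's: both analyse each bit-extractor's balance constraint to show that the output cuts implement a binary search on the c-e cut's position, with the correction block of mass $(1/10)2^{-j}$ serving as the compensation for the next level of refinement. The paper's proof describes this mechanism qualitatively (agent $\alpha_{1,2}^s$ determines which half of the previously identified half contains the cut, by virtue of the compensation block already labelled by $c_1$'s position); your version spells out the recurrence for the residual $T_k$ and the capacity check explicitly.

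One numerical slip: the c-e block has \emph{density} $4/5$ on $[0,1/8]$, so its total mass is $4/5\cdot 1/8=1/10$, not $4/5$. The c-e discrepancy term is therefore $(8/5)x-1/10$, not $(64/5)x-4/5$, and $|T_1|\le 1/10$. This only loosens your capacity check (now $1/10\ll h_k/10$), so the argument survives unchanged; note that the thresholds are preserved since your formula differs from the correct one by a scalar factor. Your appeal to Lemma~\ref{lem:blanket} for the bound $1/4$ is also misplaced (that lemma concerns the full c-e region, not the $[0,1/8]$ sub-interval), but it is unnecessary once the correct $1/10$ bound is in hand. Finally, the paper has the output cuts in $R_1^{in}$ carrying $\lminus$ on their left (see Figure~\ref{fig:sensor_agents}), which flips some signs in your $D_k$ formula without affecting the structure.
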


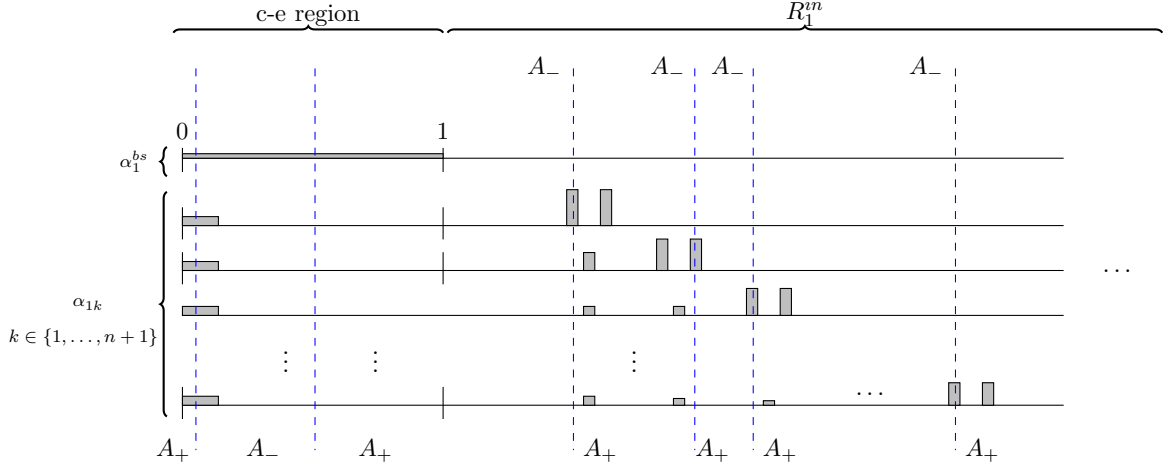
\begin{figure}
\centering

\begin{tikzpicture}[scale=0.85, transform shape]

\node (a_1) at (0pt,0pt) {}; 
\node (a_2) at (400pt, 0pt) {};
\draw (a_1)--(a_2);

	\node (a_1) at (0pt,-30pt) {}; 
\node (a_2) at (400pt, -30pt) {};
\draw (a_1)--(a_2);

\node (b_1) at (4pt, 12pt) {$0$};
\node (b_2) at (4pt, -10pt) {};	
\draw(b_1) -- (b_2);

\node (b_1) at (120pt, 12pt) {$1$};
\node (b_2) at (120pt, -10pt) {};	
\draw(b_1) -- (b_2);

\node (a_1) at (0pt,-50pt) {}; 
\node (a_2) at (400pt, -50pt) {};
\draw (a_1)--(a_2);

\node (b_1) at (4pt, -38pt) {};
\node (b_2) at (4pt, -60pt) {};	
\draw(b_1) -- (b_2);

\node (b_1) at (120pt, -38pt) {};
\node (b_2) at (120pt, -60pt) {};	
\draw(b_1) -- (b_2);



\node (b_1) at (4pt, -18pt) {};
\node (b_2) at (4pt, -40pt) {};	
\draw(b_1) -- (b_2);

\node (b_1) at (120pt, -18pt) {};
\node (b_2) at (120pt, -40pt) {};	
\draw(b_1) -- (b_2);


\node (a_1) at (0pt,-70pt) {}; 
\node (a_2) at (400pt, -70pt) {};
\draw (a_1)--(a_2);

\node (l) at (205pt, -87.5pt) {$\vdots$};
\node (l) at (50pt, -87.5pt) {$\vdots$};
\node (l) at (90pt, -87.5pt) {$\vdots$};

\node (a_1) at (0pt,-110pt) {}; 
\node (a_2) at (400pt, -110pt) {};
\draw (a_1)--(a_2);







\draw [
thick,
decoration={
	brace,
	raise=5pt
},
decorate
] (0pt,50pt) -- (120pt,50pt)
node [pos=0.5,anchor=south,yshift=5pt] {c-e region};	

\draw[
thick,
decoration={
	brace,
	raise=5pt
},
decorate
] (122pt,50pt) -- (440pt,50pt)
node [pos=0.5,anchor=south,yshift=5pt] {$R_1^{in}$};

\node (l) at (420pt, -50pt) {$\ldots$};
	
\draw [
thick,
decoration={
	brace,
	mirror,
	raise=5pt
},
decorate
] (3pt,5pt) -- (3pt,-8pt)
node [pos=0.5,anchor=east,xshift=-10pt] {\scriptsize{$\alpha_1^{bs}$}};

\draw [
thick,
decoration={
	brace,
	mirror,
	raise=5pt
},
decorate
] (3pt,-15pt) -- (3pt,-115pt)
node [pos=0.5,anchor=east,xshift=-30pt] {\scriptsize{$\alpha_{1k}$}};

\node (l) at (-40pt,-80pt) {\scriptsize{$k\in \{1,\ldots,n+1\}$}};

%

\draw[fill=lightgray] (4pt,0pt) rectangle (120pt,2pt);


\draw[fill=lightgray] (4pt,-30pt) rectangle (20pt,-26pt);

\draw[fill=lightgray] (175pt,-30pt) rectangle (180pt,-14pt);
\draw[fill=lightgray] (190pt,-30pt) rectangle (195pt,-14pt);

\draw[fill=lightgray] (4pt,-50pt) rectangle (20pt,-46pt);

\draw[fill=lightgray] (215pt,-50pt) rectangle (220pt,-36pt);
\draw[fill=lightgray] (230pt,-50pt) rectangle (235pt,-36pt);

\draw[fill=lightgray] (182.5pt,-50pt) rectangle (187.5pt,-42pt);

\draw[fill=lightgray] (4pt,-70pt) rectangle (20pt,-66pt);

\draw[fill=lightgray] (255pt,-70pt) rectangle (260pt,-58pt);
\draw[fill=lightgray] (270pt,-70pt) rectangle (275pt,-58pt);

\draw[fill=lightgray] (182.5pt,-70pt) rectangle (187.5pt,-66pt);
\draw[fill=lightgray] (222.5pt,-70pt) rectangle (227.5pt,-66pt);

\node (l) at (310pt,-105pt) {$\ldots$};

\draw[fill=lightgray] (4pt,-110pt) rectangle (20pt,-106pt);

\draw[fill=lightgray] (345pt,-110pt) rectangle (350pt,-100pt);
\draw[fill=lightgray] (360pt,-110pt) rectangle (365pt,-100pt);

\draw[fill=lightgray] (182.5pt,-110pt) rectangle (187.5pt,-106pt);
\draw[fill=lightgray] (222.5pt,-110pt) rectangle (227.5pt,-107pt);
\draw[fill=lightgray] (262.5pt,-110pt) rectangle (267.5pt,-108pt);

%
%

%
%

%
\node (b_1) at (4pt, -98pt) {};
\node (b_2) at (4pt, -120pt) {};	
\draw(b_1) -- (b_2);
\node (b_1) at (120pt, -98pt) {};
\node (b_2) at (120pt, -120pt) {};	
\draw(b_1) -- (b_2);
%

%
%

\draw [dashed,color=blue] (10pt, 40pt) -- (10pt, -130pt);

\draw [dashed,color=blue] (63pt, 40pt) -- (63pt, -130pt);


\draw [dashed,color=blue] (178pt, 40pt) -- (178pt, -130pt);
\draw [dashed,color=blue] (232pt, 40pt) -- (232pt, -130pt);

\draw [dashed,color=blue] (258pt, 40pt) -- (258pt, -130pt);

\draw [dashed,color=blue] (348pt, 40pt) -- (348pt, -130pt);

\node at (0pt,-130pt) {$\lplus$};
\node at (40pt,-130pt) {$\lminus$};
\node at (90pt,-130pt) {$\lplus$};

\node at (165pt,40pt) {$\lminus$};
\node at (190pt,-130pt) {$\lplus$};

\node at (220pt,40pt) {$\lminus$};
\node at (240pt,-130pt) {$\lplus$};

\node at (247pt,40pt) {$\lminus$};
\node at (270pt,-130pt) {$\lplus$};

\node at (335pt,40pt) {$\lminus$};
\node at (360pt,-130pt) {$\lplus$};

	\end{tikzpicture}
	\caption{\small{The $n+8$ c-e identical sensor agents of $\mathcal{S}_i$, which are responsible for extracting the raw data representing cut-positions in an interval of length $\frac{1}{8}$ to $n+8$ bits of precision, which yields $n+11$ bits of precision of the cut in $[0,1]$. For visibility, the valuation are not according to scale but the bit-detection gadgets in the sequence become smaller as one moves to the right and the same holds for the ``intersecting'' blocks of agent $\alpha_{1k}^s$ that lie in between the bit-detection gadgets of agents $\alpha_{1i}^{s}$ with $i>k$. A pair of c-e cuts is also shown: the blanket sensor agent $\alpha_1^{bs}$ detects zero discrepancy and therefore is passive. The bit extractors detect the position of the cut in $[0,1/8]$ by either producing $0$ or $1$ in their output. The sequence of labels when two cuts lie in the c-e region is also shown - note that this sequence can be ensured by the use of parity-gadgets (see Section \ref{sec:gadgets}).}}
	\label{fig:sensor_agents}
	\end{figure}

\noindent \textbf{The remaining bit extractors:} Next, we design the remaining $8$ sets of c-e identical agents. These will be shifted versions of the first $n+8$ bit-extractors, where their valuations in the c-e region will be shifted by $1/8$ to the right (thus spanning the whole c-e region) and their valuations in $R_1$ will lie in ``clean'', non-overlapping intervals.
More concretely, for the agents $\alpha_{jk} \in \mathcal{S}_i$, $j=2,\ldots,8$, we define a \emph{correspondence function} $h_{\mathcal{R}_A,\mathcal{R}_B}: \mathcal{R}_A \longrightarrow \mathcal{R}_B$, mapping points of an interval $\mathcal{R}_A$ to an interval $\mathcal{R}_B$ in the most straightforward way: For $t \in \mathcal{R}_A$, let $h_{\mathcal{R}_A,\mathcal{R}_B}(t) = t-\ell(\mathcal{R}_A)+\ell(\mathcal{R}_B)$. In other words, any two points $x \in \mathcal{R}_A$ and $y \in \mathcal{R}_B$ such that $x-\ell(\mathcal{R}_A) = y-\ell(\mathcal{R}_B)$ are \emph{corresponding points} with regard to the two sub-regions. For $j=1,\ldots,8$, let $\mathcal{R}_j \in R_i^{in}$ be the sub-interval $[\ell(R_i^{in})+(j-1)2N+2,\ell(R_i^{in})+(j-1)2N+2N+1]$, where $N=2n+7$. Then for all $j = 2, \ldots, 8$, for all $k \in \{1,\ldots, n+8\}$,
\begin{itemize}
	\item
	$\mu_{\alpha_{jk}^s}(x)=\mu_{a_{(j-1),k}^s}(y)+1/8$ if $x \in \mathcal{R}_j$ and
	\item
	$\mu_{\alpha_{jk}^s}(x)=\mu_{a_{1k}^s}(y)$ if $x \in \mathcal{R}_j$, $y \in \mathcal{R}_1$ and $y = h_{\mathcal{R}_1,\mathcal{R}_j}(x)$.
\end{itemize}

\noindent The role of the bit extractors is to cover the whole c-e region in order to be able to detect the positions of cuts that lie anywhere in it. The reason for having $8$ shifted versions instead of a single detector is that the bit-extraction units are only operative if their inputs are intersected by at most one cut. Using these smaller valuation blocks, this guarantee is provided by the blanket sensor agent, according to Lemma \ref{lem:blanket}.

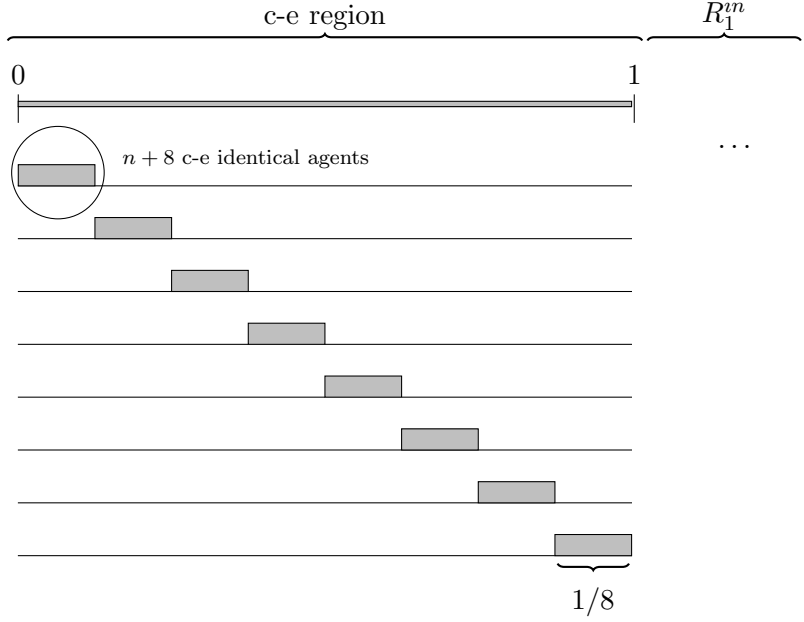
\begin{figure}
\centering

\begin{tikzpicture}[scale=1, transform shape]

\node (a_1) at (0pt,0pt) {}; 
\node (a_2) at (240pt, 0pt) {};
\draw (a_1)--(a_2);

\node (a_1) at (0pt,-30pt) {}; 
\node (a_2) at (240pt, -30pt) {};
\draw (a_1)--(a_2);

\node (a_1) at (0pt,-50pt) {}; 
\node (a_2) at (240pt, -50pt) {};
\draw (a_1)--(a_2);

\node (a_1) at (0pt,-70pt) {}; 
\node (a_2) at (240pt, -70pt) {};
\draw (a_1)--(a_2);

\node (a_1) at (0pt,-90pt) {}; 
\node (a_2) at (240pt, -90pt) {};
\draw (a_1)--(a_2);

\node (a_1) at (0pt,-110pt) {}; 
\node (a_2) at (240pt, -110pt) {};
\draw (a_1)--(a_2);

\node (a_1) at (0pt,-130pt) {}; 
\node (a_2) at (240pt, -130pt) {};
\draw (a_1)--(a_2);

\node (a_1) at (0pt,-150pt) {}; 
\node (a_2) at (240pt, -150pt) {};
\draw (a_1)--(a_2);

\node (a_1) at (0pt,-170pt) {}; 
\node (a_2) at (240pt, -170pt) {};
\draw (a_1)--(a_2);

\node (b_1) at (4pt, 12pt) {$0$};
\node (b_2) at (4pt, -10pt) {};	
\draw(b_1) -- (b_2);

\node (b_1) at (237pt, 12pt) {$1$};
\node (b_2) at (237pt, -10pt) {};	
\draw(b_1) -- (b_2);

\draw [
thick,
decoration={
	brace,
	raise=5pt
},
decorate
] (0pt,20pt) -- (240pt,20pt)
node [pos=0.5,anchor=south,yshift=5pt] {c-e region};	

\draw[
thick,
decoration={
	brace,
	raise=5pt
},
decorate
] (242pt,20pt) -- (300pt,20pt)
node [pos=0.5,anchor=south,yshift=5pt] {$R_1^{in}$};
\node (l) at (275pt, -15pt) {$\ldots$};

\draw[fill=lightgray] (4pt,0pt) rectangle (236pt,2pt);

\draw[fill=lightgray] (4pt,-30pt) rectangle (33pt,-22pt);
\draw[fill=lightgray] (33pt,-50pt) rectangle (62pt,-42pt);
\draw[fill=lightgray] (62pt,-70pt) rectangle (91pt,-62pt);
\draw[fill=lightgray] (91pt,-90pt) rectangle (120pt,-82pt);
\draw[fill=lightgray] (120pt,-110pt) rectangle (149pt,-102pt);
\draw[fill=lightgray] (149pt,-130pt) rectangle (178pt,-122pt);
\draw[fill=lightgray] (178pt,-150pt) rectangle (207pt,-142pt);
\draw[fill=lightgray] (207pt,-170pt) rectangle (236pt,-162pt);

\draw[
thick,
decoration={
	mirror,
	brace,
	raise=5pt
},
decorate
] (208pt,-168pt) -- (235pt,-168pt)
node [pos=0.5,anchor=north,yshift=-10pt] {$1/8$};

\node[circle,draw=black, inner sep=0pt,minimum size=35pt,label={[xshift=2.5cm, yshift=-0.7cm]\scriptsize{$n+8$ c-e identical agents}}] (b) at (19pt,-25pt) {};

\end{tikzpicture}
\caption{\small{The valuation of the blanket sensor agent and the bit-extractors in the c-e region. Only one representative from each set of $n+8$ c-e identical agents is shown.}}
\label{fig:sensor_agents2}
\end{figure}

\begin{restatable}{proposition}{allbitextractors}\label{prop:allbitextractors}
	The bit-extractors $\mathcal{S}_i$ can extract the binary representation of a point $(x,y)$ on the domain, represented by a set of cuts in the c-e region.
\end{restatable}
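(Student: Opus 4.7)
The plan is to combine the single-interval result (Proposition \ref{obs:bitextractors}) with the tiling of the c-e region $[0,1]$ by the eight shifted sets of c-e identical bit-extractors, using Lemma \ref{lem:blanket} to control how cuts can distribute in the c-e region. The proposition asserts a decoding guarantee that is essentially the conjunction of eight separate applications of Proposition \ref{obs:bitextractors}, one per sub-interval, glued together by three bits of sub-interval identification.

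First, I would invoke Lemma \ref{lem:blanket} to conclude that $|\lplus^{(c-e)}-\lminus^{(c-e)}|\leq 1/4$ and derive from this that no sub-interval $\mathcal{I}_j=[(j-1)/8, j/8]$ can contain two or more cuts of the c-e region; indeed, since labels alternate starting with $\lplus$, two cuts in a single length-$1/8$ interval would produce a persistent shift in the $\lplus/\lminus$ balance that, taken together with the balances of the other seven sub-intervals, would violate the $1/4$ bound. Hence each $\mathcal{I}_j$ contains at most one of the cuts of the c-e region, and since at most two such cuts exist (the cuts $c(a_1), c(a_2)$), at most two of the eight sub-intervals are "active".

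Second, I would apply Proposition \ref{obs:bitextractors} to each of the eight sets of $n+8$ c-e identical bit-extractors separately. For a set indexed by $j$ whose c-e valuation lies in $\mathcal{I}_j$ and whose bit-detection gadgets lie in $\mathcal{R}_j\subseteq R_i^{in}$: if $\mathcal{I}_j$ contains a cut at position $p$, Proposition \ref{obs:bitextractors} produces an $(n+8)$-bit string encoding $p$'s position within $\mathcal{I}_j$; if $\mathcal{I}_j$ contains no cut, the entire c-e mass of every agent $\alpha_{jk}$ is labelled uniformly (either $\lplus$ or $\lminus$, depending on the global parity of the labelling at $\mathcal{I}_j$), and the only way for each $\alpha_{jk}$ to balance its valuation is for its bit-detection cut in $\mathcal{R}_j$ to sit at the extreme corresponding to the opposite label, uniformly across all $k$. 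This produces a solid output string, as required. The disjointness of the regions $\mathcal{R}_1,\ldots,\mathcal{R}_8$ ensures that the eight sets do not interfere with one another.

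Third, I would combine the three bits identifying which sub-intervals are active (obtained by detecting which of the eight strings fails to be solid) with the $(n+8)$-bit within-sub-interval strings, producing an $(n+11)$-bit binary representation for each active cut. Assigning these decoded bit-strings to the two cuts $c(a_1)$ and $c(a_2)$ gives the coordinates $x$ and $y$ of the encoded point at the full required precision. The main technical obstacle I expect is carefully justifying the "solid vs.\ non-solid" distinction: one must verify that a single cut in $\mathcal{I}_j$ indeed forces all bit-detection cuts in $\mathcal{R}_j$ into the configuration dictated by the cascade of "intersecting" blocks in the valuation of each $\alpha_{jk}$, and that the absence of a cut in $\mathcal{I}_j$ leaves no such cascade to follow. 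This amounts to checking that the valuation weights are calibrated so that the discrepancy arising from an unlabelled sub-interval cleanly saturates the bit-detection gadgets of that set, and is addressed by the precise scaling $9-\sum_{j=1}^{k-1}(1/2)^j$ chosen in the construction.
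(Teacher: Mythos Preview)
Your overall plan is close to the paper's argument, but there is one genuine gap. You apply Proposition~\ref{obs:bitextractors} uniformly to every sub-interval $\mathcal{I}_j$ that contains a cut, asserting that the $n+8$ c-e identical agents output a bit string encoding the cut's position within $\mathcal{I}_j$. This is only correct for the \emph{first} cut in the c-e region, which has $\lplus$ on its left. For the \emph{second} cut, the label on its left-hand side is $\lminus$, so every bit-detection gadget in that sub-interval's extractor set sees the opposite discrepancy sign and outputs the \emph{bitwise complement} of the intended position string. Your combination step in the third paragraph would therefore decode the $y$-coordinate incorrectly.

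The paper's proof addresses exactly this point: it observes that the second cut's extractor outputs are flipped, and then argues that because the pre-processing circuit $C_i^{\mathrm{pre}}$ also receives the solid strings from the intervening sub-intervals (which record whether a prior cut has already been seen), the circuit knows which extractor outputs to negate before interpreting them. In other words, the decoding is not done purely by reading off the raw extractor bits and concatenating; it is done by $C_i^{\mathrm{pre}}$, which uses the pattern of solid strings to determine the correct interpretation (including routing the second cut's bits through NOT gates). Your proposal should make this explicit rather than treating the eight extractor sets symmetrically.
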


\noindent The proof of the proposition is left for Section \ref{sec:pf}.

\subsubsection{The gate agents}\label{sec:gate}

In this section, we will design the agents that will be responsible for encoding (i) the pre-processing circuit $C_i^\mathrm{pre}$ that transforms the raw data into coordinates of points $(x,y)$ of the domain and (ii) the circuit $C_1^\textrm{main}$, which will consist of the encoding $C_i^{VT}$ of the labelling circuit of \vt, as well as a ``XOR operator'' circuit that will flip the label of the final outcome when needed. These agents will eventually provide feedback (in terms of a discrepancy of labels $\lplus$ and $\lminus$) to the c-e agents. 

\subsubsection*{Implementing the circuit using the gate gadgets}
For both circuits (which we will view as a combined circuit in our implementation), at a high level, we will simulate the gates by gate agents, using the boolean gate gadgetry that we presented in Subsection \ref{sec:gadgets}. In particular, for any two-input gate $g$ of the circuit with inputs $in_1,in_2$, agent $\alpha^g$ will have a bit detection value gadget that will encode the output of the gate and furthermore, it will have value in some intervals $\mathcal{R}_k$ and $\mathcal{R}_{\ell}$ where the values of $in_1$ and $in_2$ lie respectively, where $in_1$ and $in_1$ can either be the outputs of some gates $g_1$, $g_2$ of some previous level, or the outputs of the sensor agents, if $g$ is an input gate of $C_i^\textrm{pre}$. For an illustration, see Figure \ref{fig:gates}. The case of $g$ being a single-input gate is similar. The construction will make sure that agent $\alpha^g$ will only be satisfied with the consensus-halving solution if the gate constraint is satisfied.\\

\begin{figure}

\begin{tikzpicture}[scale=0.85, transform shape]
	\node (a_1) at (84pt,0pt) {}; 
	\node (a_2) at (485pt, 0pt) {};
	\draw (a_1)--(a_2);
	
\node (a_1) at (84pt,-20pt) {}; 
\node (a_2) at (485pt, -20pt) {};
\draw (a_1)--(a_2);

	\node (a_1) at (84pt,-40pt) {}; 
	\node (a_2) at (485pt, -40pt) {};
	\draw (a_1)--(a_2);
	
		\node (a_1) at (84pt,-60pt) {}; 
	\node (a_2) at (485pt, -60pt) {};
	\draw (a_1)--(a_2);
	
	

	
	\draw[fill=lightgray] (110pt,0pt) rectangle (115pt,10pt);
	\draw[fill=lightgray] (120pt,0pt) rectangle (125pt,10pt);
	
	\draw[fill=lightgray] (135pt,-20pt) rectangle (140pt,-10pt);
	\draw[fill=lightgray] (145pt,-20pt) rectangle (150pt,-10pt);
	
	

\draw[fill=lightgray] (115pt,-40pt) rectangle (120pt,-35pt);
\draw[fill=lightgray] (140pt,-40pt) rectangle (145pt,-35pt);

	\draw[fill=lightgray] (230pt,-40pt) rectangle (240pt,-25pt);
\draw[fill=lightgray] (250pt,-40pt) rectangle (260pt,-35pt);

\draw[fill=lightgray] (240pt,-60pt) rectangle (250pt,-55pt);

\draw[fill=lightgray] (380pt,-60pt) rectangle (390pt,-50pt);
\draw[fill=lightgray] (400pt,-60pt) rectangle (410pt,-50pt);

\draw [
thick,
decoration={
	brace,
	raise=5pt
},
decorate
] (485pt,15pt) -- (485pt,0-20pt)
node [pos=0.5,anchor=west,xshift=8pt] {\scriptsize{Outputs of bit-detectors}};

\draw [
thick,
decoration={
	brace,
	raise=5pt
},
decorate
] (485pt,-25pt) -- (485pt,-40pt)
node [pos=0.5,anchor=west,xshift=8pt] {\scriptsize{An AND input gate gadget}};

\draw [
thick,
decoration={
	brace,
	raise=5pt
},
decorate
] (485pt,-45pt) -- (485pt,-60pt)
node [pos=0.5,anchor=west,xshift=8pt] {\scriptsize{A NOT output gate gadget}};

	\draw [
thick,
decoration={
	brace,
	raise=5pt
},
decorate
] (90pt,20pt) -- (220pt,20pt)
node [pos=0.5,anchor=south,yshift=5pt] {$R_1^{in}$};

	\draw [
thick,
decoration={
	brace,
	raise=5pt
},
decorate
] (225pt,20pt) -- (350pt,20pt)
node [pos=0.5,anchor=south,yshift=5pt] {$R_1^{med}$};

	\draw [
thick,
decoration={
	brace,
	raise=5pt
},
decorate
] (355pt,20pt) -- (480pt,20pt)
node [pos=0.5,anchor=south,yshift=5pt] {$R_1^{out}$};

\draw[color=blue,dashed] (137.5pt,20pt) -- (137.5pt,-80pt);
\draw[color=blue,dashed] (122.5pt,20pt) -- (122.5pt,-80pt);
\draw[color=blue,dashed] (235.5pt,20pt) -- (235.5pt,-80pt);
\draw[color=blue,dashed] (402.5pt,20pt) -- (402.5pt,-80pt);

\draw[color=red,dashed] (147.5pt,20pt) -- (147.5pt,-80pt);
\draw[color=red,dashed] (123.5pt,20pt) -- (123.5pt,-80pt);
\draw[color=red,dashed] (255.5pt,20pt) -- (255.5pt,-80pt);
\draw[color=red,dashed] (387.5pt,20pt) -- (387.5pt,-80pt);

	\end{tikzpicture}
	\caption{\small{The basic idea behind the gate-agents encoding the gates of $C_1$. The picture denotes a simplified case where two input bits from the bit-encoders are supplied to an enconding of an input AND gate of $C_i$ and the output bit of this gate is in turn supplied to the encoding of a NOT output gate of $C_i$. Note that if for example the sensor agents detect the values $1$ and $0$ respectively (the blue cuts), then the output of the AND gate is $0$ (i.e. the blue cut sits at the left of the AND gate agent's bit detector) and the output of the circuit is $1$ (again, see the blue cut that sit on the rightmost valuation block of the NOT gate agent. Similarly, if the sensor agents detect values $1$ and $1$ (the red cuts), then the output of the AND gate is $1$ and the output of the circuit is $0$.}} 
	\label{fig:gates}
	\end{figure}
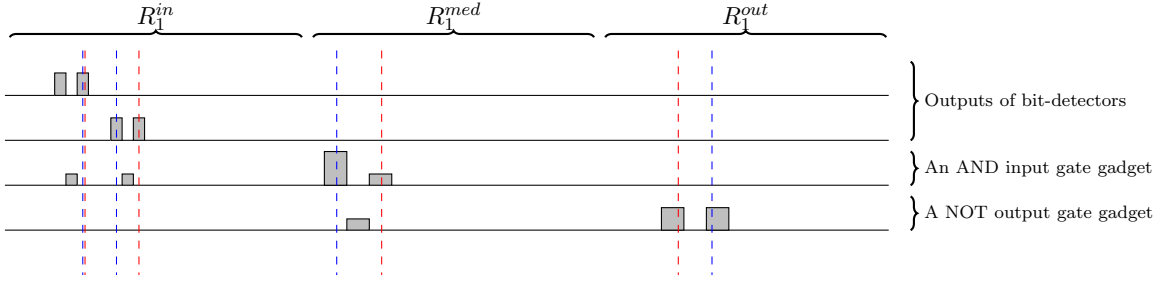

\noindent Concretely, we will use the gate gadgets from Subsection \ref{sec:gadgets} that will encode the gates of the circuit. For each gate of $C_1$, we will associate a gate agent $\alpha_{1}^g,\ldots,\alpha_{|C_1|}^g$ with valuation given by the gadget
\[ \mu_{a^g_i}(t)= \begin{cases*}
g_{T}(in^i_1,out^i) & if $T \in \{ \lor,\land\}$\\ 
g_{T}(in^i_1,in^i_2,out^i)& if $T= \neg$%
\end{cases*}
\]
where $in_1^i$, $in_2^i$ and $out^i$ are non-overlapping intervals that will be defined separately based on whether $a_i^g$ corresponds to an input gate, an output gate or an intermediate gate of the circuit. Again here, we drop the subscript corresponding to the circuit-encoder $C_1$ for notational convenience.

First for the $L$ input gates of the circuit $C_1^\mathrm{pre}$, for each input gate-agent $a_i^g$ with $i \in \{1,\ldots,L\}$, we have that $in_1^i,in_2^i \in R_1^{in}$ and $out^i \in R^{mid}_1$. Specifically, $in^i_1$ and $in_2^i$ are subintervals in $$\bigcup_{i=1}^{N}\left([\ell(R_1^{in})+2k,\ell(R_1^{in})+2(k-1)+1/20] \cup [\ell(R_1^{in})+2(k-1)+19/20,\ell(R_1^{in})+2k+1]\right)$$ 
where the output of the bit extractors of Subsection \ref{sec:sensors} lie (see Subsection \ref{sec:sensors} and Figure \ref{fig:sensor_agents}). In simple words, the intervals from which the binary outcomes of the bit extractors are read (via the position of the cuts) are the input intervals to the input gate gadgets of $C_1^\textrm{pre}$. The output intervals $out^i$ are subintervals of $R_1^{mid}$ that do not overlap with any other intervals in $R_1^{mid}$.\\

\noindent Next, for any agent $\alpha_i^g \in \{\alpha^g_{L+1}, \ldots, \alpha^g_{|C_1| -4}\}$, 
\begin{itemize}
	\item $in_1^i$ and $in_2^i$ are the intervals $out^{j_1}$ and $out^{j_2}$ of agents $\alpha_{j_1}^g, \alpha_{j_2}^g \in \mathcal{G}_1$, where $\alpha_{j_1}^g, \alpha_{j_2}^g$ correspond to the inputs $g_{j_1}$ and $g_{j_2}$ of gate $g_i$ in $C_1$
	\item $out^i$ is an interval of $R^{mid}_i$ which does not overlap with any interval $in_1^k, in_2^k$ or $out^k$, for any $k < i$.  
\end{itemize} 
The definitions for the intervals $in_1^i$ and $out^i$ of the agents $\alpha_i^g$ that correspond to NOT gates are very similar.\\

\noindent Finally, for the gate agents $\alpha_i^g \in \{\alpha^g_{|C_1| -3}, \ldots, \alpha^g_{|C_1|}\}$, corresponding to the output gates $g_{out}^{1},g_{out}^{2}, g_{out}^{-1}$ and $g_{out}^{-2}$ of $C_1$, we have that 
\begin{itemize}
	\item $in_1^i$ and $in_2^i$ are the intervals $out^{j_1}$ and $out^{j_2}$ of agents $\alpha_{j_1}^g, \alpha_{j_2}^g \in \mathcal{G}_1$, where $\alpha_{j_1}^g, \alpha_{j_2}^g$ correspond to the inputs $g_{j_1}$ and $g_{j_2}$ of gate $g_i$ in $C_1$.
	\item $out^i$ is one of the subintervals in which a $1/400$-fraction of the value of a coordinate-encoding agent lies, i.e. 
	\begin{eqnarray*}
		out^i \in &[\ell(R_{1}^{out})+1.25,\ell(R_1^{out})+1.75] \cup \\
		&[\ell(R_{1}^{out})+3.25,\ell(R_1^{out})+3.75] \cup \\
		&[\ell(R_{1}^{out})+5.25,\ell(R_1^{out})+5.75] \cup \\
		&[\ell(R_{1}^{out})+7.25,\ell(R_1^{out})+7.75]. \ \ 
	\end{eqnarray*}	
	Note that there are $4$ such subintervals and the output of each of the four gates $g_{i}$ for $i \in \{|C_1|-3,\ldots,|C_1|\}$ lies in one of those subintervals.
\end{itemize} 

\subsubsection*{The pre-processing circuit $C_1^{\textrm{pre}}$} 
As we mentioned earlier, the pre-processing circuit inputs the raw data extracted from the circuit encoders and outputs the binary expansion of the coordinate of the detected position $(x,y)$ in the c-e region. Then, the outcome of $C_1^{\textrm{pre}}$ is fed directly into the input gates of $C_1^{VT}$ and the information is propagated through the circuit, resulting in the assignment of a label for the encoded point. Here, we explain the operation of the pre-processing circuit in more detail.

For ease of reference, let $R_j^{1/8}=[(j-1/8),j/8]$, for $j=1,\ldots,8$. Assume first that there are two cuts $c(\alpha_1)$ and $c(\alpha_2)$ in the c-e region and observe that by Lemma \ref{lem:blanket}, they can not intersect consecutive regions $R_{j-1}^{1/8}$ and $R_j^{1/8}$ for any $j \in \{2,\ldots,8\}$. Let $R_{k}^{1/8}$ and $R_{\ell}^{1/8}$ be the regions intersected by the cuts, with $k < \ell$. Assume first that $k \neq 1$, i.e. the region intersected by the first cut is not $[0,1/8]$. Then, for regions $R_1^{1/8},\ldots, R_{k-1}^{1/8}$, the bit-extractors will output the same solid string $b_1=\ldots=b_{k-1}$, consisting of either only $1$'s or only $0$'s, while $R_k^{1/8}$ outputs a binary string $b_k$, with a non-trivial mix of $0$'s and $1$'s. Let $B_k$ represent the number corresponding to the string $b_k$. Then, the pre-processing circuit sets:
\begin{itemize}
	\item $x = \frac{k-1}{8} + \frac{B_k}{2^{n+8}}$, if $b_1$ is a string of $1$'s and
	\item $x = \frac{k-1}{8} + \frac{2^{n+8} - B_k}{2^{n+8}}$, if $b_1$ is a string of $0$'s.
\end{itemize}
In simple words, the circuit reads the raw data from the bit-encoders for regions $R_1,\ldots,R_{k-1}$ and depending on whether it is a string of $1$'s or a string of $0$'s, it interprets the bit extracted from $R_{k}$ as the distance from the left or the right endpoint of the interval respectively. In the event where $c(\alpha_1)$ intersects $R_1^{1/8}$, the information is obtained similarly from the bit-extractors of region $R_2^{1/8}$ (which have to output a solid string by Lemma \ref{lem:blanket}, as discussed previously). Specifically, the pre-processing circuit in that case sets:
\begin{itemize}
	\item $x = \frac{B_k}{2^{n+8}}$, if $b_2$ is a string of $0$'s and
	\item $x = \ \frac{2^{n+8} - B_k}{2^{n+8}}$, if $b_2$ is a string of $1$'s.
\end{itemize}
where $b_2$ is the binary string extracted from the bit-extractors of $R_2^{1/8}$.

For the $y$ coordinate, notice that in every region $R_{j}^{1/8}$ with $j \in (k,\ell)$, the bit-extractors will output the same solid string, consisting of either only $1$'s or only $0$'s. The fact that set is non-empty is again guaranteed by Lemma \ref{lem:blanket} according to the discussion above. Again, let $b_{\ell}$ be the binary string detected by bit-extractors of region $R_\ell^{1/8}$ and let $B_\ell$ be the corresponding number and let $b_{k+1}=\ldots=b_{\ell-1}$ be the solid strings outputted by the bit-extractors of regions $R_{k+1}^{1/8},\ldots,R_\ell^{1/8}$. The pre-processing circuit sets:
\begin{itemize}
	\item $y = \frac{\ell}{8}-\frac{B_\ell}{2^{n+8}}$, if $b_{\ell-1}$ is a string of $0$'s and
	\item $y = \frac{\ell}{8}- \frac{2^{n+8} - B_\ell}{2^{n+8}}$, if $b_{\ell-1}$ is a string of $1$'s.
\end{itemize}
Now consider the case when there is only one cut in the c-e region. In that case, in a solution to $I_{CH}$, the cut can not intersect regions $R_1^{1/8}$ or $R_8^{1/8}$ as that would violate Lemma \ref{lem:blanket} and there exist regions both to the left and to the right of the region $R_k^{1/8}$ intersected by the cut which provide solid strings as inputs to the pre-processing circuit. In that case, the pre-processing circuit sets $x=0$ and
\begin{itemize}
	\item $y = \frac{k}{8}-\frac{B_k}{2^{n+8}}$, if $b_{1}$ is a string of $0$'s and
	\item $y = \frac{k}{8}-\frac{2^{n+8} - B_k}{2^{n+8}}$, if $b_{1}$ is a string of $1$'s.
\end{itemize}
In the event that some cut lies exactly in the boundary of two consecutive regions $R_{j-1}^{1/8}$ and $R_j^{1/8}$ for some $j \in \{2,\ldots,8\}$, the only difference is that the circuit does not receive an input with a non-trivial mix of $1$'s and $0$'s for this cut, but rather a sequence of $k-1$ solid strings consisting of $z$'s followed by a solid string of $|1-z|$'s, extracted by the bit-extractors of region $R_k^{1/8}$. In that case, the circuit operates exactly as before, with the cut lying in region $R_k^{1/8}$. 

\subsubsection*{The main circuit $C_1^{\textrm{main}}$} 

The encoding of the circuit $C_1^\textrm{main}$ will consist of the encodings of two subcircuits, the labeling circuit $C_i^{VT}$ of \vt\ and the XOR operator circuit. 

The input to the circuit $C_1^{VT}$ is the binary representations of the coordinates of a grid point within a squarelet of $I_{VT}$ outputted by the pre-processing circuit. Recall that each squarelet contains a set of grid points with a resolution of $2^{7}$ in each dimension (see Figure \ref{fig:vt}). The output is a label $\{\pm 1, \pm 2\}$; in particular, the output gates of $C$ are $g_{out}^{1},g_{out}^{-1}, g_{out}^{2}$ and $g_{out}^{-2}$ and the following correspondence is syntactically enforced (Definition \ref{def:vt}): 
$$(1 \rightarrow 1110),\\
(-1 \rightarrow 0001),\\
(2 \rightarrow 0111), \\
(-2 \rightarrow 1000).$$

\noindent \textbf{The XOR operator circuit:} This circuit will perform a simple operation, using the raw data gathered from the bit-extractors and the ouputs of $C_1^{VT}$. Recall the definition of the solid strings as well as the regions $R_j^{1/8}$ for $j=1,\ldots,8$ from earlier, with $b_j$ denoting the raw data outputted by the bit-extractors of region $R_j^{1/8}$. Let $rep(b)$ denote any bit of a solid bit-string $b$ (since they are all the same) and let $\overline{rep(b)}$ denote its complement. 
First, a sub-circuit $C_i^\textrm{det}$ will perform the following operation on the raw data. 
\begin{itemize}
	\item If the output $b_1$ of the bit-extractors in $[0,1/8]$ is solid, then output a bit $z=\overline{rep(b_1)}$.
	\item If the output $b_1$ of the bit-extractors in $[0,1/8]$ is not solid, then output a bit $z=rep(b_2)$.
\end{itemize}
Note that by Lemma \ref{lem:blanket}, if $b_1$ is not solid, $b_2$ has to be solid.
Finally, if $z_1z_2z_3z_4$ are the outputs of $C_i^{VT}$, the XOR operator outputs a string $y_1y_2y_3y_4$ with $y_i = z_i \oplus z$, for $i=\{1,2,3,4\}$, where $\oplus$ denotes the Exclusive-OR operation. The effect of this operation is that either the circuit produces the same label or all the outputs are flipped and the circuit produces the opposite label, depending on the raw data. This will be particularly useful when arguing about the stray cut in Section \ref{sec:pf}.\\

\noindent For an illustration of the encoded circuits, see Figure \ref{fig:circuit}. \\

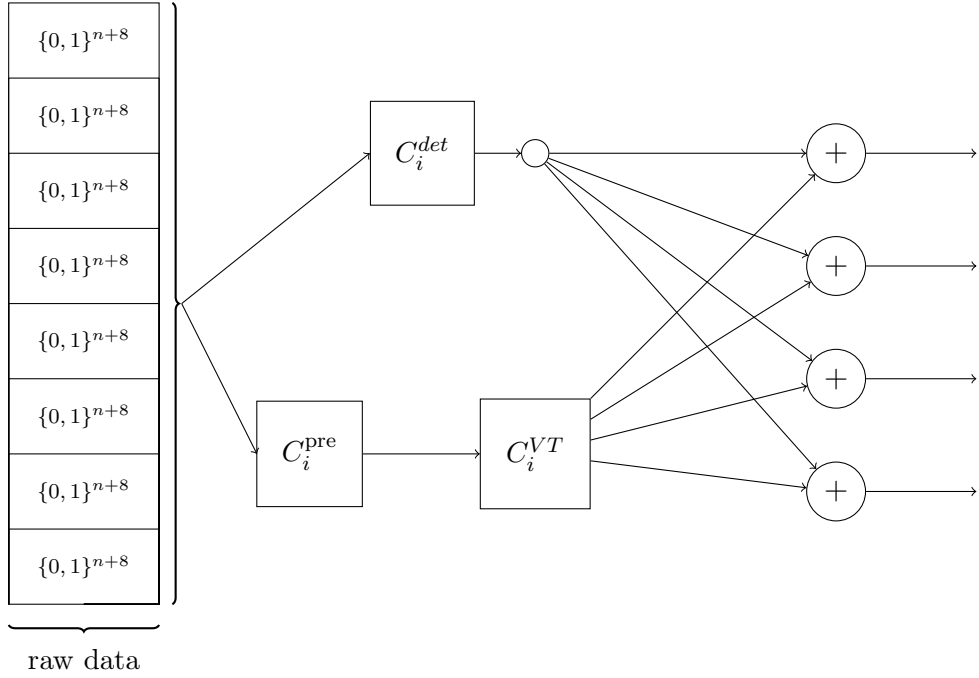
\begin{figure}
\centering
\begin{tikzpicture}[square/.style={regular polygon,regular polygon sides=4}]

\tikzstyle{XORgate} = [draw,circle]

\draw (-1,0) -- (1,0) -- (1,1) -- (-1,1) -- (-1,0);
\draw (0,0) -- (1,0) -- (1,2) -- (-1,2) -- (-1,0);
\draw (0,0) -- (1,0) -- (1,3) -- (-1,3) -- (-1,0);
\draw (0,0) -- (1,0) -- (1,4) -- (-1,4) -- (-1,0);
\draw (0,0) -- (1,0) -- (1,5) -- (-1,5) -- (-1,0);
\draw (0,0) -- (1,0) -- (1,6) -- (-1,6) -- (-1,0);
\draw (0,0) -- (1,0) -- (1,7) -- (-1,7) -- (-1,0);
\draw (0,0) -- (1,0) -- (1,8) -- (-1,8) -- (-1,0);

\node at (0,0.5) {\scriptsize{$\{0,1\}^{n+8}$}};
\node at (0,1.5) {\scriptsize{$\{0,1\}^{n+8}$}};
\node at (0,2.5) {\scriptsize{$\{0,1\}^{n+8}$}};
\node at (0,3.5) {\scriptsize{$\{0,1\}^{n+8}$}};
\node at (0,4.5) {\scriptsize{$\{0,1\}^{n+8}$}};
\node at (0,5.5) {\scriptsize{$\{0,1\}^{n+8}$}};
\node at (0,6.5) {\scriptsize{$\{0,1\}^{n+8}$}};
\node at (0,7.5) {\scriptsize{$\{0,1\}^{n+8}$}};

\node at (3,2) [square,,draw] (Cpre) {$C_{i}^{\mathrm{pre}}$};
\node at (6,2) [square,,draw] (Cvt) {$C_{i}^{VT}$};

\node at (4.5,6) [square,,draw] (Cx) {$C_{i}^{det}$};

\draw[->] (1.3,4) -- (2.3,2);
\draw[->] (1.3,4) -- (3.8,6);
\draw [->] (Cpre) edge (Cvt);

\node [XORgate] at (10,6) (XOR1) {\large +};
\node [XORgate] at (10,4.5) (XOR2) {\large +};
\node [XORgate] at (10,3) (XOR3) {\large +};
\node [XORgate] at (10,1.5) (XOR4) {\large +};

\node at (6,6) [circle,draw] (bit) {};

\draw [->] (Cx) edge (bit);
\draw [->] (Cvt) edge (XOR1);
\draw [->] (Cvt) edge (XOR2);
\draw [->] (Cvt) edge (XOR3);
\draw [->] (Cvt) edge (XOR4);

\draw [->] (bit) edge (XOR1);
\draw [->] (bit) edge (XOR2);
\draw [->] (bit) edge (XOR3);
\draw [->] (bit) edge (XOR4);

\node at (12,6) (out1) {};
\node at (12,4.5) (out2) {};
\node at (12,3) (out3) {};
\node at (12,1.5) (out4) {};

\draw [->] (XOR1) edge (out1);
\draw [->] (XOR2) edge (out2);
\draw [->] (XOR3) edge (out3);
\draw [->] (XOR4) edge (out4);

\draw [
thick,
decoration={
	mirror,
	brace,
	raise=5pt
},
decorate
] (-1,-0.1) -- (1,-0.1)
node [pos=0.5,anchor=north,yshift=-10pt] {raw data};

\draw [
thick,
decoration={
	mirror,
	brace,
	raise=5pt
},
decorate
] (1,0) -- (1,8)
node [pos=0.5,anchor=west,yshift=0pt] {};	

\end{tikzpicture}
\caption{\small{The encoding of $C_i$ consisting of the pre-processing circuit $C_i^\mathrm{pre}$, the labelling circuit $C_i^{VT}$ and the sub-circuit responsible for the XOR operation between the raw data and the output of $C_i^{VT}$.} }
\label{fig:circuit}
\end{figure}

\noindent It is not hard to see that our gadgets simulate the operation of the circuit. The proof of the following proposition basically follows from the construction and is included in Section \ref{sec:pf}.

\begin{restatable}{proposition}{gates}\label{obs:gateagents}
	The gate agents in $R_1$ simulate the circuit $C_1$, consisting of the pre-processing circuit $C_1^\mathrm{pre}$ and the copy $C_1^\textrm{main}$ of $C$.
\end{restatable}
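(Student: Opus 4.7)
The plan is to prove this by induction on the topological depth of gates in the combined circuit $C_1$, leveraging the single-gadget correctness established in Observation \ref{obs:gate_gadgets}. That observation tells us that each individual boolean gate gadget ($g_{\neg}$, $g_{\land}$, $g_{\lor}$), viewed in isolation, correctly simulates its intended boolean operation: whenever the cuts in its input intervals sit in the ``clean'' slots $[\ell(in),\ell(in)+\epsilon]$ or $[r(in)-\epsilon,r(in)]$, the valuation of the gate agent forces the cut in the output interval into the slot encoding the correct output bit.

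For the base case, consider the input gates of $C_1^{\mathrm{pre}}$. Their input intervals $in_1^i, in_2^i$ are by construction identified with the output slots of the bit-extractors in $R_1^{in}$, and Proposition \ref{prop:allbitextractors} ensures that these bit-extractor cuts sit in one of the two clean slots corresponding to the bits of the raw data. Therefore the input gate agents receive well-defined boolean inputs and, by Observation \ref{obs:gate_gadgets}, place their own output cuts into the correct slots inside $R_1^{mid}$.

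For the inductive step, consider a gate agent $\alpha_i^g$ at depth $d+1$. By construction, each of its input intervals is defined to be identical to an output interval of some predecessor gate at depth $\le d$. Thus the single cut lying in the shared interval simultaneously fulfils two roles, and the inductive hypothesis places it in the slot encoding the correct boolean output of the predecessor; hence it is read as the correct input by the current gate, whose own valuation then forces its output cut into the correct output slot. The four output gate agents of $C_1^{\mathrm{main}}$ are handled identically, except that their $out^i$ intervals sit within the blocks where the coordinate-encoding agents also have positive value — this is the mechanism by which the circuit's outputs are propagated back to the c-e region.

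The main obstacle is bookkeeping the $\lplus/\lminus$ parity to the left of each cut, since Observation \ref{obs:gate_gadgets} assumed a specific orientation at each input/output slot. Because labels alternate from cut to cut, the parity at a particular slot depends on how many cuts lie to its left, so parity gadgets (introduced in Section \ref{sec:gadgets}) must be inserted between successive gate blocks so as to restore the intended $\lplus$-on-the-left convention at every input and output slot; one must check that this is done consistently throughout the construction, especially where several gate blocks share the same interval as inputs/outputs. A secondary concern is that the $\epsilon = 2^{-2n}$ error tolerance does not accumulate badly through a long gate chain, but this is comfortably absorbed by the fact that the bit-detection block heights ($6.25$ to $8.75$) dominate the agents' remaining valuations by orders of magnitude, so each forcing argument has slack far exceeding $\epsilon$.
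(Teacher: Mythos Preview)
Your proposal is correct and takes essentially the same approach as the paper: both rely on (i) the bit-extractors producing clean boolean inputs (Propositions~\ref{obs:bitextractors} and~\ref{prop:allbitextractors}), (ii) the gate-gadget correctness of Observation~\ref{obs:gate_gadgets}, and (iii) the wiring of each gate's input intervals to the output intervals of its predecessors. The paper's proof is a brief three-bullet sketch; your explicit induction on topological depth, together with the remarks on parity gadgets and the slack in the bit-detection blocks, is simply a more careful rendering of the same argument.
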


\subsubsection{Construction of the circuit-encoders $C_2,\ldots,C_{\y}$.}\label{sec:restofcircuits}
In the previous subsections, we constructed the circuit-encoder $C_1$ corresponding to the first copy of the labelling circuit; here we explain how to construct the remaining circuit-encoders relatively to the agents of $C_1$. Recall the definition of the correspondence function $h_{\mathcal{R}_A,\mathcal{R}_B}: \mathcal{R}_A \longrightarrow \mathcal{R}_B$, from Section \ref{sec:sensors}: for $t \in \mathcal{R}_A$, let $h_{\mathcal{R}_A,\mathcal{R}_B} = t-\ell(\mathcal{R}_A)+\ell(\mathcal{R}_B)$. In other words, any two points $x \in \mathcal{R}_A$ and $y \in \mathcal{R}_B$ such that $x-\ell(\mathcal{R}_i) = y-\ell(\mathcal{R}_B)$ are \emph{corresponding points} with regard to the two sub-regions.

We construct the circuit-encoder $C_i$ as follows. For $j=1,\ldots, |A_i|$, let $a_{ij}$ denote the $j$'th circuit-encoding agent in $\mathcal{A}_i$. Then for all $i = 2, \ldots, \y$, for all $j \in \{1,\ldots, |A_i|\}$,
\begin{itemize}
	\item
	let $\mu_{a_{ij}}(x)=\mu_{a_{1j}}(y)$ if $x \in R_i$, $y \in R_1$ and
	$y = h_{R_1,R_i}(x)$.
	\item
	let $\mu_{a_{ij}}((x+(i-1)\cdot 2^{-(n+11)})\mod(1))=\mu_{a_{1j}}(x)$, if $x \in [0,1]$, i.e. if $x$ is in the c-e region.
\end{itemize}
The second of these items says that in the c-e region, the valuation function of the agents that make up $C_i$ differ from those of
$C_1$ by having been shifted to the left by $2^{-(n+11)}\cdot (i-1)$, where
this shift ``wraps around'' in the event that we shift below $1$
(the left-hand point of the c-e region). In other respects, $C_i$ is an exact copy of $C_1$, save that $C_i$'s
internal circuitry lies in $R_i$ rather than $R_1$.\\

\noindent \textbf{The virtual cuts:} For the circuit encoders $C_2,\ldots,C_{100}$ it will often be useful to think of the following alternative interpretation of their inputs. 
Consider the two cuts (the case of one cut is similar) in the c-e region, at positions $c^1_1,c^1_2$, encoding a point $(x,y)$ of the domain (also see Section \ref{sec:sol}). Since $C_2$ is a version
of $C_1$ where all the values in the c-e region are shifted by $2^{-(n+11)}$ to the left (wrapping around for some valuations), we can equivalently think of the output
of $C_2$ as \emph{what the output of $C_1$ would have been if the cuts had been moved slightly to the right, i.e. to $c^2_1=c_1+2^{-(n+11)}$ and $c^2_2=c_2+2^{-(n+11)}$ respectively.}
In other words, for each circuit-encoder $C_i$, we can think of its output as the output of $C_1$ if the cut were placed at $c^i_1$ and $c^i_2$. We will refer to such cuts as \emph{virtual cuts}.

\subsection{Recovery of a solution of $I_{VT}$ from a solution of $I_{CH}$}\label{sec:sol}

In this subsection, we explain how to obtain a solution to $I_{VT}$ from a solution to $I_{CH}$. Recall from Section \ref{sec:tucker-reds} that a solution to $I_{VT}$ is a sequence of points $(x_1,y_1),(x_2,y_2),\ldots,(x_{100},y_{100})$ of the (discrete) domain, lying on a line segment such that two sets of at least $10$ of these points each have coordinates in squarelets of equal and opposite labels. Specifically, for each point $(x_i,y_i)$ on the segment, with $i<\y$, it holds that $x_{i+1}=x_i+2^{-(n+11)}$ and $y_{i+1}=y_i-2^{-(n+11)}$ (See Figure \ref{fig:vt}).

Now consider a solution $\mathcal{H}$ to $I_{CH}$. As we will establish in Section \ref{sec:pf}, in $\mathcal{H}$ there must exist one or two cuts situated in the coordinate-encoding region $[0,1]$. \begin{itemize}
	\item If there is only one cut in $[0,1]$, situated at $z\in[0,1]$, let $x=0$ and $y=1-z$ be the coordinates of a point on the domain.
\item If there are 2 cuts in $[0,1]$, situated at $z,z'$, let $x=z$ and $y=1-z'$ be the coordinates of a point in the domain.
\end{itemize}
If we use $n+11$ bits of precision to represent the coordinates $(x,y)$ of the point that correspond to the solution $\mathcal{H}$ above, we end up with the closest point $p$ on the discrete domain to $(x,y)$. Then, we can obtain a solution to $I_{VT}$ by generating a sequence of points $p_{1},p_{2},\ldots,p_{\y}$ by setting $p_1=p$ and $p_{i}=(x_{i-1}+2^{-(n+11)},y_{i-1}-2^{-(n+11)})$ for $2\leq i \leq \y$.

For an illustration of the mapping between the solutions to the two problems, see Figure \ref{fig:twosolutions}.

\tikzset{decorate sep/.style 2 args=
	{decorate,decoration={shape backgrounds,shape=circle,shape size=#1,shape sep=#2}}}

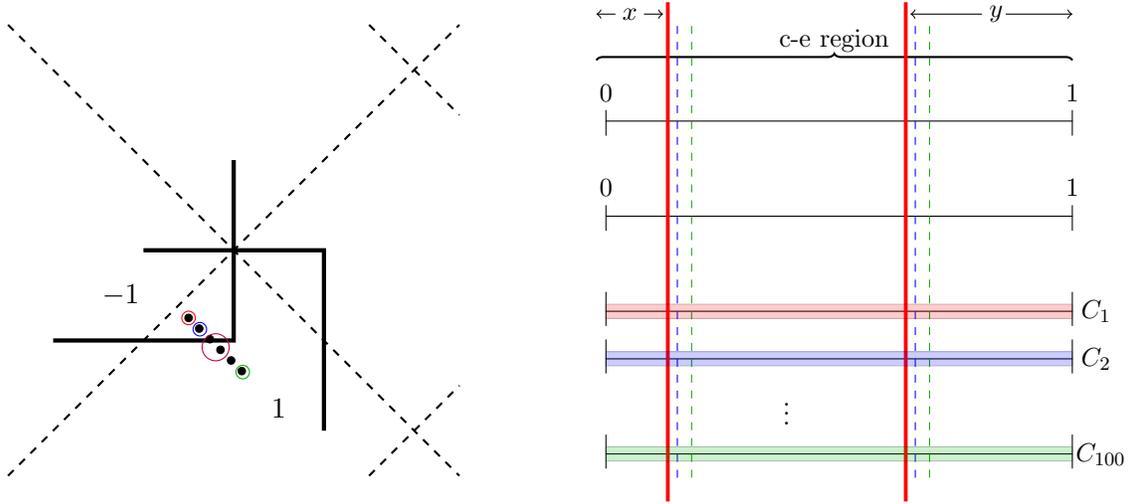
\begin{figure}
	\begin{minipage}{0.5\textwidth}
	\center{
\begin{tikzpicture}[scale=0.6]

\draw[thick,dashed](7,9)--(17,-1);
\draw[thick,dashed](7,-1)--(17,9);
\draw[thick,dashed](15,-1)--(17,1);
\draw[thick,dashed](15,9)--(17,7);

\draw[ultra thick](10,4)--(14,4)--(14,0);
\draw[ultra thick](8,2)--(12,2)--(12,6);



\draw[decorate sep={1mm}{2mm},fill] (11,2.5) -- (12.2,1.3);
\draw[color=red] (11,2.5) circle (0.15cm);
\draw[color=blue] (11.25,2.25) circle (0.15cm);
\draw[color=darkgreen] (12.2,1.3) circle (0.15cm);

\draw[color=purple] (11.6,1.85) circle (0.3cm);

\node at (9.5,3) {$-1$};
\node at (13,0.5) {$1$};
\end{tikzpicture}
}
\end{minipage}
\begin{minipage}{0.5\textwidth}
	\center{
	\begin{tikzpicture}[scale=0.9, transform shape]
	\node (a_1) at (0pt,0pt) {}; 
	\node (a_2) at (204pt, 0pt) {};
	\draw (a_1)--(a_2);
	
	\node (b_1) at (4pt, 12pt) {$0$};
	\node (b_2) at (4pt, -10pt) {};	
	\draw(b_1) -- (b_2);
	
	\node (b_1) at (200pt, 12pt) {$1$};
	\node (b_2) at (200pt, -10pt) {};	
	\draw(b_1) -- (b_2);
	
	\node (a_1) at (0pt,-40pt) {}; 
	\node (a_2) at (204pt, -40pt) {};
	\draw (a_1)--(a_2);
	
	\node (b_1) at (4pt, -28pt) {$0$};
	\node (b_2) at (4pt, -50pt) {};	
	\draw(b_1) -- (b_2);
	
	\node (b_1) at (200pt, -28pt) {$1$};
	\node (b_2) at (200pt, -50pt) {};	
	\draw(b_1) -- (b_2);

		\node (a_1) at (0pt,-80pt) {}; 
	\node (a_2) at (204pt, -80pt) {};
	\draw (a_1)--(a_2);
	
	\node (b_1) at (4pt, -68pt) {};
	\node (b_2) at (4pt, -90pt) {};	
	\draw(b_1) -- (b_2);
	
	\node (b_1) at (200pt, -68pt) {};
	\node (b_2) at (200pt, -90pt) {};	
	\draw(b_1) -- (b_2);
	
	\node (a_1) at (0pt,-100pt) {}; 
	\node (a_2) at (204pt, -100pt) {};
	\draw (a_1)--(a_2);
	
	\node (b_1) at (4pt, -88pt) {};
	\node (b_2) at (4pt, -110pt) {};	
	\draw(b_1) -- (b_2);
	
	\node (b_1) at (200pt, -88pt) {};
	\node (b_2) at (200pt, -110pt) {};	
	\draw(b_1) -- (b_2);
	
		\node (a_1) at (0pt,-140pt) {}; 
	\node (a_2) at (204pt, -140pt) {};
	\draw (a_1)--(a_2);
	
	\node (b_1) at (4pt, -128pt) {};
	\node (b_2) at (4pt, -150pt) {};	
	\draw(b_1) -- (b_2);
	
	\node (b_1) at (200pt, -128pt) {};
	\node (b_2) at (200pt, -150pt) {};	
	\draw(b_1) -- (b_2);

	\node at (80pt, -120pt) {$\vdots$};
	
	\node at (210pt, -80pt) {$C_1$};
	\node at (210pt, -100pt) {$C_2$};
	
	\node at (212pt, -140pt) {$C_{100}$};
	
	\draw [
	thick,
	decoration={
		brace,
		raise=5pt
	},
	decorate
	] (0pt,20pt) -- (200pt,20pt)
	node [pos=0.5,anchor=south,yshift=5pt] {c-e region};
	
%

\draw[<-] (0pt,45pt) -- (8pt,45pt);
\draw[->] (20pt,45pt) -- (28pt,45pt);
\node at (14pt, 45pt) {$x$};

\draw[<-] (132pt,45pt) -- (162pt,45pt);
\draw[->] (172pt,45pt) -- (200pt,45pt);
\node at (168pt, 45pt) {$y$};

	\draw[color=red,line width=0.5mm] (30pt,50pt) -- (30pt,-160pt);
	\draw[color=red,line width=0.5mm] (130pt,50pt) -- (130pt,-160pt);
	
	\draw[dashed,color=blue] (34pt,40pt)--(34pt,-150pt);
	\draw[dashed,color=blue] (134pt,40pt)--(134pt,-150pt);
	
		\draw[dashed,color=darkgreen] (40pt,40pt)--(40pt,-150pt);
	\draw[dashed,color=darkgreen] (140pt,40pt)--(140pt,-150pt);
	
	\draw[fill=red,opacity=0.2] (4pt, -83pt) rectangle (200pt,-77pt);
	
	\draw[fill=blue,opacity=0.2] (4pt, -103pt) rectangle (200pt,-97pt);
	
	\draw[fill=darkgreen,opacity=0.2] (4pt, -143pt) rectangle (200pt,-137pt);
	
	\end{tikzpicture}
}
\end{minipage}
\caption{\small{How solutions to $I_{CH}$ correspond to solutions of $I_{VT}$. The cuts of the solution of $I_{CH}$ are depicted by red thick lines. The corresponding point on the domain of $I_{VT}$ is circled by a red line. The bit-extractors of $C_1$ will detect the position of the point circled in red and will output its label (here $-1$) as a discrepancy in favour of $\lminus$ for Agent $\alpha_1$ of the c-e agents. The bit-extractors of $C_2$, since they are shifted by $2^{-(n+11)}$ to the left, will interpret the position of the cut as if it was shifted by $2^{-(n+11)}$ to the right. In other words, they detect the virtual cuts which are depicted by the blue thin dotted lines, and will thus encode the coordinates of the point circled by blue on the left, which is $(x+2^{-(n+11)},y-2^{-(n+11)})$. The label of the point (here $-1$) will be outputted by the circuit in the form of a discrepancy in favour of $\lminus$ for Agent $\alpha_1$ of the c-e agents. Circuit $C_{100}$ will interpret the cut as being shifted by $99\cdot 2^{-(n+11)}$ to the right, with the corresponding set of virtual cuts denoted by the dotted thin green lines in the figure. The encoded point will be the point of the sequence circled by green and its label (here $1$) will be outputted by the circuit in the form of a discrepancy in favour of $\lplus$ for Agent $\alpha_1$ of the c-e agents. The coordinates of all the points in the sequence are extracted in a similar way and contribute to the volumes of $\lplus$ and $\lminus$ that are shown to Agent $\alpha_1$. Note that the points near the boundary of the two neighbouring tiles (circled by a purple line) are arbitrarily labelled (i.e. they don't receive good inputs) but there are at most $4$ of them and therefore, they can't introduce any artificial solutions.}}
\label{fig:twosolutions}
\end{figure}

\subsection{The feedback mechanism to the c-e agents}\label{sec:feedback}

Now that we have explained what the construction of $I_{CH}$ looks like, we can explain how the c-e agents receive feedback from the circuit. The agents will only be satisfied with the partition of labels if the line segment of points $(x,y)$ crosses a boundary of tiles with same and opposite labels and there are sufficiently many points indisputably labelled with each one of those labels. 

First note that for a point $(x_1,y_1)$ of the domain recovered as described in Section \ref{sec:sol}, each point $(x_i,y_i)$ in the sequence of $100$ points will be labelled by a different copy of the circuit $C_i$. Consider such a copy and let $C_i(x_i,y_i)$ be its output; recall that $C_i(x_i,y_i) \in \{1110,0001,0111,1000\}$ (which can be syntactically enforced) and furthermore, we have the following correspondence of labels to outputs:
$$(1 \rightarrow 1110),
(-1 \rightarrow 0001),
(2 \rightarrow 0111) \text{ and }  
(-2 \rightarrow 1000).$$

\noindent Assume that the $C_i(x_i,y_i)=j$, for some $j \in \{1110,0001,0111,1000\}$ and let $\lambda_j \in \{1,-1,2,-2\}$ be the corresponding label. For each of the c-e agents $\alpha_1$ and $\alpha_2$, the contribution to $\lplus$ from its valuation on $R_{out}^i$ is\footnote{Assuming that $C_i$ behaves as expected, i.e. it receives good inputs and is reliable - see Section \ref{sec:pf} for the definitions.} \\

\begin{minipage}{0.47\textwidth}
\[
\text{For } \alpha_1, 
\begin{cases}
2/400, & \text{if } C_i(x_i,y_i)=1110 \\
-2/400, & \text{if } C_i(x_i,y_i)=0001 \\
0, & \text{otherwise.}
\end{cases}
\]
\end{minipage}
\begin{minipage}{0.47\textwidth}
\[
\text{For } \alpha_2, 
\begin{cases}
2/400, & \text{if } C_i(x_i,y_i)=0111 \\
-2/400, & \text{if } C_i(x_i,y_i)=1000 \\
0, & \text{otherwise.}
\end{cases}
\]
\end{minipage}

\bigskip
\noindent where a contribution of $-\mu$ to $\lplus$ here denotes a contribution of $\mu$ to $\lminus$. To see this, note that a set of the $4$ cuts corresponding to the output $1110$ of $C_i$ would lie respectively:
\begin{itemize}
	\item To the right of the leftmost valuation block of Agent $\alpha_1$ in $R_i^{out}$, thus labelling the whole block $\lplus$.
	\item To the right of the rightmost valuation block of Agent $\alpha_1$ in $R_i^{out}$, thus labelling the whole block $\lplus$.
	\item To the right of the leftmost valuation block of Agent $\alpha_2$ in $R_i^{out}$, thus labelling the whole block $\lminus$.
	\item To the left of the leftmost valuation block of Agent $\alpha_2$ in $R_i^{out}$, thus labelling the whole block $\lplus$.
\end{itemize}
See Figure \ref{fig:ce-agents} for an illustration. Since all of these valuation blocks have volume $1/400$ each, the total contribution to $\lplus$ from an output of $1110$ (and therefore a label of $1$) is $2/400$ for Agent $\alpha_1$, whereas for Agent $\alpha_2$, the total contribution is $0$ and the sub-partition restricted to $R_i^{out}$ is balanced. The argument for the remaining output/labels is very similar. \\

\noindent \textbf{The ``wrap-around'' labels:} In some cases, the circuit-encoders $C_i$ will detect points close to the boundary of the triangular domain of \vt\, in which case the sequence of points $1,\ldots,100$ extracted from the bit-extractors of the circuits will be part of a ``wrap-around'' line segment, i.e. a line segment that starts with some point with $y$ close to zero and ends with a point with $x$ close to $0$ (i.e. it crosses the bottom boundary of the triangle region). In this case, Definition \ref{def:vt} requires that the ``equal-and-opposite'' property holds after we flip the labels of the wrapped-around subsequence.

In terms of $I_{CH}$, this situation occurs when (i) either there are two cuts $c_1$ and $c_2$ in the c-e region and $c_2$ sits very close to $1$ or (ii) when there is only one cut in the c-e region (which can be thought of as another cut being situated exactly at $0$). In either case, since each circuit encoder $C_i$ detects a virtual cut, (which is a shifted version of the cut detected by $C_{i-1}$ as explained earlier), this sequence of points will be correctly generated by the reduction. For example, where there is only one cut $c$ in the c-e region, while the bit-extractors of $C_1$ only ``see'' that cut, the bit-extractors of each circuit-encoder $C_2, \ldots, C_{100}$ ``see'' another cut, situated at position $i \cdot 2^{-n-11}$. This is because the ``wrapped-around'' valuation of the first $n+8$ c-e identical sensors of $C_i$  ``sees'' both $\lplus$ (on the left side of $c$) and $\lminus$ (on the right side of $c$), and therefore detect that a cut intersects the region - this is the virtual cut $c_v$ detected by $\mathcal{S}_i$ (similarly for the case of two cuts).
 
Interpreting the virtual cut $c_v$ as the actual cut, the circuit-encoder $C_i$ now ``sees'' the label $\lminus$ on the left-hand side of $c_v$ and $\lplus$ on the right-hand side. Intuitively, $C_i$ interprets the input as if the left endpoint of the region was $1-i\cdot 2^{-n-11}$ (i.e. as if we cut the c-e region at the point where the wrap-around value starts and glued the cut piece to the end of the c-e region), with the sequence of labels starting with $\lminus$. The pre-processing circuit $C_i^\textrm{pre}$ ensures that the correct point of the wrapped-around subsequence is encoded, and the XOR operator circuit of $C_i$ flips the label of this point, as desired by Definition \ref{def:vt}.\footnote{This is similar to the operation of the ``double-negative effect'' (see Lemma \ref{lem:straycut}), but without the flip of the label sequence introduced by the stray cut, so it is rather a ``single negative'', flipping the label of the outcome, as intended.}

\section{Proof of the Reduction}\label{sec:pf}

In this section, we prove the correctness of the reduction, i.e. that given a solution $\mathcal{H}$ of $\epsilon$-\ch, we can recover a solution to \vt\ (and therefore to \tucker, given our results in Section \ref{sec:tucker-reds}.)
The main result of this section is the following:
\begin{theorem}\label{thm:main}
\vt\ is polynomial-time reducible to \ch.
\end{theorem}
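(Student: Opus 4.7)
The plan is to analyze any consensus-halving solution $\mathcal{H}$ of $I_{CH}$ and recover from it the sequence of $100$ points required by Definition~\ref{def:vt}. I will proceed through a chain of structural lemmas about $\mathcal{H}$, then handle the possibility of stray cuts, and finally verify the ``equal-and-opposite'' condition.

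First I would establish the \emph{cut-bookkeeping} step: the total number of cuts equals the number of agents, and because each agent's valuation is almost entirely concentrated in one dedicated interval $R(\alpha)$ (e.g. the output-region block of a gate agent, or the bit-detection gadget of a sensor agent) and the tolerance $\epsilon=2^{-2n}$ is tiny compared to the volumes of those blocks, each non-c-e agent must have its own cut in the interior of $R(\alpha)$. This leaves exactly one or two cuts free to lie somewhere, and since the two coordinate-encoding agents have all their value in $\bigcup R_i^{out}$, their cuts are the ``free'' ones. At most one of them can be a \emph{stray} cut lying outside $[0,1]$, because a second stray cut would leave some $R(\alpha)$ without a cut and force a gross $\epsilon$-violation; so at least one of $c(\alpha_1),c(\alpha_2)$ lies in the c-e region.

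Next I would prove Lemma~\ref{lem:blanket} (the blanket-sensor lemma): the mass of the blanket sensor on $R_1^{out}$ is dominated by the two dense end-blocks of volume $\approx 0.425$ each, so any $\epsilon$-balanced labelling restricts its cut to the narrow middle slot; this forces $|\lplus^{(c\text{-}e)}-\lminus^{(c\text{-}e)}|\le 1/4$, which in particular prevents two c-e cuts from falling in the same length-$1/8$ subinterval. Using this, Proposition~\ref{obs:bitextractors} follows by a direct check: within an interval of length $1/8$ crossed by at most one cut, the $n+8$ c-e-identical bit-extractors produce exactly the bits of the binary expansion of the cut's position (to $n+8$ bits), because each successive extractor has a bit-detection gadget of half the magnitude of the previous one, situated so that its output is forced by the residual $\lplus/\lminus$ discrepancy after earlier bits are accounted for. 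Together with the shifted copies covering the whole c-e region, this gives Proposition~\ref{prop:allbitextractors}. I would then verify Proposition~\ref{obs:gateagents} by invoking Observation~\ref{obs:gate_gadgets} gate by gate: each gate agent's cut must sit on the appropriate side of its output detector to absorb the imbalance inherited from its input intervals, so the gate agents correctly simulate $C_i^{\mathrm{pre}}$ composed with $C_i^{\mathrm{main}}=C_i^{VT}\circ C_i^{\mathrm{det}}$.

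The heart of the proof, and the step I expect to be the main obstacle, is the \textbf{stray-cut / robustness} argument. I need to argue that even if one c-e cut strays into some $R_j$, it corrupts only the single circuit $C_j$, while the other $99$ copies $\{C_i\}_{i\ne j}$ are unaffected. I would do this by observing that the only way for a stray cut to feed a wrong input to a gate agent is to cross one of its input intervals, and by design those input intervals live entirely within $R_j$; hence the sensors and gate gadgets of every $C_i$ with $i\ne j$ see a single cut in the c-e region and behave exactly as in the clean case, producing the label the circuit $C$ would assign to the corresponding virtual point $(x_i,y_i)=(x_1+(i-1)2^{-(n+11)},y_1-(i-1)2^{-(n+11)})$. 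If the stray cut sits inside the c-e region of one of the virtual shifts (i.e. the sequence ``wraps around''), the XOR-operator sub-circuit $C_i^{\mathrm{det}}$ flips the outputs exactly on the wrapped-around suffix, matching the negation required by Definition~\ref{def:vt}.

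Finally I would translate the feedback condition on $\alpha_1,\alpha_2$ into the \vt\ solution condition. Each circuit $C_i$ contributes to the $\lplus/\lminus$ discrepancy seen by $\alpha_1$ a value of $+2/400$, $-2/400$, or $0$ according as its output encodes $1$, $-1$, or $\pm 2$, and symmetrically for $\alpha_2$ with $2,-2$. For both coordinate-encoding agents to be $\epsilon$-satisfied (with $\epsilon=2^{-2n}\ll 1/400$), the net signed counts over the $100$ circuits must essentially cancel: the number of circuits outputting $1$ must equal those outputting $-1$, and similarly $2$ must balance $-2$. Discounting at most one circuit corrupted by a stray cut and at most $O(1)$ circuits whose virtual point lies on a subregion boundary (where Definition~\ref{def:vt}(\ref{exception}) allows arbitrary labelling), we still get at least $10$ circuits of some nonzero label $\ell$ and at least $10$ of label $-\ell$. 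Reading off the coordinates $(x_i,y_i)$ of the corresponding virtual cuts, with the wrap-around convention above, yields a \vt\ solution. Combining this with Theorem~\ref{thm:tuckerppa} and Lemma~\ref{lem:tuckertomstucker} gives the reduction claimed in Theorem~\ref{thm:main}.
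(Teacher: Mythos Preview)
Your plan tracks the paper's architecture closely and most pieces (cut bookkeeping, blanket sensor, bit extraction, gate simulation, final counting) are on the right track. There is, however, a genuine gap in the stray-cut step.

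You claim that a stray cut landing in $R_j$ ``corrupts only the single circuit $C_j$, while the other $99$ copies $\{C_i\}_{i\ne j}$ are unaffected,'' on the grounds that the stray cut can only feed a wrong input to a gate by crossing one of its input intervals. This overlooks the \emph{parity flip}. The stray cut alters the alternating $\lplus/\lminus$ sequence for everything to its right, so for every $i>j$ all of $R_i$ has its labels swapped relative to the no-stray-cut situation. Each bit-extractor of $C_i$ still sees the same single cut in $[0,1]$, but its bit-detection gadget in $R_i^{in}$ now has the opposite parity, so its output cut lands on the other thin block and the extracted bit is complemented. Thus every raw-data string fed to $C_i^{\mathrm{pre}}$ is bitwise complemented, and the circuits with $i>j$ do \emph{not} behave as in the clean case.

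The paper closes this with the ``double-negative'' Lemma~\ref{lem:straycut}. The pre-processing circuit is designed so that complementing all raw-data strings leaves the decoded position $(x,y)$ unchanged: the solid reference strings and the position-encoding string flip together, and $C_i^{\mathrm{pre}}$ reads the latter relative to the former, so the two flips cancel. Hence $C_i^{VT}$ outputs the same four bits as before. But the parity in $R_i^{out}$ is \emph{also} flipped, so feeding that same label back would show the c-e agents the opposite discrepancy. This is precisely why $C_i^{\mathrm{det}}$ exists: it detects the complemented solid string and XORs the output bits, negating the label; combined with the flipped output parity, the net feedback to $\alpha_1,\alpha_2$ is unchanged. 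You invoke $C_i^{\mathrm{det}}$ only for the wrap-around case, but its primary purpose is this stray-cut parity correction; without this argument your claim that the remaining circuits are reliable fails.

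One smaller correction: your reason that at most one c-e cut can stray (``a second stray cut would leave some $R(\alpha)$ without a cut'') is not valid, since both c-e cuts are genuinely unconstrained by the $R(\alpha)$ counting. The actual argument (Lemma~\ref{obs:cuts}) is that if neither c-e cut lies in $[0,1]$ then the c-e region is monochromatic, the blanket sensors become active in almost every copy, and their override forces roughly $3/4$ of each c-e agent's value to one label, contradicting $\epsilon$-balance.
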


We state the following useful definition regarding the copies of the circuit $C$.

\begin{definition}\label{def:goodinputs}
Let $C_i$,  be one of the 100 copies of the circuit $C$ in an instance $I_{CH}$
of $\epsilon$-\ch\ as constructed in Section~\ref{sec:consensus}.
We say that $C_i$ {\em receives good inputs} with respect to
positions $(x,y)$ of the c-e cuts, if $C_i$
receives valid boolean-encoding inputs extracted from $x$ and $y$.
\end{definition}
For example, in the case of $i=1$, $C_1$ receives good inputs provided
that the point $(x,y)$ of the domain of \vt\ is not too close to the boundary of a sub-region.
A simple observation is the following.

\begin{observation}\label{obs:onlyfourbad}
In $I_{CH}$, at most 4 copies of $C$ do not receive good inputs.
\end{observation}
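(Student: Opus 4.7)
The plan is to count how many of the 100 sequence points $(x_i,y_i)$ can be close enough to a subregion boundary of the \vt\ domain to invoke the ``unrestricted output'' exception in item (\ref{exception}) of Definition \ref{def:vt}. By the discussion in Section \ref{sec:restofcircuits}, copy $C_i$ effectively processes the point $(x_i,y_i)$ obtained from the c-e cuts shifted by $(i-1)\cdot 2^{-(n+11)}$ (via the virtual-cut mechanism), and $C_i$ receives good inputs precisely when $(x_i,y_i)$ is not adjacent to any boundary between two subregions. So the task reduces to bounding the number of points of the sequence that lie within one precision step of a subregion boundary.

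First I would observe that consecutive points of the sequence differ by exactly $2^{-(n+11)}$ in each coordinate (modulo wrap-around across the triangular domain), so the total variation of the $x$-coordinate along the sequence is $99 \cdot 2^{-(n+11)}$, and likewise for the $y$-coordinate. By Definition \ref{def:subregion}, each subregion is a square of edge $\tfrac{1}{16}\cdot 2^{-n} = 2^{-(n+4)}$, so subregion boundaries form a grid of lines at coordinates that are multiples of $2^{-(n+4)}$. Since $99 \cdot 2^{-(n+11)} < 2^{7}\cdot 2^{-(n+11)} = 2^{-(n+4)}$, the projection of the sequence onto each axis spans strictly less than one full subregion width. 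Hence the sequence crosses at most one vertical subregion boundary and at most one horizontal subregion boundary, for at most $2$ crossings in total.

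Next I would argue that each crossing disqualifies at most $2$ sequence points. If the sequence crosses the vertical boundary $x=k\cdot 2^{-(n+4)}$ between points $(x_i,y_i)$ and $(x_{i+1},y_{i+1})$, then both of these points lie within $2^{-(n+11)}$ of that boundary and are therefore adjacent to it in the sense of item (\ref{exception}); every other point of the sequence is at distance at least $2\cdot 2^{-(n+11)}$ from this particular boundary, and at distance greater than $2^{-(n+4)} - 99\cdot 2^{-(n+11)} > 0$ from every non-crossed boundary. The same reasoning applies to a horizontal crossing. The wrap-around case is handled identically by applying the displacement bound separately to the two pieces of the sequence on either side of the wrap.

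Multiplying at most $2$ crossings by at most $2$ bad points per crossing gives at most $4$ sequence points at which the exception of Definition \ref{def:vt} can be triggered, hence at most $4$ copies of $C$ that fail to receive good inputs. The only (mild) obstacle is purely bookkeeping: verifying that the slack $99 < 128$ is enough to force just one crossing per axis independently, and checking that the ``adjacent'' clause really does cover exactly the pair of points straddling each crossing rather than some larger neighbourhood.
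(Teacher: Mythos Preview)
Your proposal is correct and follows essentially the same approach as the paper: the paper's justification (here and earlier in the proof that \vt\ is \ppa-complete) is precisely that the 100-point sequence spans less than one subregion width in each coordinate and so crosses at most two subregion boundaries, with at most two points adjacent to each. You have simply made the arithmetic $99\cdot 2^{-(n+11)} < 2^{-(n+4)}$ explicit where the paper leaves it implicit under ``density of the domain''.
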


\noindent The observation is based on the density of the domain of \vt. Given the resolution used for the grid points within the square regions, there can be at most $4$ points that are very close to the boundary of a subregion. 

Next, we prove Lemma \ref{lem:blanket} (firstly stated in Section \ref{sec:sensors}), which establishes that the blanket-sensor agent detects large discrepancies in the total length of the $\lplus$ and $\lminus$ intervals.

\sensor*

\begin{proof}
	Assume that in the c-e region, there is a discrepancy of labels which is larger than $1/4$, i.e. $|\lplus^{(c-e)} - \lminus^{(c-e)}| > 1/4$. Then, since the blanket sensor agent $\alpha_1^{bs}$ has valuation $0.1$ distributed uniformly on $[0,1]$, this implies that $|\mu_{\alpha_1^{bs}}(\lplus \cap [0,1]) - \mu_{\alpha_1^{bs}}(\lminus \cap [0,1])| > 1/40$, i.e the discrepancy that the blanket sensor detects in the volume of the two labels is at least $1/40$. Then, in a solution to $I_{CH}$, it must be the case that the cut in $c(\alpha_1^{bs})$ intersects one of the two thin valuation blocks in the interval, either the left one, if $\mu_{\alpha_1^{bs}}(\lplus \cap [0,1]) - \mu_{\alpha_1^{bs}}(\lminus \cap [0,1])| < 0$ or the right one,  if $\mu_{\alpha_1^{bs}}(\lplus \cap [0,1]) - \mu_{\alpha_1^{bs}}(\lminus \cap [0,1])| >0$, as otherwise agent $\alpha_1^{bs}$ would be dissatisfied with the balance of the labels in $A$. In that case however, Agents $\alpha_1$ and $\alpha_2$ will have all of their valuation in $R_1$ receive the same label, and this value constitutes $3/4$ of their total value in $R_1$ (see Figure \ref{fig:ce-agents}). Since for all $i=1,\ldots,100$, the value of each blanket-sensor agent $\alpha_i^{bs}$ is the same in the c-e region, the same will be true for most copies of the circuit $C_i$, in particular for each reliable copy of the circuit. 
	
	A bit more concretely, since there are at most $2$ stray cuts, there are at most $2$ 
	unreliable copies $C_j$ and $C_{j'}$ of the circuit. For any $k \in \{1,\ldots,100\}$ such that 
	$k \neq j,j'$, the blanket sensor agent $\alpha_{k}^{bs}$ will detect a value of $1$, 
	as the cut associated with $\alpha_{k}^{bs}$ in $\mathcal{R}_{\alpha_{k}^{bs}}$ must 
	intersect the right thin valuation block of the blanket sensor agent's value in $R_1^{in}$ 
	(see Figure \ref{fig:ce-agents}), or otherwise $\mathcal{H}$ would not be a solution to $I_{CH}$. 
	But then, in the cut-encoding regions $R_i$, all of the valuation blocks of agent $\alpha_1$
	in $\cup_{i \in \{1,\ldots,100\},i\notin \{ j,j' \}} R_i^{in}$ (see Figure \ref{fig:ce-agents}), 
	will receive the same label. Since in each $R_i$, the value of agent $\alpha_i$ in 
	$R_{i}^{in}$ is a $(3/4)$-fraction of its total value in $R_i$ and since at least 98 copies are reliable, 
	at least a $(147/200)$-fraction of agent $\alpha_1$'s valuation will receive the same label and 
	therefore the agent is dissatisfied with the partition and $\mathcal{H}$ is not a consensus halving solution.
\end{proof}	

Using Lemma \ref{lem:blanket}, we can now prove the following lemma regarding the number of cuts in the c-e region, in any solution to $I_{CH}$.

\begin{lemma}\label{obs:cuts}
	In a solution to an instance $I_{CH}$ of $\epsilon$-\ch\ constructed as in
	Section~\ref{sec:consensus},
	the two c-e cuts are the only cuts that may occur in the c-e region,
	and at least one c-e cut must occur in the c-e region.
\end{lemma}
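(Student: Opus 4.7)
The plan is to combine a forcing step (each non-c-e agent pins one cut into a specific ``anchor'' subinterval of $R$) with a simple counting argument. The total number of cuts equals the total number of agents, namely $2 + \sum_{i=1}^{100}|{\cal A}_i|$, so once we exhibit $\sum_i |{\cal A}_i|$ pairwise disjoint anchors in $R$ each of which must contain at least one cut, at most two cuts will remain available for the c-e region $[0,1]$, and we can identify those with the c-e cuts $c(\alpha_1),c(\alpha_2)$. The second half of the statement --- that at least one c-e cut actually lies in $[0,1]$ --- will follow directly from Lemma~\ref{lem:blanket}: it yields $|\lplus^{(c-e)}-\lminus^{(c-e)}| \leq \tfrac{1}{4}$, and since $\lplus^{(c-e)}+\lminus^{(c-e)} = 1$, neither label can cover the entire c-e region, forcing some cut into $[0,1]$.

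The forcing step is morally the same for every non-c-e agent $\alpha$: by construction, $\alpha$ has at least a $\tfrac{3}{4}$-fraction of its total mass concentrated in two thin, dense valuation blocks situated inside a short ``anchor'' interval. The anchor is $out^j$ for a gate agent $\alpha_j^g$, the pair of dense blocks at offset $2(k-1)$ inside $R_i^{in}$ for a bit-extractor $\alpha_{j,k}^s$, and the pair of dense blocks at the two ends of $R_i^{out}$ for a blanket sensor $\alpha_i^{bs}$. Suppose for contradiction that no cut intersects $\alpha$'s anchor. Then both dense blocks inside the anchor receive the same label, contributing an imbalance of at least $\tfrac{3}{4}$ to $|\mu_\alpha(\lplus)-\mu_\alpha(\lminus)|$, while the rest of $\alpha$'s mass (at most $\tfrac{1}{4}$ of the total) can compensate by at most $\tfrac{1}{4}$. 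The resulting net imbalance is at least $\tfrac{1}{2}$, far exceeding $\epsilon=2^{-2n}$, contradicting the assumption that the partition is a valid $\epsilon$-\ch\ solution. A direct inspection of Section~\ref{sec:consensus} shows that the anchors of different non-c-e agents are pairwise disjoint --- they live either in distinct $R_i$, or in non-overlapping sub-intervals of the same $R_i^{in}$, $R_i^{mid}$, or $R_i^{out}$ --- so each anchor uses up a distinct cut inside $R$.

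The main obstacle will be the forcing step for the blanket sensor $\alpha_i^{bs}$, which, unlike the gate agents and bit-extractors, carries non-trivial mass both in the c-e region (value $0.1$) and in the short ``middle'' block of $R_i^{out}$ (value $0.025$). I will need to verify that even if further cuts happen to land in $[0,1]$ or in this middle block, their combined compensating effect on $\mu_{\alpha_i^{bs}}$ is strictly smaller than the imbalance produced by the two outer dense blocks of $R_i^{out}$, so that a cut is still required inside one of those outer blocks. Once that bookkeeping is settled, the counting argument is immediate: at least $\sum_i |{\cal A}_i|$ cuts lie in $R$, at most two cuts lie in $[0,1]$, and Lemma~\ref{lem:blanket} ensures at least one does, which proves both halves of the lemma.
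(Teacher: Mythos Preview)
Your proposal is correct and follows essentially the same approach as the paper: the paper also argues that every non-c-e agent $\alpha_j\in\mathcal{A}_i$ has more than half its mass in a designated interval $\mathcal{R}_{\alpha_j}\subset R$, that these intervals are pairwise disjoint and avoid $[0,1]$, so each pins one cut into $R$ and at most two cuts remain for the c-e region; the second claim is then derived from Lemma~\ref{lem:blanket} exactly as you do. Your write-up is in fact more explicit than the paper's (which just asserts ``sufficiently more than a $(1/2)$-fraction'' without the $3/4$ bookkeeping), and your flagged obstacle for the blanket sensor is handled simply by taking its anchor to be the full unit interval $[\ell(R_i^{out}),\ell(R_i^{out})+1]$ rather than just the two thin blocks---this interval is still disjoint from every other non-c-e anchor, and the mass outside it ($0.1$ in the c-e region) cannot compensate for the $0.875$ inside.
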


\begin{proof}
	To see this, note first that in any solution $\mathcal{H}$ to $I_{CH}$, all
	cuts apart from the c-e cuts, are constrained to lie in various
	intervals outside the c-e region. In particular, for every agent $\alpha_j \in \mathcal{A}_i$ 
	(i.e. every agent besides the two c-e agents $\alpha_1,\alpha_2$), 
	it holds that most of the valuation of the agent (in particular, sufficiently more than 
	a $(1/2)$-fraction) lies in a designated interval, which we will denote by $\mathcal{R}_{\alpha_{j}}$.
	Agent $\alpha_j$ is not the only agent that has non-zero value in $\mathcal{R}_{\alpha_j}$, 
	but it holds that for $j' \neq j$, $\mathcal{R}_{\alpha_j} \cap \mathcal{R}_{\alpha_{j'}} = \emptyset$
	i.e. each agent in $\mathcal{A}_i$ has a different designated interval. Also, note that none of 
	this intervals intersects with the c-e region, i.e. $\mathcal{R}_{\alpha_j} \cap [0,1]=\emptyset$, 
	for all agents $\alpha_j \in \mathcal{A}_i$.
	
	Obviously, by construction, for such an interval $\mathcal{R}_{\alpha_j}$, if there is no cut that
	intersects the interval, then agent $\alpha_j$ will be dissatisfied with the balance of $\lplus$ 
	and $\lminus$ and $\mathcal{H}$ will not be a solution to $I_{CH}$. Additionally, since there 
	are $N-2$ such designated intervals which do not intersect with the c-e region, $\mathcal{H}$ 
	must place at most $2$ cuts in the c-e region. This establishes the first statement of the Lemma.
	
	Now for the second statement, suppose that neither c-e cut lies in the c-e region, 
	in which case the c-e region gets labelled entirely $\lplus$. By Lemma \ref{lem:blanket},
	the blanket sensor agents will detect the discrepancy and $\mathcal{H}$ can not be a solution.
\end{proof}

\noindent Next, we prove the lemmas regarding the operation of the bit-extractors, which we stated in Section \ref{sec:sensors}.
\bitextractors*

\begin{proof}We will argue for the c-e identical bit extractors with value in $[0,1/8]$; the argument for the rest is similar. First of all, note that in a solution to $I_{CH}$ there can be at most one cut intersecting the interval $[0,1/8]$, otherwise the blanket-sensor agent would not be satisfied, by Lemma \ref{lem:blanket}. Assume that such a cut $c$ intersects the interval $[0,1/8]$. To recover the position, Agent $\alpha_{11}^s$ is responsible for determining whether the cut lies in the first or the second half of $[0,1/8]$. 
	If the cut lies in the first half, then the bit-detection gadget of the agent in $[\ell(\mathcal{R}),\ell(\mathcal{R})+1/20] \cup [\ell(\mathcal{R})+1/20,\ell(\mathcal{R})+1]$ will detect a $0$, with a cut intersecting (or sitting close to) the leftmost thin valuation block of its bit-detection gadget. This follows by the construction, since the cuts that intersect the outputs of the bit-extractors in $R_1^{in}$ have $\lminus$ on their left-hand side (see Figure \ref{fig:sensor_agents} and Figure \ref{fig:sensor_agents2}).
	
	In turn, Agent $\alpha_{12}^s$ will detect whether the cut lies in the first or the second half \emph{of the previously detected half} and the bit will be set accordingly, with the corresponding cut lying on the left thin valuation block of the bit-detection gadget (in case of $0$) and on the right valuation block (in case of $1$). This is achieved with the extra small block of valuation $1/20$ in $[\ell(\mathcal{R})+0.25,\ell(\mathcal{R})+0.75]$, which has already been labelled by the cut that intersects the output of Agent $\alpha_{11}^s$. One can view this as adding a ``compensation'' to the portion that is not in excess for the second agent (e.g. more $\lplus$ assuming the first detected bit was $0$), compared to the first agent. In particular, while the bit-detection gadget of the first agent uses a bit to detect the ``direction'' of the discrepancy, the bit-detection gadget of the second agent uses a bit to determine the direction of the discrepancy if additional value of $0.05$ is added to the lesser label. The argument for the remaining agents $\alpha_{ik}^s$, for $k=1,\ldots,n+8$ is very similar. 
	
	It should be noted that for the other copies $C_i$ , $i=1,\ldots,100$ of the circuit, the valuation blocks of the c-e identical agents in the c-e region might ``wrap around'', i.e. they can consist of valuation blocks in $[0,z_2]$ and $[z_2,1]$ where $|[0,z_2] \cup [z_1,1]|=1/8$. In that case, exactly the similar arguments apply if we consider the interval to be $[z_1,z_2]$, i.e. the first half of the interval is considered to be $[z_1,z_1+1/4]$ if $z_1+1/4 \leq 1$ and $[z_1,1] \cup [0,z_2-1/4]$ if $z_1+1/4>1$.
\end{proof}

\allbitextractors*

\begin{proof}
	Consider a set $\mathcal{S}_i^j \subset \mathcal{S}_i$ of c-e identical agents with value in $[j/8, (j+1)/8 ]$ of the c-e region, for some $j \in [0,\ldots,7]$. Assume first that a cut lies in $[j/8, (j+1)/8 ]$ and that no other cut lies in $[0,j/8)$. Then, (since by convention the first cut in the c-e region has $\lplus$ on its left-hand side), the $n+8$ c-e identical agents of region $[j/8, (j+1)/8]$ will detect the position of the cut in the interval and their outputs will feed that to the gate-agents, exactly as described for the c-e identical agents of $[0,1/8]$ in Section \ref{sec:sensors}, and according to Proposition \ref{obs:bitextractors}. 
	
	Now assume that that the second cut in the c-e region lies in $[j/8, (j+1)/8 ]$ and the first cut lies somewhere in $[0,j/8)$.
	Observe that the first cut must have been detected by another set $\mathcal{S}_i^{j'}$ of c-e identical bit-extractors, with $j' < j$. Since the agents in $\mathcal{S}_i^j$  are now extracting the position of the second cut, notice that the label on the left-hand side of the cut is now $\lminus$, which effectively ``flips'' the outputs of the bit-extractors (the bit-detection gadgets) $\mathcal{S}_i^j$ in $R_i^{in}$. However, since all this information is provided to the pre-processing circuit, the circuit can infer how to interpret the outputs (and particularly it can lead the outputs of the set $\mathcal{S}_i^j$ through a set of NOT gates). 
	
	In simpler words, if a cut has already been detected by a set of sensors, this informs the circuit on how to interpret the remaining inputs that correspond to the second cut. Similarly, the circuit can use the information that no cuts occur in the region $[j/8, (j+1)/8]$, which will be either a string of $1$s (if no cut has been detected in a previous interval) or a string of $0$s (if a cut has been detected in a previous interval). Since the circuit knows whether a cut has been detected in an interval $[j'/8, (j'+1)/8]$, with $j'<j$, it also knows how to interpret these trivial inputs.
	
	Finally, the pre-processing circuit $C_i^\textrm{pre}$ can combine the inputs from all the different intervals into a $(n+4)$-bit string which encodes the coordinates of a point $(x,y)$ in the domain.
\end{proof}

\noindent Next, recall the following proposition about the gate-agents, first stated in Subsection \ref{sec:gate}.
\gates*
\begin{proof}
	This is a rather straightforward observation based on the following facts.
	\begin{itemize}
		\item The bit-extractors of $\mathcal{S}_1$ extract the binary representation of the cuts in the sub-regions of length $1/8$ of the c-e region.
		\item This input is fed to the circuit $C_1$.
		\item The gate-gadgets implement the valid AND,OR and NOT circuit operations.
	\end{itemize}
	The first statement is argued in Proposition \ref{obs:bitextractors} and \ref{prop:allbitextractors} and the second statement can easily be verified from the construction of the input gate gadgets of this section. In particular, a cut intersecting the bit-detection gadget of a bit extractor agent is directly supplied as an input to the corresponding input gate gadget of $C_1$ (see Figure \ref{fig:gates}). The last statement follows from the correctness of these boolean gate gadgets that are used in each step, which was explained in Section \ref{sec:gadgets} and the fact that the the gate agents in $R_1$ are being used to implement the pre-processing circuit $C_i^\textrm{pre}$ and the circuit $C_i^\textrm{main}$, which is an exact copy of $C$, using the corresponding gadgets as the gates (also see Figure \ref{fig:gates}).
\end{proof}

\subsection{Dealing with the stray cut}

As we mentioned in Section \ref{sec:intro}, all agents other than the coordinate-encoding ones are associated with separate cuts. For all the circuit-encoding agents, these cuts are constrained to lie in different regions in $R$, but for the c-e agents, it is not a-priori clear that these cuts will lie in the c-e region. Lemma \ref{obs:cuts} establishes that in any solution of $I_{CH}$, at least one of these cuts will actually lie in the c-e region, but the other might actually move into the circuit-encoding region $R$. We will use the following definition.

\begin{definition}[Stray cut]
	In a solution $\mathcal{H}$ of $I_{CH}$ as described in Section \ref{sec:consensus}, a c-e cut will be called a \emph{stray cut} if it is occurs outside of the c-e region.
\end{definition}

\noindent A stray cut may have two effects on $\mathcal{H}$.
\begin{enumerate}
	\item It intersects the circuit-encoding region $R_i$ of some circuit encoder $C_i$, for $i \in \{1,\ldots,100\}$.
	\item It flips the parity of the circuit encoders $C_i$, with $R_i < c$, where $c$ is the position of the stray cut in $R_{i-1}$. In other words, if
	the first cut in $R_i$ was expecting to see $\lplus$ on its left-hand side, it now sees $\lminus$ and vice-versa.
\end{enumerate}
The first effect is not much of a problem; we simply deem this circuit unreliable:
\begin{definition}[Reliable copy]
We will say that a copy $C_i$ of the circuit $C$ ($i \in \{1,\ldots,100\}$) is \emph{reliable} if none of the c-e cuts intersects $R_i$.  A copy $C_i$ of the circuit is \emph{unreliable} if it is not reliable.
\end{definition}
Since there is only one stray cut, there is at most one unreliable circuit $C_i$. The error that this copy will introduce to the volumes of the labels $\lplus$ and $\lminus$ for the c-e agents (see Section \ref{sec:feedback}) will be relatively small due to the fact that there are many reliable points that receive good inputs. This is argued formally in the proof of Lemma \ref{lem:onecut}. 

The second effect from the ones above is potentially more troublesome however, since the parity flip could alter the outputs of the bit-extractors. This problem however is actually being taken care of by the pre-processing circuit (and the XOR operator of the main circuit). If outputs of the bit-extractors are flipped, the pre-processing circuit actually inputs the bit-wise complements of the raw data that it would input before the flip; these consist of binary strings that encode the positions of the cuts within regions $[(j-1)/8,j/8]$, as well as the accompanying information (the solid strings) that indicate how to interpret the aforementioned binary strings as coordinates $(x,y)$ that get some label by $C_i^{VT}$. The effects of these flips cancel out and the circuit outputs exactly the same label, which is then flipped by the XOR sub-circuit, to ensure that the c-e agents receive the same feedback. This is proven in detail in the following lemma.

\begin{lemma}[Double-negative lemma]\label{lem:straycut}
	Consider a solution $\mathcal{H}$ of $I_{CH}$ and a circuit-encoder $C_i$. If a stray cut is placed in $(1,\ell(R_i))$ (i.e. to the right of the c-e region and to the left of $R_i$), then the c-e agents see exactly the same balance of $\lplus$ and $\lminus$ in $R_i$ as they did before the insertion of the stray cut.
\end{lemma}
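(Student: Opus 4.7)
The strategy is to trace how the parity flip introduced by the stray cut propagates through each stage of $C_i$, and to show that two independent negations cancel out by the time we reach the c-e agents' valuations in $R_i^{out}$. Throughout, I would compare the original solution (no stray cut in $(1,\ell(R_i))$) against the perturbed solution in which every subinterval of $R_i$ has its label inverted.

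First I would argue that the raw data extracted by the sensors $\mathcal{S}_i$ in $R_i^{in}$ becomes the bitwise complement of the raw data extracted in the unperturbed solution. This is because the valuations of the bit-extractors in the c-e region are unaffected by the stray cut, so each bit-extractor still needs to balance the same amount of $\lplus$/$\lminus$ from $[0,1]$; however the parity of labels on the bit-detection gadget in $R_i^{in}$ is flipped, so a cut that formerly sat on the left thin block (reading $0$) now sits on the right thin block (reading $1$), and vice-versa. By Propositions~\ref{obs:bitextractors} and~\ref{prop:allbitextractors}, the entire raw-data vector delivered to $C_i^\textrm{pre}$ is therefore the bitwise complement of the original.

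Next I would show that $C_i^\textrm{pre}$ is invariant under this complementation, so $C_i^{VT}$ sees exactly the same point $(x,y)$ and produces the same $4$-bit label $z_1z_2z_3z_4$. This is immediate from the case analysis in Section~\ref{sec:gate}: for each sub-interval $R^{1/8}_k$, $C_i^\textrm{pre}$ treats the case ``$b_j$ is a solid string of $1$s'' symmetrically to the case ``$b_j$ is a solid string of $0$s'', using the offset $B_k$ in one case and $2^{n+8}-B_k$ in the other. Complementing every input simultaneously swaps these two regimes and replaces $B_k$ by its complement, so the output coordinates are preserved. The XOR sub-circuit $C_i^\textrm{det}$, on the other hand, computes a bit $z$ equal to $\overline{rep(b_1)}$ (or $rep(b_2)$), which manifestly flips under complementation; so each final output $y_i=z_i\oplus z$ is flipped relative to the unperturbed solution.

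Finally I would combine this semantic flip with the physical flip of labels in $R_i^{out}$. The c-e agents' contribution from $R_i^{out}$ is determined by where the output gate-agent cuts sit in $R_i^{out}$ and by which labels those positions induce on the c-e agents' valuation blocks. The XOR operator changes every encoded output bit, so each of the four output cuts jumps from its ``$0$''-position to its ``$1$''-position (or vice versa); simultaneously the parity flip in $R_i^{out}$ inverts the mapping from cut-positions to $\lplus/\lminus$ labels on those blocks. These two inversions compose to the identity on the net $\lplus$-versus-$\lminus$ excess registered by $\alpha_1$ and $\alpha_2$ in $R_i^{out}$, proving the claim. The main subtlety -- and the only non-routine bookkeeping -- is to verify this cancellation gadget-by-gadget for each of the four output-gate gadgets; this is straightforward using the explicit valuations in Section~\ref{sec:feedback}, since the two bit-detection blocks of an output-gate gadget have equal measure and hence swapping their labels while swapping which of them contains the cut leaves the c-e agents' value from $R_i^{out}$ unchanged.
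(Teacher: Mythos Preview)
Your proposal is correct and mirrors the paper's own argument step for step: both decompose the effect of the stray cut into (i) bitwise complementation of the raw data fed to $C_i^{\textrm{pre}}$, (ii) invariance of $C_i^{\textrm{pre}}$ under that complementation so that $C_i^{VT}$ outputs the same label while the XOR bit from $C_i^{\textrm{det}}$ flips, and (iii) cancellation of the flipped circuit output against the flipped label parity in $R_i^{out}$. One small correction to your last paragraph: the bit-detection blocks of an AND/OR output-gate gadget do \emph{not} have equal measure ($8.75$ versus $6.25$); the reason the cancellation in $R_i^{out}$ goes through is rather that each c-e agent's valuation block sits strictly between the two thin blocks of the corresponding output gadget, hence lies entirely on one side of the output cut, and moving that cut to the opposite thin block while simultaneously flipping the label parity preserves the label assigned to the c-e agent's block.
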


\begin{proof}
Since the stray cut lies between the c-e region and $R_i$, there is one cut in the c-e region
(the cut $c(\alpha_1)$) at some position $c \in [\frac{3}{8},\frac{5}{8}]$, which is ensured by Lemma \ref{obs:cuts} and 
Lemma \ref{lem:blanket}. Circuit $C_i$ evaluates the label at a point $(x,y)$ where $x=i \cdot 2^{-n-11}$ and $y=1-(c+i \cdot 2^{-n-11})$,
i.e. outputs the evaluation of circuit $C_1$ on the set of virtual cuts $c_1^i$ and $c_2^i$ respectively.

Consider the operation of adding a cut between the c-e region and $R_i$.
This effectively causes the output bits of the the bit extractors of $C_i^\textrm{pre}$ to flip,
as the cut in $R(\alpha_i^s)$ for every bit extractor $\alpha_i^s \in S_i$ is now ``seeing'' 
$\lplus$ on its left-hand side, rather than $\lminus$. We claim that the outputs of
$C_i^{VT}$ will be the same as before, regardless of the flip.

Using the notation from the description of the pre-processing circuit in Section \ref{sec:gate}, 
let $R_{j}^{1/8}$, for $j=1,\ldots,8$, denote the interval $[(j-1)/8,j/8]$ of the c-e region and let $b_j$ be 
the binary string that the bit-extractors of $C_i^\mathrm{pre}$ extract from this interval. Let $R_{k}^{1/8}$ be the
interval where the cut $c(\alpha_1)$ is detected and notice that by the discussion above, it must
be the case that $R_k^{1/8} \neq R_1^{1/8}$ and therefore the bit-extractors of each $R_j^{1/8}$, for $j \in \{1,\ldots,k-1\}$ 
must output the same solid binary string, consisting of either only $1$'s or only $0$'s.

Consider first the output of $C_i^{VT}$ in the absence of the stray cut and assume without loss of generality that 
the solid strings $b_1, \ldots, b_{k-1}$ outputted by the bit-extractors of the regions $R_1^{1/8}, \ldots, R_{k-1}^{1/8}$ 
are strings of $1$'s (the argument for when they are strings of $0$'s is completely symmetric). Then, according to 
the operation of the pre-processing circuit, the position of the cut is calculated by adding up $(k-1)/8$ and the 
distance $z$ between the detected position and $(k-1)/8$. See Figure \ref{fig:same_output} for an example when $k=3$.

Now consider the output of $C_i^{VT}$ in the presence of the stray cut in $(0,R_i)$. As we mentioned earlier, 
the raw data that the pre-processing circuit inputs have been bit-wise flipped. In effect, the following two things happen:
\begin{itemize}
	\item The solid strings $b_1 \ldots, b_{k-1}$ outputted by the bit-extractors of regions $R_1^{1/8}, \ldots, R_{k-1}^{1/8}$ 
	are strings of $0$'s,
	\item The bit extractors of $R_{k}^{1/8}$ now extract the string $\overline{b_{k}}$, i.e., the bit-wise complement of $b_{k}$.
	Note that if the bit-string $b_k$ encodes the position of a cut at $(k-1)/8 + z$, then $\overline{b_k}$ encodes the position of
	a cut at $(k-1)/8+z'=k/8-z$. Therefore, if the cut is actually placed at $(k-1)/8+z$, the effect of the flip is that
	the bit-extractors of $R_{k}^{1/8}$ now detect the cut as being placed at $k/8-z$.
\end{itemize}
However, since the solid binary strings of the bit-encoders preceding $R_k^{1/8}$ are now strings of $0$'s, the 
pre-processing circuit calculates the position of the cut as $k/8-z' = (k-1)/8 + z$ (again, see Figure \ref{fig:same_output} for an example with $k=3$). The effect of this ``double-negative'' operation is that the position of the cut is the same in both cases, which results in the circuit $C_i^{VT}$ producing the same label, whether a stray cut lies
in $(1,\ell(R_i))$ or not. 

If we used the output of $C_i^{VT}$ directly to provide feedback to the c-e agents, then the following undesired effect would take place. Comparing the situations before and after the cut, the balance of $\lplus$ and $\lminus$ shown to the c-e agents in $R_i$ would flip, because (i) the output of $C_i^{VT}$ is unaffected by the flip but (ii) the stray cut changes the parity of the label sequence, causing the parts of $R_i$ that were labelled $\lplus$ to now be labelled $\lminus$ and vice-versa. That would introduce a false discrepancy of the balance of $\lplus$ and $\lminus$ for one of the c-e agents, or more precisely, the correct ``amount'' of discrepancy but in the wrong direction.

This is taken care of by the XOR operator circuit; the circuit detects the value of the solid string and applies an appropriate XOR operation to the outcomes of $C_i^{VT}$. For example, if before the insertion of the stray cut, the solid strings $b_1 \ldots, b_{k-1}$ outputted by the bit-extractors of regions $R_1^{1/8}, \ldots, R_{k-1}^{1/8}$ 
are strings of $0$'s after the insertion of the stray cut, the outputs of the circuit $C_i$ are $z_j \oplus \overline{rep(b_1)}$, for $j \in \{1,2,3,4\}$, where $z_i$, $i \in \{1,2,3,4\}$ are the outputs of $C_i^{VT}$. In other words, the output of $C_i$ is the bit-wise complement of the ouput of $C_i^{VT}$ (or alternatively, $C_i$ outputs label $-\lambda$ if the label ouputted by $C_i^{VT}$ is $\lambda$. The effect of this ``double negative'' operation is that
\begin{itemize}
	\item Before the insertion of the stray cut in $(0,\ell(R_i))$, the outputted label was $\lambda$ and label sequence in $R_i^{out}$ started with $\lplus$.
	\item After the insertion of the stray cut n $(0,\ell(R_i))$, the outputted label is $-\lambda$ and label sequence in $R_i^{out}$ started with $\lminus$.
\end{itemize}
Given how labels correspond to discrepancies on $\lplus$ and $\lminus$ for the c-e agents, as explained in Section \ref{sec:feedback}, the c-e agents receive exactly the same feedback before and after the insertion of the stray cut. Note that if the XOR operator circuit received different input from the raw data (i.e. a solid string of $1$'s which indicates that no flip has taken place), then the XOR operation leaves the outputs of $C_1^{VT}$ unaffected (and there is no flipping of the label sequence in $R_i^{out}$ either).\\

\noindent For an illustration of the ``double negative'' effect, see Figure \ref{fig:doubleneg}.

%
%
%
\end{proof}
\begin{figure}
\centering
%
%
%
%
%
%
%
%
%
%
%
%
%
	\centering
	\begin{tikzpicture}[scale=1, transform shape]
	\node (a_1) at (0pt,0pt) {}; 
	\node (a_2) at (150pt, 0pt) {};
	\draw (a_1)--(a_2);
	
	\node (a_1) at (0pt,-20pt) {}; 
	\node (a_2) at (150pt, -20pt) {};
	\draw (a_1)--(a_2);
	
	\node (a_1) at (0pt,-40pt) {}; 
	\node (a_2) at (150pt, -40pt) {};
	\draw (a_1)--(a_2);

	\draw[fill=lightgray] (4pt,-0pt) rectangle (34pt,8pt);
	\draw[fill=lightgray] (34pt,-20pt) rectangle (64pt,-12pt);
	\draw[fill=lightgray] (64pt,-40pt) rectangle (94pt,-32pt);
	
	\node[color=red,opacity=0.4] at (18pt, 14pt) {\scriptsize{$111\ldots 1$}};
	\node[color=red,opacity=0.4] at (50pt, -6pt) {\scriptsize{$111\ldots 1$}};
	
	\node[color=blue] at (18pt, -6pt) {\scriptsize{$000\ldots 0$}};
	\node[color=blue] at (50pt, -26pt) {\scriptsize{$000\ldots 0$}};
	
\draw[color=darkgreen, thick] (84pt,20pt) -- (84pt,-60pt);

\draw[color=darkgreen, dashed] (74pt,20pt) -- (74pt,-60pt);

\draw [
thick,
color=blue,
decoration={
	mirror,
	brace,
	raise=5pt
},
decorate
] (0pt,-90pt) -- (74pt,-90pt)
node [color=blue,pos=0.5,anchor=north,yshift=-8pt] {$x$};

	\draw [
thick,
opacity=0.4,
color=red,
decoration={
	mirror,
	brace,
	raise=5pt
},
decorate
] (0pt,-70pt) -- (74pt,-70pt)
node [color=red,pos=0.5,anchor=north,yshift=-8pt] {$x$};

	\draw [
thick,
opacity=0.4,
color=darkgreen,
decoration={
	mirror,
	brace,
	raise=5pt
},
decorate
] (64pt,-40pt) -- (73pt,-40pt)
node [opacity=0.6,color=darkgreen,pos=0.5,anchor=north,yshift=-8pt] {$z$};

	\draw [
thick,
color=darkgreen,
decoration={
	mirror,
	brace,
	raise=5pt
},
decorate
] (64pt,-55pt) -- (83pt,-55pt)
node [color=darkgreen,pos=0.5,anchor=north,xshift=-5pt,yshift=-5pt] {$z'$};
\end{tikzpicture}
\caption{\small{How the pre-processing circuit interprets the raw data to extract the coordinate of $x$. The situation after a stray cut is inserted to the left of $C_i$ is depicted. Before the insertion of the cut (shown as transparent), the solid strings detected were strings of $1$'s, the position of the cut was at distance $z$ from the left-hand side of the valuation block and the extracted value of $x$ was as shown in red. After the insertion of the cut, the solid blocks become strings of $0$'s, but the detected position of the cut is now at distance $z'$ from the left-hand side of the valuation block - this means that the position of the cut is detected at $3/8-z' = 2/8+z$, and the value of $x$, now as shown in blue, is exactly the same as before.}}
\label{fig:same_output}
\end{figure}
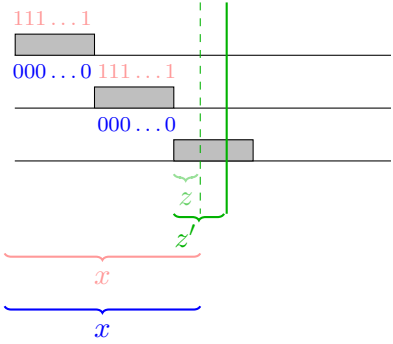
\begin{figure}
	\begin{fmpage}{0.55\textwidth}
	\center{
\begin{tikzpicture}[scale=0.55]

\node (a_1) at (-3pt,-30pt) {}; 
\node (a_2) at (126pt, -30pt) {};
\draw (a_1)--(a_2);

\node (a_1) at (160pt,-30pt) {}; 
\node (a_2) at (380pt, -30pt) {};
\draw (a_1)--(a_2);

\node (a_1) at (-3pt,-50pt) {}; 
\node (a_2) at (126pt, -50pt) {};
\draw (a_1)--(a_2);

\node (a_1) at (160pt,-50pt) {}; 
\node (a_2) at (380pt, -50pt) {};
\draw (a_1)--(a_2);

\node (b_1) at (4pt, -38pt) {};
\node (b_2) at (4pt, -60pt) {};	
\draw(b_1) -- (b_2);

\node (b_1) at (120pt, -38pt) {};
\node (b_2) at (120pt, -60pt) {};	
\draw(b_1) -- (b_2);



\node (b_1) at (4pt, -18pt) {};
\node (b_2) at (4pt, -40pt) {};	
\draw(b_1) -- (b_2);

\node (b_1) at (120pt, -18pt) {};
\node (b_2) at (120pt, -40pt) {};	
\draw(b_1) -- (b_2);

\node (b_1) at (120pt, -58pt) {};
\node (b_2) at (120pt, -80pt) {};	
\draw(b_1) -- (b_2);

\node (b_1) at (4pt, -58pt) {};
\node (b_2) at (4pt, -80pt) {};	
\draw(b_1) -- (b_2);

\node (a_1) at (-3pt,-70pt) {}; 
\node (a_2) at (126pt, -70pt) {};
\draw (a_1)--(a_2);

\node (a_1) at (160pt,-70pt) {}; 
\node (a_2) at (380pt, -70pt) {};
\draw (a_1)--(a_2);

\node (l) at (205pt, -87.5pt) {$\vdots$};
\node (l) at (50pt, -87.5pt) {$\vdots$};
\node (l) at (90pt, -87.5pt) {$\vdots$};

\node (a_1) at (-3pt,-110pt) {}; 
\node (a_2) at (126pt, -110pt) {};
\draw (a_1)--(a_2);

\node (a_1) at (160pt,-110pt) {}; 
\node (a_2) at (380pt, -110pt) {};
\draw (a_1)--(a_2);







\draw [
thick,
decoration={
	brace,
	raise=5pt
},
decorate
] (0pt,20pt) -- (120pt,20pt)
node [pos=0.5,anchor=south,yshift=5pt] {\small{c-e region}};	

\draw[
thick,
decoration={
	brace,
	raise=5pt
},
decorate
] (180pt,20pt) -- (420pt,20pt)
node [pos=0.5,anchor=south,yshift=5pt] {\small{$R_i^{in}$}};

\node (l) at (400pt, -50pt) {$\ldots$};
\node (l) at (145pt, -50pt) {$\ldots$};

%



\draw[fill=lightgray] (44pt,-30pt) rectangle (60pt,-26pt);

\draw[fill=lightgray] (175pt,-30pt) rectangle (180pt,-14pt);
\draw[fill=lightgray] (190pt,-30pt) rectangle (195pt,-14pt);

\draw[fill=lightgray] (44pt,-50pt) rectangle (60pt,-46pt);

\draw[fill=lightgray] (215pt,-50pt) rectangle (220pt,-36pt);
\draw[fill=lightgray] (230pt,-50pt) rectangle (235pt,-36pt);

\draw[fill=lightgray] (182.5pt,-50pt) rectangle (187.5pt,-42pt);

\draw[fill=lightgray] (44pt,-70pt) rectangle (60pt,-66pt);

\draw[fill=lightgray] (255pt,-70pt) rectangle (260pt,-58pt);
\draw[fill=lightgray] (270pt,-70pt) rectangle (275pt,-58pt);

\draw[fill=lightgray] (182.5pt,-70pt) rectangle (187.5pt,-66pt);
\draw[fill=lightgray] (222.5pt,-70pt) rectangle (227.5pt,-66pt);

\node (l) at (310pt,-105pt) {$\ldots$};

\draw[fill=lightgray] (44pt,-110pt) rectangle (60pt,-106pt);

\draw[fill=lightgray] (345pt,-110pt) rectangle (350pt,-100pt);
\draw[fill=lightgray] (360pt,-110pt) rectangle (365pt,-100pt);

\draw[fill=lightgray] (182.5pt,-110pt) rectangle (187.5pt,-106pt);
\draw[fill=lightgray] (222.5pt,-110pt) rectangle (227.5pt,-107pt);
\draw[fill=lightgray] (262.5pt,-110pt) rectangle (267.5pt,-108pt);

%
%

%
%

%
\node (b_1) at (4pt, -98pt) {};
\node (b_2) at (4pt, -120pt) {};	
\draw(b_1) -- (b_2);
\node (b_1) at (120pt, -98pt) {};
\node (b_2) at (120pt, -120pt) {};	
\draw(b_1) -- (b_2);
%

%
%

\draw [dashed,color=blue] (50pt, 20pt) -- (50pt, -130pt);


\draw [dashed,color=darkgreen] (140pt, 20pt) -- (140pt, -130pt);

\draw [dashed,color=blue] (192pt, 20pt) -- (192pt, -130pt);

\draw [dashed,color=blue] (217.5pt, 20pt) -- (217.5pt, -130pt);

\draw [dashed,color=blue] (272.5pt, 20pt) -- (272.5pt, -130pt);

\draw [dashed,color=blue] (362.5pt, 20pt) -- (362.5pt, -130pt);

\draw [dashed,color=red] (178pt, 20pt) -- (178pt, -130pt);
\draw [dashed,color=red] (232pt, 20pt) -- (232pt, -130pt);

\draw [dashed,color=red] (258pt, 20pt) -- (258pt, -130pt);

\draw [dashed,color=red] (348pt, 20pt) -- (348pt, -130pt);

\node at (20pt, 0pt) {\scriptsize{$\lplus$}};
\node at (80pt, 0pt) {\scriptsize{$\lminus$}};

\node at (165pt, 0pt) {\scriptsize{$\lminus$}};

\end{tikzpicture}
}
\end{fmpage}
\begin{fmpage}{0.4\textwidth}
\center{
	\begin{tikzpicture}[scale=0.8]
	\node (a_1) at (-1pt,0pt) {}; 
	\node (a_2) at (32pt, 0pt) {};
	\draw (a_1)--(a_2);
	
	\node (b_1) at (4pt, 12pt) {};
	\node (b_2) at (4pt, -12pt) {};	
	\draw(b_1) -- (b_2);
	
	\node (b_1) at (27pt, 12pt) {};
	\node (b_2) at (27pt, -12pt) {};	
	\draw(b_1) -- (b_2);
	
	\node (a_1) at (-1pt,-40pt) {}; 
	\node (a_2) at (32pt, -40pt) {};
	\draw (a_1)--(a_2);
	
	\node (b_1) at (4pt, -28pt) {};
	\node (b_2) at (4pt, -52pt) {};	
	\draw(b_1) -- (b_2);
	
	\node (b_1) at (27pt, -28pt) {};
	\node (b_2) at (27pt, -52pt) {};	
	\draw(b_1) -- (b_2);

	\node at (55pt, -25pt) {\ldots};
	
	\draw (80pt, 0pt) -- (200pt,0pt);
	\draw (80pt, -40pt) -- (200pt,-40pt);
	
	
	\draw [
	thick,
	decoration={
		brace,
		raise=5pt
	},
	decorate
	] (0pt,20pt) -- (26pt,20pt)
	node [pos=0.5,anchor=south,yshift=5pt] {\small{c-e region}};

	\draw[fill=lightgray] (90pt,0pt) rectangle (105pt,2.5pt);
	\draw[fill=lightgray] (120pt,0pt) rectangle (135pt,2.5pt);

	\draw[fill=lightgray] (140pt,-40pt) rectangle (155pt,-37.5pt);
	\draw[fill=lightgray] (175pt,-40pt) rectangle (190pt,-37.5pt);

	\draw [
	thick,
	decoration={
		brace,
		raise=5pt
	},
	decorate
	] (80pt,20pt) -- (200pt,20pt)
	node [pos=0.5,anchor=south,yshift=5pt] {\small{$R_i^{out}$}};

 	\draw [dashed,color=red] (155pt, 20pt) -- (155pt, -55pt);
 	\draw [dashed,color=blue] (90pt, 20pt) -- (90pt, -55pt);
 	\draw [dashed,color=blue] (120pt, 20pt) -- (120pt, -55pt);
 		\draw [dashed,color=blue] (140pt, 20pt) -- (140pt, -55pt);
 	
 	\draw [dashed,color=blue] (190pt, 20pt) -- (190pt, -55pt);
 	\draw [dashed,color=red] (175pt, 20pt) -- (175pt, -55pt);
 	 \draw [dashed,color=red] (135pt, 20pt) -- (135pt, -55pt);
 	  \draw [dashed,color=red] (105pt, 20pt) -- (105pt, -55pt);
 	 
	\node[color=red] at (70pt,20pt) {\scriptsize{$\lplus$}};
	\node[color=blue] at (70pt,5pt) {\scriptsize{$\lminus$}};
	
	\node at (150pt, -50pt) {};
	\node at (150pt, 55pt) {};)
	\end{tikzpicture}
}
\end{fmpage}
\caption{\small{The ``double negative'' effect of the stray cut (depicted in green). \emph{On the left:} The cut that intersects the bit extractors of circut $C_i$ (the $j$th group shown here, for $j \in \{1,\ldots,8\}$) has $\lplus$ on its left-hand side. The effect of the stray cut is that the cuts that intersect the outputs of the bit-extractors have $\lminus$ on their left-hand side and therefore they output the bit-wise complement of the binary representation of the position of the cut in the corresponding sub-interval of length $1/8$ of the c-e region where the inputs of the bit-extractors lie. The red dashed lines indicate the positions of the cuts had the stray cut not been present. \emph{On the right:} The outputs of the circuit as affected by the stray cut. Some output bits are flipped, i.e. when the output was supposed to be $1110$ (the red cuts), the output is actually $0001$ (the blue cuts), as a result of the $XOR$ sub-circuit (see Section \ref{sec:gate}). Since the stray cut also changes the sequence of labels (i.e. the red sequence starts with $\lplus$ and the blue sequence starts with $\lminus$), effectively the cut in the c-e region introduces exactly the same discrepancy for the c-e agent $\alpha_i$ in the labels of $\lplus$ and $\lminus$.}}
\label{fig:doubleneg}
\end{figure}
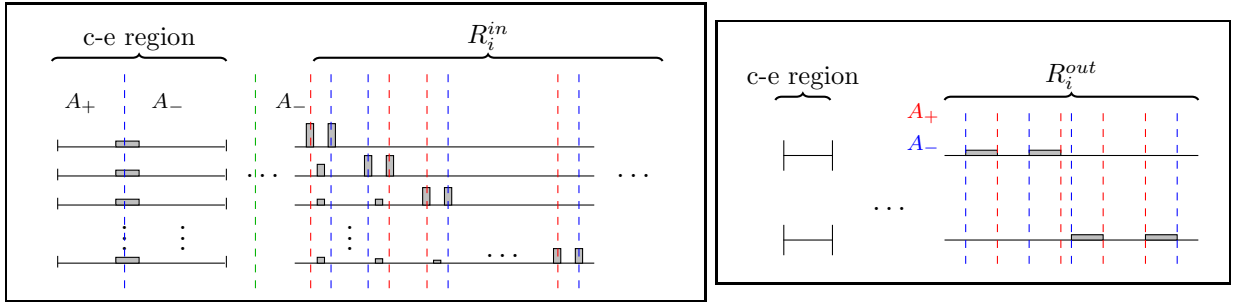
%
%
%
%
%
%
%

\subsection{Correctness lemmas}

\noindent By Lemma~\ref{obs:cuts}, we are left with two cases to consider: 
the first case when both c-e cuts lie in the c-e region, and the second case when
just one of the lies in the c-e region.

\begin{lemma}\label{lem:twocuts}
Consider a solution $\mathcal{H}$ to $I_{CH}$ in which both c-e cuts lie in the c-e region and consider
a set of points $(x_1,y_1),\ldots,(x_{100},y_{100})$ recovered from $\mathcal{H}$ as described in Section \ref{sec:sol}. 
Then $(x_1,y_1),\ldots,(x_{100},y_{100})$ is a solution to \vt.
\end{lemma}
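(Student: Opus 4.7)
The plan is to combine three ingredients: (i) with both c-e cuts in the c-e region there is no stray cut, so every circuit-encoder is reliable and computes the intended label; (ii) the $\epsilon$-halving constraints on the two c-e agents force the numbers of $+j$ and $-j$ outputs to agree (up to a small error) for each $j\in\{1,2\}$; and (iii) a pigeonhole argument then extracts the required ten-and-ten subsets.

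First I would fix notation. Let $c_1<c_2$ be the two cuts in $[0,1]$ and set $(x,y) = (c_1, 1-c_2)$, so that for $1\leq i\leq \y$ the circuit-encoder $C_i$ operates on the sequence-point $(x_i,y_i)$ described in Section~\ref{sec:sol}. Since no cut is stray, every $C_i$ is reliable, so Propositions~\ref{obs:bitextractors}, \ref{prop:allbitextractors} and~\ref{obs:gateagents} deliver the genuine labelling value $C(x_i,y_i)$ to the output gate gadgets in $R_i^{out}$ (no double-negative correction is needed here, as this is invoked only when a stray cut is present). By Observation~\ref{obs:onlyfourbad}, at most four of these points lie too close to a subregion boundary for $C_i$ to receive good inputs; call these the \emph{bad} copies and the remaining $\geq 96$ the \emph{good} ones.

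Next I would convert the halving constraints into counting inequalities. As described in Section~\ref{sec:feedback}, each good copy $C_i$ contributes exactly $+\tfrac{2}{400}$, $-\tfrac{2}{400}$ or $0$ to the $\lplus/\lminus$ imbalance of $\alpha_1$ according to whether its label is $+1$, $-1$ or $\pm 2$, and symmetrically for $\alpha_2$ with the roles of $\pm 1$ and $\pm 2$ interchanged; a bad copy can contribute at most $\tfrac{1}{100}$ in absolute value to either imbalance, since each c-e agent has mass only $\tfrac{1}{100}$ inside $R_i$ (all of it in $R_i^{out}$). Writing $a,b,c,d$ for the numbers of good copies labelled $+1,-1,+2,-2$, the halving condition $|\mu_{\alpha_j}(\lplus)-\mu_{\alpha_j}(\lminus)|\leq \epsilon = 2^{-2n}$ applied to both agents yields $|a-b|\leq 8+200\epsilon$ and $|c-d|\leq 8+200\epsilon$; since $a-b$ and $c-d$ are integers and $\epsilon<\tfrac{1}{200}$ this rounds to $|a-b|,|c-d|\leq 8$.

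Combining these with $a+b+c+d\geq 96$ gives
\[
\min(a,b)+\min(c,d)\;\geq\;\tfrac{(a+b)-8+(c+d)-8}{2}\;\geq\;\tfrac{96-16}{2}=40,
\]
so at least one of $\min(a,b)$, $\min(c,d)$ is $\geq 20$, comfortably exceeding the ten-plus-ten required by Definition~\ref{def:vt}. In the wrap-around case $y_1<100\cdot 2^{-(n+11)}$, the XOR sub-circuit of Section~\ref{sec:gate} flips the outputs of precisely the wrapped copies, exactly implementing the ``negate the wrapped suffix'' rule of Definition~\ref{def:vt}, so the same counting argument applies. The principal obstacle I foresee is the slack accounting above: one must verify that the combined error from up to four adversarial copies plus the $\epsilon$ tolerance is strictly smaller than one gadget grid step $\tfrac{2}{400}$, which is precisely where the inverse-exponential choice of $\epsilon$ is essential.
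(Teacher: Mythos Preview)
Your proof is correct and shares the paper's core mechanism: with both c-e cuts in $[0,1]$ every circuit is reliable, the blanket-sensor part of each $\alpha_j$'s mass in $R_i^{out}$ is balanced, and the $\epsilon$-halving constraints on $\alpha_1,\alpha_2$ force near-cancellation of the output contributions. The accounting you give (at most four bad copies, each contributing at most the total mass $1/100$ to the imbalance, hence $|a-b|,|c-d|\le 8$) is exactly the kind of bound the paper uses.

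Where you differ is in the final extraction step. The paper argues by contradiction and invokes a geometric fact---that by the chosen resolution the 100-point diagonal segment meets few tiles, so the majority label already accounts for at least 40 good points---and then uses balance to force at least 36 points of the opposite label. Your argument bypasses this geometric input entirely: from $a+b+c+d\ge 96$ and $|a-b|,|c-d|\le 8$ you get $\min(a,b)+\min(c,d)\ge 40$ by pure arithmetic, hence one pair already gives $\ge 20$ and $\ge 20$. This is cleaner and more robust (it would survive even if the segment touched many tiles), whereas the paper's version makes the tile-geometry do some of the work. Both routes are valid; yours is the more self-contained counting argument, and the paper's makes explicit why geometrically one expects a large monochromatic block in the first place.
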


\begin{proof}
Let $c_1(\alpha_1)$ and $c_2(\alpha_2)$ be the positions of the c-e cuts, which are assumed
in the statement of the lemma to both lie in $[0,1]$.
Since there are two cuts in the c-e region, by the recovery of the solution to \vt, 
we have $x=c_1(\alpha_1)$ and $y=1-c_2(\alpha_2)$, and the sequence of 100 points that
forms a solution to $I_{VT}$ consists of
$$(x_1,y_1),(x_1+2^{-(n+11)},y_1-2^{-(n+11)}),\ldots,(x_1+99\cdot2^{-(n+11)},y_1-99\cdot2^{-(n+11)}),$$
where addition/subtraction are taken modulo 1. 

By construction of the solution according to Section \ref{sec:sol}
and by the resolution of the domain, the bit extractors of $C_1$ extract the binary representation of the coordinates $(x_1,y_1)$,
according to Proposition \ref{obs:bitextractors} in Subsection \ref{sec:sensors}.
Then, as explained in Section \ref{sec:gate} and Proposition \ref{obs:gateagents}, these coordinates are propagated
via the gate agents in $\mathcal{G}_1$ and correspond to an output of $C_1$ (a bit-string of length $4$, where
there is a one-to-one correspondence between the labels $\{-1,1,2,-2\}$ and $4$ distinguished output bit-strings, namely 
$0001,1110,0111,1000$ respectively).

Since each copy of the circuit in the c-e region is a shifted version of the previous copy by $2^{-(n+11)}$, it is not hard to see
that the bit extractors of a reliable circuit $C_i$ that receives good inputs, actually detect the representation of point $(x_i,y_i)$ in the sequence of 100 
points originating with $(x_1,y_1)$. In precisely the same way, the output of this circuit feeds a discrepancy back to the c-e agents. Therefore,
in a solution $\mathcal{H}$ to $I_{CH}$, the points that are detected from the bit extractors of the circuits $C_1,\ldots,C_{100}$ will actually correspond to
the points in the sequence $(x_1,x_2), \ldots, (x_{100},y_{100})$.

As explained in Section \ref{sec:feedback}, each such output string corresponds to a labelling of the valuations of the c-e agents
in $R_1^{out}$ (the volumes of $\lplus$/$\lminus$ are balanced in $R_1^{in}$, since the blanket sensor agent $\alpha_1^{bs}$ is passive) 
and therefore there is a discrepancy in $R_1^{out}$ for exactly one c-e agent. Specifically, for Agent $\alpha_j$, with $j \in \{1,2\}$,
the discrepancy is in favour of $A_{k}$, with $k \in \{+,-\}$, if the label of the circuit output is $k j$.

One can easily check that for a c-e agent to be satisfied with the balance of the labels, it has to be the case that the excess in $\lplus$ or $\lminus$ due to a specific output in region $R_i^{out}$ has to be ``cancelled out'' from an excess of the opposite label ($\lminus$ or $\lplus$ respectively) in another interval $R_j^{out} \subseteq R_j$. For this to be possible, by construction, it has to be the case that the output of the corresponding circuit $C_j$ corresponds to the opposite label of the output of $C_i$, if that copy of the circuit
operates as intended. Therefore, if the points $(x_i,y_i)$ and $(x_j,y_j)$ are detected by reliable copies $C_i$ and $C_j$ that receive good inputs, they must have coordinates in different tiles of the domain, which are labelled with opposite labels. However, by the density of the domain and since there are at most $100$ points of the domain in the line-segment between $(x_i,y_i)$ and $(x_j,y_j)$, this is only possible if these points lie in neighbouring tiles of equal and opposite labels, i.e. in a solution to \vt.

A degenerate case occurs when some of the points $(x_i,y_i)$ in the sequence correspond to circuits that do not achieve good inputs (note that since both $c(\alpha_1)$ and $c(\alpha_2)$ lie in the c-e region, there are no stray cuts by definition). These are the points that lie close to the boundary of two tiles and their labels assigned by the circuit are 
unconstrained. This in principle can cause a cancellation effect and ``balance out'' the discrepancies of some unambiguously labelled point, when both of these points lie in the same tile 
(the former near the boundary and the latter in the interior). For example, for a point $p_1$ labelled $-1$ in some tile $j$, there can be a point $p_2$ close to the boundary with some neighbouring tile $j'$ (with tile $j'$ labelled $-1$ as well), that receives label $1$ by the circuit (due to the fact that the labelling rules of boundary points are unconstrained). In a sequence of points that contain both $p_1$ and $p_2$, the $\lplus/\lminus$ discrepancy due to $p_2$ will cancel out the $\lplus/\lminus$ discrepancy due to $p_1$, although we are not at a solution.

This is being take care of by the averaging manoeuvre, which uses $100$ copies of the circuit and requires that at least $10$ of the points in the sequence receive a label and $10$ other points receive an equal and opposite label. More concretely, assume by contradiction that we are at a solution $\mathcal{H}$ of $I_{CH}$, but the sequence of $100$ points do not correspond to a solution to $I_{VT}$. Let $\lambda$ be the label of the majority of the points in the sequence (breaking ties arbitrarily) and assume wlog that $\lambda = 1$. Observe that by the chosen resolution of the domain, it holds that at least $40$ points in the sequence must be labelled $1$. By the discussion above, since $\mathcal{H}$ is a solution, for every point labelled $1$, there must be another point in the sequence labelled $-1$, for the cancellation to take place. By Observation \ref{obs:onlyfourbad}, there are at most $4$ such points that are arbitrarily labelled and therefore they can contribute to a cancellation of at most $1/10$ of the excess of $\lplus$ due to the contribution of the points labelled $1$. This means that there must be at least $36$ points labelled $-1$ in the sequence and the sequence $(x_1,y_2), \ldots (x_{100},y_{100})$ is actually a solution to $I_{VT}$. 
\end{proof}

\begin{lemma}\label{lem:onecut}
	Consider a solution $\mathcal{H}$ to $I_{CH}$ in which only one c-e cut lies in the c-e region and consider
	a set of points $(x_1,y_1),\ldots,(x_{100},y_{100})$ recovered from $\mathcal{H}$ as described in Section \ref{sec:sol}. 
	Then $(x_1,y_1),\ldots,(x_{100},y_{100})$ is a solution to \vt. 
\end{lemma}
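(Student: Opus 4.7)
The plan is to mirror the structure of the proof of Lemma~\ref{lem:twocuts}, while accounting for the additional technical obstacles introduced by the stray cut and by the wrap-around interpretation forced by having only one c-e cut in the c-e region. First I would invoke Lemma~\ref{obs:cuts} to conclude that exactly one c-e cut lies in the c-e region, at some position $z \in [0,1]$, and exactly one other c-e cut is a stray cut in $R$. If this stray cut lies in the interior of some $R_k$, declare $C_k$ \emph{unreliable}; otherwise (if the stray cut lies between two consecutive circuit-encoding regions) no circuit is intersected. For each $i > k$, the stray cut sits in $(1,\ell(R_i))$, so by Lemma~\ref{lem:straycut} (the double-negative lemma) the c-e agents' feedback from $R_i$ is identical to what it would be with no stray cut. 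For each $i < k$ the stray cut lies entirely to the right of $R_i$ and therefore cannot affect the labels or parity inside $R_i$. So at most one circuit is adversely affected by the stray cut.

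Next I would combine this with Observation~\ref{obs:onlyfourbad} to conclude that at most $1 + 4 = 5$ of the $100$ circuit-encoders either fail to receive good inputs or are unreliable; at least $95$ of them are both reliable and receive good inputs. For each such $C_i$ with $i\geq 2$, the arguments from Section~\ref{sec:cdr} (in particular Propositions~\ref{obs:bitextractors}, \ref{prop:allbitextractors}, and \ref{obs:gateagents}) show that its bit extractors see both the real cut at $z$ and the wrap-around virtual cut near position $(i-1)\cdot 2^{-(n+11)}$, and that $C_i^{\mathrm{pre}}$ therefore reconstructs precisely the coordinates of the sequence point $(x_i,y_i) = ((i-1)\cdot 2^{-(n+11)},\, 1-z-(i-1)\cdot 2^{-(n+11)})$ from Section~\ref{sec:sol}; for $i=1$, the same holds with $(x_1,y_1)=(0,1-z)$. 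The XOR sub-circuit $C_i^{\det}$ then flips the label of $(x_i,y_i)$ precisely when $b_1$ is non-solid, i.e.\ for exactly the circuits corresponding to the wrapped-around suffix of the sequence, which is the flip demanded by the wrap-around clause in Definition~\ref{def:vt}.

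Finally I would run the averaging manoeuvre from the proof of Lemma~\ref{lem:twocuts}. Satisfaction of the c-e agents $\alpha_1,\alpha_2$ forces the signed contributions to $\lplus/\lminus$ from the $R_i^{out}$ regions to (nearly) cancel, which, via the label-to-discrepancy correspondence in Section~\ref{sec:feedback}, says that among the $\geq 95$ well-behaved points the XOR-adjusted labels must balance. With at most $5$ points contributing arbitrary labels and with the resolution of the domain guaranteeing that a majority label must occur on at least $40$ of the well-behaved points, one recovers $\geq 10$ points of some label $\lambda$ and $\geq 10$ points of label $-\lambda$ after the wrap-around flip, i.e.\ a valid \vt\ solution by Definition~\ref{def:vt}.

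The main obstacle I anticipate is verifying that the three sources of sign manipulation---(i) the parity flip of the label sequence to the right of the stray cut, (ii) the double-negative cancellation that Lemma~\ref{lem:straycut} enacts via $C_i^{\det}$ for $i>k$, and (iii) the independent wrap-around flipping performed by $C_i^{\det}$ for $i\geq 2$ because of the one-cut configuration---compose coherently without double counting. Concretely, one must check that, for $i>k$, $C_i^{\det}$ still bases its XOR bit on the \emph{same} $b_1$/$b_2$ status that would be seen without the stray cut (after accounting for the cut-induced flip of both $b_1$ and $b_2$), and hence that the label fed back to the c-e agents is the label of $(x_i,y_i)$ flipped exactly according to Definition~\ref{def:vt}'s wrap-around rule, with no residual sign from the stray cut.
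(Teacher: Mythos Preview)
Your proposal follows essentially the same route as the paper's own proof: invoke Lemma~\ref{obs:cuts} to locate the single c-e cut and the stray cut, appeal to Lemma~\ref{lem:straycut} for the reliable circuits to the right of the stray cut, note that circuits to its left are untouched, and then rerun the averaging manoeuvre of Lemma~\ref{lem:twocuts}.

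One quantitative point to tighten in your bookkeeping. You count $1+4=5$ ``arbitrary-label'' points, treating the unreliable circuit $C_k$ as contributing one arbitrary label. But a stray cut inside $R_k$ can mis-assign \emph{all} of the c-e agent's valuation in $R_k$ to a single side, producing a discrepancy of up to $1/100$, whereas a single reliable circuit's output-block contribution is only $\pm 2/400$. Thus the unreliable circuit can cancel the contributions of several good circuits, not just one; the paper accordingly counts it as offsetting up to three reliable circuits and arrives at $4+3=7$ invalidated points. With $40$ majority-labelled points this still leaves at least $33\gg 10$ that must be matched by genuine opposite-label points, so your conclusion survives, but the ``$5$ arbitrary labels'' accounting underestimates the damage the stray cut can do and should be replaced by the $7$-point bound.

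Your extended discussion of how the three sign effects compose (parity flip from the stray cut, the double-negative of Lemma~\ref{lem:straycut}, and the wrap-around XOR) is more explicit than the paper's terse treatment, but reaches the same conclusion: the net feedback to the c-e agents from each reliable $C_i$ is exactly the Definition~\ref{def:vt}-adjusted label of $(x_i,y_i)$.
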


\begin{proof}
	The proof of the lemma is very similar to the proof of Lemma \ref{lem:twocuts}. Here, if $c$ is the position of the 
	single cut in the c-e region, we have that $x=0$ and $y=1-c$. Again, the binary expansion of $(x,y)$ is extracted 
	from the bit extractors of $C_1$ and the output of the encoded circuit will correspond to a discrepancy for the c-e
	agents in $R_1^{out}$ similarly as before. Again, the same is true for the remaining $99$ circuits, with the exception of possibly
	one circuit that might be unreliable due to the stray cut. From Lemma \ref{lem:straycut}, it holds that the feedback 
	of any reliable copy to the c-e agents is unaffected by the stray cut. 
	
	A stray cut intersecting interval $R_i$ might introduce some additional discrepancy in the volume of the two labels in $R_i$, which is upper
	bounded by the valuation of the coordinate-encoding agents in $R_i$, i.e. $1/100$. The effect that this could have is that this discrepancy might
	cancel out the discrepancies in favour of $\lplus$ or $\lminus$ introduced by at most $3$ reliable circuits that receive good inputs
	(which happens if all of the valuation of the c-e agent in $R_i$ is labelled $\lminus$ or $\lplus$ respectively). 
	
	However, similarly to before, this can ``invalidate'' at most $7$ points overall and there will still be $33$ points labelled $1$ whose
	contribution to $\lplus$ needs to be cancelled out by points labelled $-1$ and we will be at a solution to $I_{VT}$.
\end{proof}

\section{Equivalence of Consensus Halving and Necklace Splitting}\label{sec:necklace}

In this section, we will prove that approximate Consensus Halving and the well-known Necklace Splitting problem \cite{Pap} are computationally equivalent, in the sense that the reduce to each other in polynomial time. It is important to point out that while our construction is quite general, it does not imply that the Necklace Splitting problem is $\ppa$-hard, because the reduction requires the approximate Consensus Halving problem to have an inverse-polynomial precision parameter $\epsilon$, but it certainly indicates in that direction. In particular, a $\ppa$-hardness result for Consensus Halving with inverse-polynomial accuracy would imply $\ppa$-completeness of the Necklace Splitting problem as well (containment in $\ppa$ is known from \cite{Pap}).

Note also that given that \cite{FFGZ} proved that approximate Consensus Halving is $\ppad$-hard for constant precision parameter $\epsilon$, we obtain as a corollary here that Necklace Splitting is $\ppa$-hard, which partially answers an open question raised in \cite{ABB}. 

\begin{theorem}
	Necklace Splitting is $\ppad$-hard.
\end{theorem}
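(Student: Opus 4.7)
The plan is to derive this theorem as an immediate consequence of two ingredients: the \ppad-hardness of $\epsilon$-\ch\ for constant $\epsilon$ established in~\cite{FFGZ}, together with a polynomial-time reduction from $\epsilon$-\ch\ to \ns\ that works whenever $\epsilon$ is inverse polynomial in $n$. The reduction is the substance of the rest of this section, so the theorem itself is obtained simply by composing it with~\cite{FFGZ}.

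First I would recall that~\cite{FFGZ} shows $\epsilon$-\ch\ is \ppad-hard for some constant $\epsilon_0 > 0$, which in particular yields hardness for every $\epsilon \leq \epsilon_0$ and in particular for any inverse polynomial choice. I would then construct the reduction to \ns\ as follows. Given an instance $I_{CH}$ with $n$ agents whose measures $\mu_i$ are step functions with at most $s$ pieces each, and precision parameter $\epsilon = 1/\mathrm{poly}(n)$, I would subdivide the domain $A$ into $N = \Theta(ns/\epsilon)$ equal-length intervals (``beads''), ensuring that each bead lies entirely within a single piece of every $\mu_i$. For each bead and each agent $i$, the value $\mu_i$ integrated over that bead becomes the amount of ``colour $i$'' carried by the bead; scaling to a common integer denominator turns this data into a genuine necklace with $n$ colours. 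A solution to \ns\ consists of $n$ cuts at bead boundaries together with a $\pm$ labelling of the resulting pieces that splits the total amount of each colour evenly. Interpreting these cuts and labels directly as cuts and labels in $A$ gives a \ch\ solution, while snapping the cuts of an $\epsilon$-\ch\ solution to the nearest bead boundaries goes the other way.

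The main obstacle is the quantitative control of the discretisation error and its interaction with the choice of $\epsilon$. Rounding a cut to the nearest bead boundary introduces at most $1/N$ units of imbalance per agent per cut, for a total error at most $n/N$; choosing $N = \Theta(n/\epsilon)$ (and further refining so that bead boundaries include all step-function breakpoints) makes this at most $\epsilon$, as required. Crucially, this reduction runs in polynomial time only because $N$ is polynomial in the input size when $\epsilon$ is inverse polynomial; if we instead tried to plug in the inverse-exponential $\epsilon$ driving the \ppa-hardness reduction of Section~\ref{sec:consensus}, the number of beads (and hence the necklace size) would become exponential and the reduction would no longer be polynomial. This is precisely why the present theorem delivers only \ppad-hardness: it consumes the constant-$\epsilon$ \ppad-hardness of~\cite{FFGZ} rather than our inverse-exponential-$\epsilon$ \ppa-hardness. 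A minor additional care point is to reconcile the alternating-label convention of \ch\ solutions with the two-sided partition in \ns, which is handled by the standard merging of adjacent equally-labelled pieces and reinsertion of unused cuts at an endpoint.
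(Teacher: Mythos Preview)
Your high-level strategy matches the paper's exactly: the theorem is obtained by composing the \ppad-hardness of constant-$\epsilon$ \ch\ from~\cite{FFGZ} with a polynomial-time reduction from inverse-polynomial-$\epsilon$ \ch\ to \ns, and you correctly explain why only \ppad- (not \ppa-) hardness follows. The difference is in the reduction itself, and your version has a genuine gap.

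The paper's reduction (Theorem~\ref{thm:divtoneckreduction}) subdivides, for each agent separately, that agent's valuation blocks into sub-blocks of equal \emph{volume} $\delta$ and places one bead of that agent's colour at the midpoint of each sub-block. This automatically yields a standard necklace with single-colour beads, and the total bead count is $O(n/\delta)$ regardless of how the step-function densities vary across the domain; the error incurred by reusing the \ns\ cuts in $A$ is then at most $\delta$ per agent per cut. Your reduction instead chops the domain into equal-\emph{length} cells and lets each cell carry an ``amount of colour $i$'' for every $i$ simultaneously. That is not a valid \ns\ instance---in \ns\ each bead has exactly one colour---and the phrase ``scaling to a common integer denominator turns this data into a genuine necklace'' does not explain how overlapping valuations (which do occur in the instances coming from~\cite{FFGZ}) become single-colour beads, nor how an \ns\ cut falling inside such an expanded cluster is mapped back to a point of $A$. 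Relatedly, your error bound of ``$1/N$ units of imbalance per agent per cut'' tacitly assumes unit density: moving a cut across one equal-length cell perturbs agent $i$'s value by (density)$\times$(cell length), and since \ch\ step functions may take heights differing by exponential ratios, bounding this by $\epsilon/n$ with only polynomially many equal-length cells is not possible in general. The paper's equal-volume-per-agent discretisation sidesteps both issues at once.
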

Below, we define the problem formally.

\begin{definition}[Necklace Splitting \cite{Pap}]
In the necklace splitting problem, there is an open necklace (an interval) with $k \cdot m$ beads, each of which has one of $n$ colours. There are precisely $\alpha_i \cdot k$ beads of colour $i=1,\ldots,n$, where $\alpha_i \in \mathbb{N}^+$. The task is to partition the interval into $k$ (not necessarily) contiguous pieces such that each piece contains exactly $\alpha_i$ beads of colour $i$, using at most $(k-1)\cdot n$ cuts.
\end{definition}

We will denote the associated computational problem as $(n,k)$-\necklace. In fact, we will define a more general version of the problem where we are allowed to use $\ell$ cuts, for some $\ell$ which is bounded by a polynomial in $n$. Let $b_i(O)$ denote the number of beads in an interval $O$.
			
\begin{definition}[$(n,\ell,k)$-\necklace]
	\begin{itemize}
		\item[]
\item[-] \textbf{Input:} $k \cdot m$ beads placed on an interval $O$ with $\alpha_i \cdot k$ beads of colour $i=1,\ldots,n$ where $\alpha_i \in \mathbb{N}^+$, with $k \leq n$.
				
\item[-] \textbf{Output:} A partition of $O$ into $k$ parts $O_1,O_2,\ldots, O_k$ such that for each colour $i=1,\ldots,n$, it holds that for each $j \in \{1,\ldots,k\}$, it holds that $b_i(O_j)=\alpha_i$, using $(k-1)\cdot \ell$ cuts.
\end{itemize}
\end{definition}	
			
			\noindent Also, we define a generalisation to the consensus-halving problem, the approximate consensus $(1/k)$ division problem \cite{SS03}. In this problem, we are looking for an solution with $(k-1)\cdot \ell$ cuts that divides the interval into $k$ portions, which are of approximately the same value for each agent.
			
			\begin{definition}[$(n,\ell,\epsilon)$-\CKD]
				\begin{itemize}
					\item[]
					\item[-] \textbf{Input:} The value measures $\mu_i : O \rightarrow R_+, i=1,\cdots,n$, for $n$ agents and $k \leq n$. 
					
				\item[-]	\textbf{Output:} A partition $(O_{1},O_{2},\ldots O_{k})$ with $(k-1) \cdot \ell$ cuts such that $|u_i(O_{t})-u_i(O_{j})|\leq\epsilon$ for all $t$ and $j$ and for all agents $i \in N$.
				\end{itemize}
				\end{definition}	
				We will use the terms ``consensus division'' and ``necklace splitting'' loosely to refer to these problems without specifying the number of partitions or cuts.
			
			\subsection{From Approximate Consensus-Division to Necklace Splitting}
			
			In this subsection, we will establish a reduction from $(n,\ell,\epsilon)$-\CKD\ to $(n,\ell,k)$-\necklace, for all $\ell$ which are bounded by a polynomial in $n$. The following facts hold about any instance of $(n,\ell,\epsilon)$-\CKD. 
			\begin{itemize}
				\item All the agents' valuations are represented as piecewise constant functions.
				\item The number of pieces of these functions is upper bounded by some $p_{\mathcal{M}}(n)$ where $p_{\mathcal{M}}$ is a polynomial.
				\item The volume of each piece is upper bounded by some $p_{\mathcal{V}}(n)$ where $p_{\mathcal{V}}$ is a polynomial.\\
			\end{itemize}
			
				\begin{theorem}\label{thm:divtoneckreduction}
					$(n,\ell,\epsilon)$-\CKD\ is polynomial-time reducible to $(n,\ell,k)$-\necklace, when the number of cuts $\ell$ is bounded by a polynomial in $n$ and $\epsilon$ is inverse-polynomial in $n$.
				\end{theorem}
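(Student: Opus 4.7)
The plan is to reduce to $(n,\ell,k)$-\necklace\ by discretising each agent's piecewise-constant valuation into equal-value ``beads.'' First I will pick a granularity $N=\Theta(k\ell/\epsilon)$, which is polynomial in $n$ because $\ell$, $k\le n$, and $1/\epsilon$ are all polynomial in $n$. Normalising so that $\mu_i(O)=1$, for each agent $i$ I will locate positions $x_{i,1}<\cdots<x_{i,Nk}$ in $O$ with $\mu_i([0,x_{i,j}])=j/(Nk)$; because $\mu_i$ is piecewise constant with polynomially many pieces, equal-value subdivision within each piece is just equal-length subdivision, so these positions are easily computable and have polynomial bit-length. I will place a bead of colour $i$ at each $x_{i,j}$ and list the $nNk$ resulting beads in left-to-right order along $O$, breaking ties among coincident bead positions by a fixed deterministic rule, to form the necklace.

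Next I will invoke the $(n,\ell,k)$-\necklace\ oracle on this necklace, which has $\alpha_i=N$ beads of colour $i$ for every $i$. It returns $(k-1)\ell$ cuts placed between consecutive beads, partitioning the beads into $k$ groups each with exactly $N$ beads of every colour. To recover a $1/k$-division of $O$, I will relocate each cut to (say) the midpoint of the open interval on $O$ separating the two beads it lies between, producing a partition $(O_1,\ldots,O_k)$ of $O$ using $(k-1)\ell$ cuts.

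The key step will be the error analysis. Because consecutive beads of colour $i$ account for exactly $1/(Nk)$ of $\mu_i$, for any subinterval of $O$ bounded by two cuts the discrepancy between $\mu_i(\cdot)$ and (number of beads of colour $i$ in it)$/(Nk)$ is less than $1/(Nk)$. The $(k-1)\ell$ cuts partition $O$ into $(k-1)\ell+1$ subintervals distributed across the $k$ pieces, so summing over the subintervals composing any $O_j$ yields $|\mu_i(O_j)-1/k|<((k-1)\ell+1)/(Nk)\le \epsilon/2$, and the triangle inequality then delivers $|\mu_i(O_t)-\mu_i(O_j)|\le\epsilon$ for all $i,t,j$, verifying the $(n,\ell,\epsilon)$-\CKD\ condition. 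The hard part will be calibrating $N$ so that it remains polynomial yet dominates the cumulative boundary error from all cuts; this is exactly why the theorem requires both $\ell$ and $1/\epsilon$ to be polynomial in $n$. Minor bookkeeping (the endpoints of $O$ and coincident bead positions) is absorbed by treating the endpoints of $O$ as implicit boundaries and by the fixed tie-breaking rule.
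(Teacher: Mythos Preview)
Your proposal is correct and follows the same core approach as the paper: discretise each agent's measure into equal-value beads of one colour per agent, call the necklace-splitting oracle, and reuse the returned cuts for the division, with the error coming from the at most $(k-1)\ell+1$ cut-bounded subintervals each contributing $O(1/(Nk))$. The only notable difference is cosmetic: you place beads at exact $1/(Nk)$-quantiles of each $\mu_i$, which automatically yields $Nk$ beads per colour, whereas the paper chops each valuation block into sub-blocks of a fixed volume $\delta$ and then has to account separately for leftover ``imperfect'' sub-blocks and for removing up to $k-1$ ``parity beads'' to make each colour count divisible by $k$; your quantile construction is slightly cleaner but the two discretisations and error analyses are equivalent.
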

			
		\begin{proof}	
		Let $\mathcal{C}$ be such an instance of $(n,\ell,\epsilon)$-\CKD. We will construct an instance $\mathcal{B}$ of $(n,\ell,k)$-\necklace\ as follows: For each agent $i \in \{1,2,\ldots,n\}$ of $\mathcal{C}$,
			\begin{itemize}
				\item Associate a different colour $c_i$.
				\item Repeat for all of agent $i$'s valuation blocks:\\
				\begin{itemize}
				\item Let $V$ be the volume of the block and let $\alpha$ be the interval on which the block is defined. Divide the block into $V/\delta$ sub-blocks of volume $\delta$ each, where 
				\[\delta=\frac{\epsilon}{n^3[(k-1)(\ell+1)+p_M(n)]},\]
				except possibly the last sub-block which will have volume $\delta'\leq\delta$. We will call such a sub-block an \emph{imperfect sub-block}. Let $\alpha_j$ denote the corresponding intervals, for $j=1,\ldots, \lceil V/\delta \rceil$. \\
				\item Place a bead of colour $c_i$ in the middle of each interval $\alpha_j$. \\
				\end{itemize}
				\item If the total number $b_i$ of beads of colour $c_i$ placed is not a multiple of $k$, remove $b_i \bmod k$ beads of colour $c_i$. We will refer to these beads as the \emph{parity beads}.
			\end{itemize}
			Intuitively, each bead ``represents'' a valuation block of volume $\delta$ and some beads represent the imperfect sub-blocks of smaller volume. See Figure \ref{fig:necklace} for an example of the construction, when $k=2$ and $\ell=n=2$. Note that the construction requires to partition the instance into at most $p_{\mathcal{M}}(n) \cdot p_{\mathcal{V}}(n)/\delta$ intervals and find their midpoints and therefore runs in polynomial time. Next, we will argue for correctness.
			
			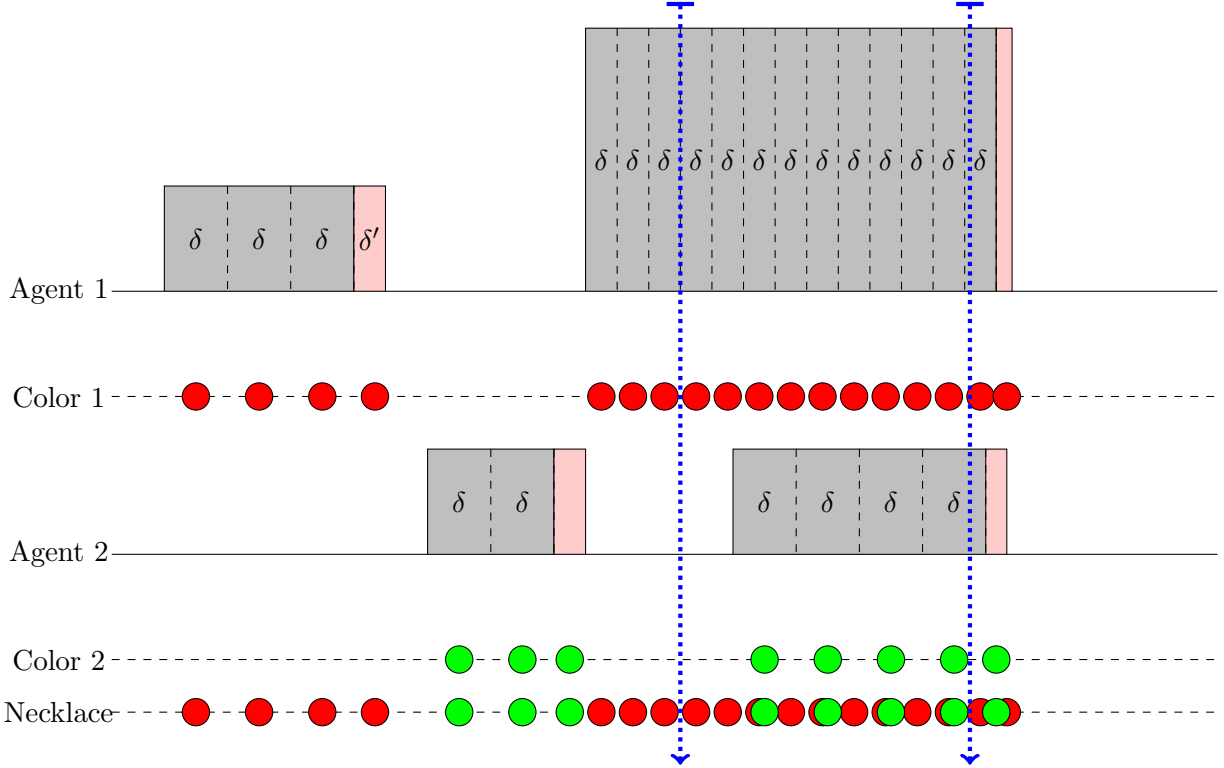
\begin{figure}[t]
				\center
			\begin{tikzpicture}[scale=0.7]
			\node (a_1) at (-2,15) {Agent 1}; 
			\node (b_1) at (-2,10) {Agent 2}; 
			\node () at (-2,13) {Color 1};
			\node () at (-2,8) {Color 2};
			\node () at (-2,7) {Necklace};

			\draw (-1,10) -- (20,10);
			
			\draw (-1,15) -- (20,15);
			\draw [dashed] (-1,13) -- (20,13);
			\draw [dashed] (-1,8) -- (20,8);
			\draw [dashed] (-1,7) -- (20,7);
			
			\draw[fill=lightgray] (0,15) rectangle (3.6,17);
			\draw[fill=red!20] (3.6,15) rectangle (4.2,17);
			\draw[fill=lightgray] (8,15) rectangle (15.8,20);
			\draw[fill=red!20] (15.8,15) rectangle (16.1,20);

			\draw[fill=lightgray] (10.8,10) rectangle (15.6,12);
			\draw[fill=red!20] (15.6,10) rectangle (16,12);
			\draw[fill=lightgray] (5,10) rectangle (7.4,12);
			\draw[fill=red!20] (7.4,10) rectangle (8,12);

			\draw[dashed] (1.2,17) -- (1.2,15);
			\node (d_1) at (0.6,16) {$\mathbf{\delta}$}; 
			
			\draw[dashed] (2.4,17) -- (2.4,15);
			\node (d_2) at (1.8,16) {$\mathbf{\delta}$}; 
			
			\draw[dashed] (3.6,17) -- (3.6,15);
			\node (d_3) at (3,16) {$\mathbf{\delta}$}; 
			
			\node (d_4) at (3.9,16) {$\mathbf{\delta'}$}; 
			
			\draw[dashed] (8.6,20) -- (8.6,15);
			\node (d_1) at (8.3,17.5) {$\mathbf{\delta}$}; 
			
			\draw[dashed] (9.2,20) -- (9.2,15);
			\node (d_1) at (8.9,17.5) {$\mathbf{\delta}$}; 
			
			\draw[dashed] (9.8,20) -- (9.8,15);
			\node (d_1) at (9.5,17.5) {$\mathbf{\delta}$}; 
			
			\draw[dashed] (10.4,20) -- (10.4,15);
			\node (d_1) at (10.1,17.5) {$\mathbf{\delta}$}; 
			
			\draw[dashed] (11,20) -- (11,15);
			\node (d_1) at (10.7,17.5) {$\mathbf{\delta}$}; 
			
			\draw[dashed] (11.6,20) -- (11.6,15);
			\node (d_1) at (11.3,17.5) {$\mathbf{\delta}$}; 
			
			\draw[dashed] (12.2,20) -- (12.2,15);
			\node (d_1) at (11.9,17.5) {$\mathbf{\delta}$}; 
			
			\draw[dashed] (12.8,20) -- (12.8,15);
			\node (d_1) at (12.5,17.5) {$\mathbf{\delta}$}; 
			
			\draw[dashed] (13.4,20) -- (13.4,15);
			\node (d_1) at (13.1,17.5) {$\mathbf{\delta}$}; 
			
			\draw[dashed] (14,20) -- (14,15);
			\node (d_1) at (13.7,17.5) {$\mathbf{\delta}$}; 
			
			\draw[dashed] (14.6,20) -- (14.6,15);
			\node (d_1) at (14.3,17.5) {$\mathbf{\delta}$}; 
			
			\draw[dashed] (15.2,20) -- (15.2,15);
			\node (d_1) at (14.9,17.5) {$\mathbf{\delta}$}; 
			
			\draw[dashed] (15.8,20) -- (15.8,15);
			\node (d_1) at (15.5,17.5) {$\mathbf{\delta}$}; 
			
			
			\draw[dashed] (6.2,12) -- (6.2,10);
			\node (d_1) at (5.6,11) {$\mathbf{\delta}$}; 
			
			\draw[dashed] (7.4,12) -- (7.4,10);
			\node (d_1) at (6.8,11) {$\mathbf{\delta}$}; 
			
			
				\draw[dashed] (12,12) -- (12,10);
				\node (d_1) at (11.4,11) {$\mathbf{\delta}$}; 
			
			\draw[dashed] (13.2,12) -- (13.2,10);
			\node (d_1) at (12.6,11) {$\mathbf{\delta}$}; 
			
			\draw[dashed] (14.4,12) -- (14.4,10);
			\node (d_1) at (13.8,11) {$\mathbf{\delta}$}; 
			
			\draw[dashed] (15.6,12) -- (15.6,10);
			\node (d_1) at (15,11) {$\mathbf{\delta}$};

			\node[circle,draw=black,fill=red] at (0.6,13) {};
			\node[circle,draw=black,fill=red] at (1.8,13) {};
			\node[circle,draw=black,fill=red] at (3,13) {};
			\node[circle,draw=black,fill=red] at (4,13) {};
			
			\node[circle,draw=black,fill=red] at (8.3,13) {};
			\node[circle,draw=black,fill=red] at (8.9,13) {};
			\node[circle,draw=black,fill=red] at (9.5,13) {};
			\node[circle,draw=black,fill=red] at (10.1,13) {};
			\node[circle,draw=black,fill=red] at (10.7,13) {};
			\node[circle,draw=black,fill=red] at (11.3,13) {};
			\node[circle,draw=black,fill=red] at (11.9,13) {};
			\node[circle,draw=black,fill=red] at (12.5,13) {};
			\node[circle,draw=black,fill=red] at (13.1,13) {};
			\node[circle,draw=black,fill=red] at (13.7,13) {};
			\node[circle,draw=black,fill=red] at (14.3,13) {};
			\node[circle,draw=black,fill=red] at (14.9,13) {};
			\node[circle,draw=black,fill=red] at (15.5,13) {};
			\node[circle,draw=black,fill=red] at (16,13) {};
			
			\node[circle,draw=black,fill=green] at (5.6,8) {};
			\node[circle,draw=black,fill=green] at (6.8,8) {};
			\node[circle,draw=black,fill=green] at (7.7,8) {};
			
			\node[circle,draw=black,fill=green] at (11.4,8) {};
			\node[circle,draw=black,fill=green] at (12.6,8) {};
			\node[circle,draw=black,fill=green] at (13.8,8) {};
			\node[circle,draw=black,fill=green] at (15,8) {};
			\node[circle,draw=black,fill=green] at (15.8,8) {};

			\node[circle,draw=black,fill=red] at (0.6,7) {};
			\node[circle,draw=black,fill=red] at (1.8,7) {};
			\node[circle,draw=black,fill=red] at (3,7) {};
			\node[circle,draw=black,fill=red] at (4,7) {};
			
			\node[circle,draw=black,fill=red] at (8.3,7) {};
			\node[circle,draw=black,fill=red] at (8.9,7) {};
			\node[circle,draw=black,fill=red] at (9.5,7) {};
			\node[circle,draw=black,fill=red] at (10.1,7) {};
			\node[circle,draw=black,fill=red] at (10.7,7) {};
			\node[circle,draw=black,fill=red] at (11.3,7) {};
			\node[circle,draw=black,fill=red] at (11.9,7) {};
			\node[circle,draw=black,fill=red] at (12.5,7) {};
			\node[circle,draw=black,fill=red] at (13.1,7) {};
			\node[circle,draw=black,fill=red] at (13.7,7) {};
			\node[circle,draw=black,fill=red] at (14.3,7) {};
			\node[circle,draw=black,fill=red] at (14.9,7) {};
			\node[circle,draw=black,fill=red] at (15.5,7) {};
			\node[circle,draw=black,fill=red] at (16,7) {};
			
			\node[circle,draw=black,fill=green] at (5.6,7) {};
			\node[circle,draw=black,fill=green] at (6.8,7) {};
			\node[circle,draw=black,fill=green] at (7.7,7) {};
			
			\node[circle,draw=black,fill=green] at (11.4,7) {};
			\node[circle,draw=black,fill=green] at (12.6,7) {};
			\node[circle,draw=black,fill=green] at (13.8,7) {};
			\node[circle,draw=black,fill=green] at (15,7) {};
			\node[circle,draw=black,fill=green] at (15.8,7) {};

			\draw[ultra thick,dotted,draw=blue,|->] (9.8,20.5) -- (9.8,6);
			\draw[ultra thick,dotted,draw=blue,|->] (15.3,20.5) -- (15.3,6);

			\end{tikzpicture}
			\caption{An example of the reduction when $k=2$ and $\ell=n=2$. The red (dark) beads correspond to Agent 1 and the green (light) beads correspond to Agent 2. The pink (lightgray) area corresponds to the last part of each valuation block, which has volume $\delta' \leq \delta$. The blue dotted lines indicate the positions of the cuts; note that while the first cut lies exactly at the boundary of two sub-blocks of valuation $\delta$ for Agent 1 and in a ``value-free'' region for Agent 2, the second cut intersects the interior of a sub-interval for both agents.}
			\label{fig:necklace}
			\end{figure}
			
			 Let $\mathcal{S}$ be a solution to $\mathcal{B}$ (which uses $(k-1)\ell$ cuts); we will prove that having the same cuts in the same positions precisely gives a solution to $(n,\ell,\epsilon)$-\CKD. Consider any agent $i$ and label the beads of colour $c_i$ with $j=1,\ldots,t$ for some $t \in \mathbb{N}$ being the total number of beads of colour $c_i$ according to the construction above. Let $[d_j,d_{j+1}]$ be the interval defined by two consecutive such beads. 
			 
			 Note that by the construction above, if (i) all sub-blocks of agent $i$ have volume exactly $\delta$ (i.e. there are no imperfect sub-blocks), (ii) there are no parity beads and (iii) each cut in $\mathcal{S}$ either doesn't intersect any valuation block or is placed on the midpoint $(d_j+d_{j+1})/2$ of some interval $[d_j, d_{j+1}]$ (i.e. at the boundary of one or two valuation sub-blocks, e.g. see the first cut in Figure \ref{fig:necklace}), then $\mathcal{S}$ is an exact solution to the consensus $1/k$-division problem. However, in addition to the potential existence of imperfect sub-blocks and the parity beads, the cuts in $\mathcal{S}$ might actually be placed on different points in $[d_j,d_{j+1}]$, because of the presence of beads of different colours which might be placed inside the intervals (e.g. see the second cut in the interval between the last two green (dark) beads in Figure \ref{fig:necklace} for an example).
			 
			  Note however that such a cut will still lie inside $[d_j,d_{j+1}]$, as otherwise the partition of beads would be imbalanced; therefore the imbalance in volume for such a cut is at most $\delta$. Since there are at most $(k-1)\ell$ cuts in total, the overall imbalance in volume because of the position of the cuts is at most $(k-1)\cdot \ell \cdot \delta$. Additionally, the imbalance in volume from each imperfect sub-block is at most $\delta$, and the overall imbalance in volume because of imperfect sub-blocks is at most $p_\mathcal{M}(n) \cdot \delta$. Finally, the imbalance in volume due to the parity beads is at most $(k-1) \cdot \delta$, since the parity-preserving procedure can remove at most $k-1$ beads for each agent. In total, the overall imbalance is at most $(k-1)\cdot \ell\cdot \delta + p_\mathcal{M}(n)\cdot \delta + (k-1)\cdot \delta$ which is less than $\epsilon$, by the choice of $\delta$. \qed
			  \end{proof}

			  \subsection{From Necklace Splitting to Approximate Consensus Division}
			  
			  In this subsection, we prove that the Approximate Consensus Division solution is at least as hard as Necklace Splitting; together with the result of the previous subsection, this establishes the computational equivalence of the two problems.
			  
			  \begin{theorem}\label{thm:necktodivreduction}
			  	$(n,\ell,k)$-\necklace\ is polynomial-time reducible to $(n,\ell,\epsilon)$-\CKD , when the number of cuts $\ell$ is bounded by a polynomial in $n$ and $\epsilon$ is inverse-polynomial in $n$.
			  \end{theorem}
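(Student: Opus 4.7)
Given an instance $\mathcal{B}$ of $(n,\ell,k)$-\necklace\ with $N=km$ beads at positions $1,2,\ldots,N$ and colour assignment $c:[N]\to[n]$, I will construct an instance $\mathcal{C}$ of $(n,\ell,\epsilon)$-\CKD\ as follows. The cake is $[0,N]$, and for each colour $c\in[n]$, agent $c$ has the piecewise-constant valuation density $\mu_c(x)=1/(\alpha_c k)$ on $[j-1,j]$ whenever bead $b_j$ has colour $c$, and $0$ elsewhere, so $\mu_c$ integrates to $1$ and each bead of colour $c$ contributes exactly $1/(\alpha_c k)$ to $\mu_c$. I choose $\epsilon$ to be inverse polynomial, say $\epsilon=1/(2\alpha_{\max}k^2)$, so that in any $\epsilon$-approximate solution the ``fractional bead count'' $\beta_c(P):=\alpha_c k\cdot v_c(P)$ satisfies $|\beta_c(P)-\alpha_c|\leq \alpha_c k\epsilon\leq 1/(2k)$ for every piece $P$ and colour $c$.

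Given such an $\epsilon$-approximate \CKD\ solution $\mathcal{S}$, with cuts $c_1<\cdots<c_{(k-1)\ell}$ and a labelling of the induced sub-intervals into pieces $P_1,\ldots,P_k$, I produce a \necklace\ solution by rounding each cut that lies strictly inside a bead to one of the two bead boundaries, and inheriting the labelling on the bead-aligned sub-intervals. Each rounding corresponds to assigning the \emph{partial bead} (the bead cut by $c_r$) entirely to one of its two adjacent pieces. The task reduces to showing that the rounding directions can be chosen so that every piece $P$ contains exactly $\alpha_c$ beads of each colour $c$. I formalise this as a $0/1$ integer program: for each partial bead $j$ of colour $c_j$ separating two distinct pieces $P_j^L\neq P_j^R$, introduce $z_j\in\{0,1\}$ indicating whether $j$ is assigned to $P_j^L$, subject to $f_c(P)+\sum_{j:P_j^L=P}z_j+\sum_{j:P_j^R=P}(1-z_j)=\alpha_c$ for every $(c,P)$, where $f_c(P)$ counts the full beads of colour $c$ in $P$. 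The constraint system decouples across colours, and for each colour the constraint matrix is the signed incidence matrix of a directed multigraph on the $k$ pieces (with one edge per partial bead of that colour), hence totally unimodular.

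By total unimodularity, the LP relaxation over $z_j\in[0,1]$ has integer vertices, so it suffices to show that the LP is feasible. The fractions $\theta_j$ read off from $\mathcal{S}$ give a point in $[0,1]^{r_c}$ satisfying each constraint up to slack at most $1/(2k)$; summing the constraints over any connected component of the multigraph yields cancellation on the left (each $z_j$ appears once with $+1$ and once with $-1$, since both its endpoints lie in the component), which forces the integer right-hand side summed over the component to lie in $(-1/2,1/2)$ and thus to equal $0$. This verifies the only non-trivial necessary condition for LP feasibility. The main obstacle is to promote the approximate feasibility of the $\theta_j$'s to exact feasibility while staying inside the box $[0,1]^{r_c}$: I plan to exploit the linear ordering of cuts along the cake, which forces each connected component of the multigraph to correspond to a contiguous block of cuts and gives enough structural flexibility to perturb individual $\theta_j$'s by at most the slack to reach an exact feasible solution. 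Standard network-flow algorithms on each colour's multigraph then yield an integer feasible rounding in polynomial time, and relocating each cut to the corresponding bead boundary produces the desired \necklace\ solution.
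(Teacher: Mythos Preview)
Your framework is genuinely different from the paper's. The paper follows Alon's original argument: it builds the same per-colour multigraph you do, but then runs an explicit cycle-cancelling procedure (shift all cuts on a cycle by equal amounts, preserving every $v_c(O_j)$, until some cut reaches a block endpoint or merges with another cut), and once the graph is a forest it pushes each remaining bad cut to its nearest endpoint, arguing the $\epsilon$-error forces it to be close. Your route via total unimodularity is cleaner conceptually, and would give the result in one shot once you know the LP relaxation is feasible.

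That feasibility step, however, is where your argument has a real gap, and the plan you sketch does not close it. The claim that ``each connected component of the multigraph corresponds to a contiguous block of cuts'' is false: take the label pattern $P_1\,|\,P_2\,|\,P_3\,|\,P_2\,|\,P_1$ along the cake and suppose only the first and last cuts hit beads of colour $c$. Then the colour-$c$ multigraph has two parallel edges between $P_1$ and $P_2$, coming from non-adjacent cuts with $c_2,c_3$ in between. So the linear order of cuts does not give you the structure you are hoping for, and ``perturb by at most the slack'' is too vague to stand on its own (on a path of length $k-1$ the forced tree solution accumulates error up to $(k-1)/(2k)$, not $1/(2k)$).

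The gap is easy to close with the right tool. Your system $Mz=b$, $0\le z\le 1$ with $M$ a signed incidence matrix is a transshipment problem; by Hoffman's circulation theorem it is feasible iff $b(S)\le |\delta^{+}(S)|$ for every vertex set $S$. From the approximate solution $\theta$ you have $b'(S)=\sum_{e\in\delta^{+}(S)}\theta_e-\sum_{e\in\delta^{-}(S)}\theta_e\le |\delta^{+}(S)|$, and $|b(S)-b'(S)|\le |S|/(2k)\le 1/2$ since $|S|\le k$. As $b(S)$ and $|\delta^{+}(S)|$ are integers, $b(S)\le |\delta^{+}(S)|+1/2$ forces $b(S)\le |\delta^{+}(S)|$. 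This replaces your contiguous-blocks plan entirely.

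One smaller point you should also address: your formulation silently assumes each bead is hit by at most one cut, so that a partial bead has exactly two adjacent pieces. Nothing in an arbitrary $\epsilon$-\CKD\ solution prevents several cuts from landing in the same unit interval $[j-1,j]$; you need either a short preprocessing step or an extension of the IP to handle this (the paper's cycle-cancelling implicitly deals with it when cuts coincide).
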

			  
			  \begin{proof}
			  	The idea that we will use for the reduction is very similar to the one presented by Alon \cite{Alon87} for proving that a solution to (discrete) Necklace Splitting can be obtained from a solution for the continuous version. The proof in \cite{Alon87} starts from an (exact) solution to the continuous problem and proves using induction that it can be transformed into a solution for the discrete version, but appropriately moving some of the cuts, if needed. Here, we explain how to obtain a solution to Necklace Splitting from an \emph{approximate} solution of the continuous division problem and that this process runs in time polynomial in the number of beads of the necklace. 
			
			The main idea is to design an instance of $(n,\ell,\epsilon)$-\CKD\ by representing beads of colour $i \in \{1,2,\ldots,n\}$ of the instance of $(n,\ell,k)$-\necklace\ by uniform valuation blocks of agent $i \in \{1,2,\ldots,n\}$ that have no overlap between agents. Then, there exists a solution to the Consensus Division problem that does not cut through the intervals and that solution is a valid partition of the necklace. Starting from an arbitrary solution (which might have cuts that intersect the valuation intervals), we will move these cuts (if any) to the endpoints of the intervals one by one, while maintaining the total volume of each portion $O_j$ for $j=1,\ldots,k$ unchanged. \\
			  	
			  	\noindent More concretely, given an instance $\mathcal{B}$ of $(n,\ell,k)$-\necklace\, we design an instance $\mathcal{C}$ of $(n,\ell,\epsilon)$-\CKD\ as follows:
			  	\begin{itemize}
			  		\item For every colour $c_i \in \{1,2,\ldots,n\}$ of $\mathcal{B}$, we associate an agent $i$.
			  		\item For every bead of colour $c_i$, we create a valuation block of width $\delta$ and height $1/\delta$ for some sufficiently small $\delta$, such that the bead lies in the midpoint of the interval corresponding to the valuation block. Note that without loss of generality, we can assume that in $\mathcal{B}$, the beads are sufficiently spread (this does not affect the solution) and therefore there is no overlap between any two valuation blocks and in fact, the distance between any two valuation blocks is at least $\beta$, for some sufficiently large $\beta$. 
			  	\end{itemize}
			  	Note that by taking $\beta$ to be larger than $2\epsilon$, we can ensure that in any solution $\mathcal{S}$ of $\mathcal{C}$, each cut intersects with at most one valuation block and therefore all agents have their own designated cuts. In other words, manipulating the positions of the cuts that intersect some valuation interval $[l,r]$ of one agent does not affect the quality of the solution for any other agent, as long as the cuts remain within $[l - \beta/2, r+\beta/2]$ (i.e. they do not move into other valuation blocks).
			  	
			  	Now consider a solution $\mathcal{S}$ to $\mathcal{C}$. If for all agents $i \in \{1,2,\ldots,n\}$, the cuts do not intersect any valuation blocks, then the solution can be translated verbatim to a solution to $(n,\ell,k)$-\necklace\, by keeping the cuts at the same positions. However, there might be several cuts that intersect the interior of the valuation intervals; we will refer to those as \emph{bad} cuts. Let $\mathcal{B}_i$ be the set of bad cuts for agent $i$. We will be able to move these cuts based on the following observation. Consider a bad cut at $e$ intersecting a valuation block of agent $i$, defined on an interval $[l,r]$ and let $j_1$ and $j_2$ be the labels of the pieces on the left side and on the right side of the cut respectively (we can assume that $j_1 \neq j_2$, otherwise we can simply remove the cut). Let $o^{j_1}=v_i([l,e])$ and $o^{j_1}=v_i([e,r])$ be the volumes of the sub-blocks in $[l,r]$ corresponding to each label.
			  	
			  	Assume without loss of generality that $o^{j_1}>o^{j_2}$ (the argument of the other case is symmetric). If  $o^{j_1}-o^{j_2}\geq \epsilon$, this implies that for the remaining valuation of agent $i$ (besides $o^{j_1}$ and $o^{j_2}$), there is an excess of volume labelled by $j_2$, otherwise $\mathcal{C}$ would not be an approximate solution to the Consensus Division problem. Then, we can move the cuts accordingly (by also possibly moving some other cuts in the process) such that the two excesses cancel out; we explain how to do that below. Note that since we have started from an approximate solution $\mathcal{C}$ of $(n,\ell,\epsilon)$-\CKD, after this procedure, bad cuts might still exist, but they will only account for small discrepancies and can be easily handled; we will refer to those cuts as the \emph{inaccuracy cuts}.
			  	
			  	Following \cite{Alon87}, we will consider a set of multigraphs (one for each agent) $G_i = (V_i,E_i)$, where $V_i=\{F_1,F_2,\ldots,F_k\}$, i.e. we have one vertex for each one of the $k$ possible labels. Each edge of the graph will correspond to a cut; in particular, there is an edge $(F_a,F_b)$ for each bad cut between two pieces with labels $a$ and $b$, and note that there might be multiple such edges. Note that by the discussion above, if $|v_i(O_a) - v_i(O_b)|\geq \epsilon$, then the degree of both $F_a$ and $F_b$ is at least $2$ and therefore the graph has at least one cycle. 
			  	
			  	For each agent $i$, we will use two subroutines, one to eliminate all cuts in $B_i$ except for possibly the inaccuracy cuts and the second one to eliminate the remaining bad cuts. For the first subroutine, we will work with the graph $G_i$ and we will eliminate cuts in $B_i$ in steps, by removing edges of the graph, i.e. the graph will be evolving. After each step, the following invariant will be maintained:
				\emph{The total volume of each partition remains unchanged and the number of bad cuts will be reduced by at least $1$.} The subroutine is stated below.\\
				
			  \begin{algorithmic}
			  	\While{$G_i$ has a cycle}
			  	\State Find a cycle $(F_{j_1},F_{j_2},\ldots,F_{j_m})$
			  	\State \textbf{Cycle resolving:} Move all the cuts corresponding to edges on the cycle by the same amount. For the cut corresponding to the first edge of the cycle, move it in the direction that increases\footnote{We can also move the cut in the other direction; that will correspond to the same solution with a permutation of the labels along the cycle.}  the volume of the label $F_{j_1}$ and therefore decreases the volume of the label $F_{j_2}$. We will call such a direction \emph{increasing for $F_{j_1}$} and \emph{decreasing for $F_{j_2}$}. For any other cut corresponding to an edge $(F_{j_{h-1}},F_{j_{h}})$ move it in the direction with the opposite effect of the previous movement with respect to label $F_{j_{h}}$, i.e. if the previous cut was moved in an increasing direction for $F_{j_{h}}$, the cut will move in a decreasing direction for $F_{j_{h}}$. Move the cuts until either:
			  	\begin{itemize}
			  		\item Some cut coincides with another cut. In that case, merge all the coinciding cuts and remove the labels of the pieces of volume $0$.
			  		\item Some cut coincides with the endpoint of an interval. 
			  	\end{itemize}
			  	\EndWhile
			  \end{algorithmic}
			  \bigskip
			  It is clear that at the end of each step of the procedure above, the number of bad cuts is decreased by at least one, either because the cut was merged with another bad cut, or because the cut was moved to the boundary of the interval. Additionally, since all the cuts have been moved by the same amount and for a each vertex in the cycle, one move was in an increasing direction and one was in a decreasing direction, the total volume of each portion $O_j$, for $j=1,\ldots,k$, remains unchanged. Finally, at the end of the routine, all cycles have been resolved and the graph $G_i$ is acyclic.
			  
			  Note that since the number of steps is at most $\ell$ and the number of cuts moved in each round is at most $\ell$, the subroutine runs in polynomial time, since the number of cuts $\ell$ is bounded by some polynomial $p_\ell(n)$. If $\mathcal{C}$ was an exact solution, the reduction would be completed here;  however, since we start from an approximate solution to the Consensus Division problem, we need a second subroutine to deal with the inaccuracy cuts. The subroutine will essentially transform the approximate solution of $\mathcal{C}$ into an exact solution, which in turn is a solution to $\mathcal{B}$.
			  	
			  This subroutine will be simple, just move each cut to the closest endpoint of the interval $[l,r]$ whose interior it intersects. Note that the imbalance in volume between any two labels $j_1$ and $j_2$ is due to a single bad cut, otherwise the graph $G_i$ would have a cycle. Since each valuation block is constructed to have total volume $1$, the cut must lie in $[l,l+\gamma] \cup [r-\gamma,r]$, where $\gamma < \epsilon \cdot \delta$ and therefore it can unambiguously be moved to the closest endpoint of $[l,r]$.  Additionally, this sequence of moves produces an exact solution to $\mathcal{C}$, as otherwise, the original solution would have a discrepancy larger than $\epsilon$ with respect to at least two partitions $O_{j_1},O_{j_2}$. Since there are only polynomially many cuts, the second subroutine also runs in polynomial time.
			  This completes the proof.

			  \end{proof}

			 \subsection{Hardness results for (approximate) Necklace Splitting}
			 
			First, we remark that we can define an approximate version of Necklace Splitting, where the goal is to partition the necklace into pieces that contain \emph{approximately} the same number of beads of colour $c_i$, for each $i = 1,2,\ldots,n$. Formally:
			 
			 \begin{definition}[$(n,\ell,k,\varepsilon)$-\necklace]
			 	\begin{itemize}
			 	\item[-]
			 	\item[-] \textbf{Input:} $k \cdot m$ beads placed on an interval $O$ with $\alpha_i \cdot k$ beads of colour $i=1,\ldots,n$ where $\alpha_i \in \mathbb{N}^+$, with $k \leq n$ and a parameter $0\leq \varepsilon \leq \alpha_i$.
			 	
			 	\item[-] \textbf{Output:} A partition of $O$ into $k$ parts $O_1,O_2,\ldots, O_k$ such that for each colour $i=1,\ldots,n$, for each $j \in \{1,\ldots,k\}$, it holds that $b_i(O_j) \in [\underline{\alpha_i},\overline{\alpha_i}]$, where $\underline{\alpha_i}=\lceil\alpha_i-\varepsilon\rceil$ and $\overline{\alpha_i}=\lfloor \alpha_i+\varepsilon \rfloor$, using $(k-1)\cdot \ell$ cuts.
			 	\end{itemize}
			 \end{definition}	
			 
			 \noindent Note that according to the definition above, if $\varepsilon$ is chosen to be small and the number of beads of some colour is small, the partition for that colour is required to be exact, but if there are enough beads of a colour, then the partition is allowed to be ``off'' by a few beads. The approximate Consensus-$1/k$-Division problem actually also reduces to the approximate Necklace Splitting problem, when the error factor $\varepsilon$ in the latter is inversely exponential in the number of beads of each colour; it is not hard to see that a solution to $(n,\ell,k,\varepsilon)$-\necklace\ will only add an inversely polynomial error factor to the solution to $(n,\ell,\epsilon')$-\CKD\ and the theorem will still hold. We state the result as a corollary of Theorem \ref{thm:divtoneckreduction}.
			 
			 \begin{corollary}
			 	\label{thm:divtoapproxneckreduction}
			 	$(n,\ell,\epsilon')$-\CKD\ is polynomial time reducible to $(n,\ell,k,\varepsilon)$-\necklace, when the number of cuts $\ell$ is bounded by a polynomial in $n$ and $\epsilon,\varepsilon$ are inverse-polynomial in $n$.
			 \end{corollary}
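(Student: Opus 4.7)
The plan is to re-run the construction from the proof of Theorem \ref{thm:divtoneckreduction} essentially verbatim, and then redo the error accounting to absorb the slack introduced by the approximation parameter $\varepsilon$. Given an instance $\mathcal{C}$ of $(n,\ell,\epsilon')$-\CKD, I would build the necklace $\mathcal{B}$ by the same recipe: assign one colour per agent, chop each of agent $i$'s valuation blocks into sub-blocks of volume $\delta$ (with at most one imperfect sub-block of volume $<\delta$ per original block), place a bead at the midpoint of each sub-block, and strip at most $k-1$ parity beads per colour. I would then interpret a solution to $(n,\ell,k,\varepsilon)$-\necklace\ as a solution to $\mathcal{C}$ by keeping the cuts in their given positions.

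The error analysis proceeds by retracing the three sources of imbalance bounded in Theorem \ref{thm:divtoneckreduction}, and adding a fourth. First, each cut lies inside the bead-interval it was meant to land between, contributing at most $\delta$ per cut, so at most $(k-1)\ell\cdot\delta$ in total. Second, the imperfect sub-blocks contribute at most $p_\mathcal{M}(n)\cdot\delta$. Third, the parity beads contribute at most $(k-1)\cdot\delta$. The new fourth term comes from the fact that a solution to $(n,\ell,k,\varepsilon)$-\necklace\ only guarantees $b_i(O_j)\in[\lceil\alpha_i-\varepsilon\rceil,\lfloor\alpha_i+\varepsilon\rfloor]$ per colour per piece, so the count for each colour in each piece can drift from $\alpha_i$ by up to $\varepsilon$ beads; each such bead carries volume at most $\delta$, so across the $k$ pieces this contributes an additional imbalance of at most $k\cdot\varepsilon\cdot\delta\leq n\cdot\varepsilon\cdot\delta$ per agent.

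Combining, the maximum imbalance in $\mathcal{C}$ between any two pieces $O_t, O_j$ for any agent $i$ is bounded by
\[
\bigl[(k-1)\ell + p_\mathcal{M}(n) + (k-1) + n\varepsilon\bigr]\cdot\delta.
\]
To force this to be at most $\epsilon'$, I would simply set
\[
\delta \;=\; \frac{\epsilon'}{n^3\bigl[(k-1)(\ell+1)+p_\mathcal{M}(n)+n\varepsilon\bigr]},
\]
which is inverse-polynomial in $n$ provided $\ell$, $\varepsilon$ and $1/\epsilon'$ are all polynomial in $n$ (as hypothesised). The construction then uses $O(p_\mathcal{M}(n)\cdot p_\mathcal{V}(n)/\delta)$ beads, still polynomial in $n$, so the reduction remains polynomial time. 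I do not expect any real obstacle here: the only point requiring care is confirming that $\varepsilon\cdot\delta$ stays inverse-polynomial so that the enlarged denominator in the definition of $\delta$ does not blow up the encoding size, which is immediate from the polynomial bounds on $\varepsilon$ and $1/\epsilon'$.
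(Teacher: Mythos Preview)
Your proposal is correct and follows exactly the approach the paper intends: the paper states this result as an immediate corollary of Theorem~\ref{thm:divtoneckreduction}, remarking only that a solution to the approximate necklace problem ``will only add an inversely polynomial error factor'' to the consensus-division solution. You have supplied precisely that missing fourth error term and adjusted $\delta$ accordingly, which is more detail than the paper itself gives. One minor remark: your bound of $k\varepsilon\delta$ for the bead-count drift is safe but slightly loose, since the pairwise imbalance between any two pieces from this source is at most $2\varepsilon\delta$; either bound suffices.
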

			 In \cite{FFGZ}, it was proven that the approximate Consensus Halving problem is $\ppa$-hard when $n+t$ cuts are used, for some constant $t$ and NP-hard when only $n-1$ cuts are used. From these results and from the results of this section \ref{thm:divtoapproxneckreduction} we obtain the following corollary.
			 
			 \begin{corollary}
			 	$(n,\ell,2,\varepsilon)$-\necklace\ is
			 	\begin{itemize}
			 		\item \emph{PPAD-hard}, when $\ell = n+t$ for any constant $t \geq 0$ and $0 \leq \epsilon \leq 1/p_s(n)$, where $p_s$ is a polynomial. 
			 		\item \emph{NP-hard}, when $\ell=n-1$ and $0 \leq \epsilon \leq 1/ p_s(n)$, where $p_s$ is a polynomial. 
			 	\end{itemize}
			 \end{corollary}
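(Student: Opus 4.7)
The plan is to derive both hardness results as direct compositions of two reductions: (i) the hardness of approximate \ch\ established by~\cite{FFGZ} for the different cut regimes, viewed as the $k=2$ case of $(n,\ell,\epsilon')$-\CKD, and (ii) the reduction from $(n,\ell,\epsilon')$-\CKD\ to $(n,\ell,k,\varepsilon)$-\necklace\ stated in Corollary~\ref{thm:divtoapproxneckreduction}. Both hardness results from~\cite{FFGZ} hold for $\epsilon'$ a positive constant, and a constant is in particular inverse-polynomial in $n$, so the hypothesis of Corollary~\ref{thm:divtoapproxneckreduction} is satisfied.

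For the \ppad-hardness part, I would first invoke \cite{FFGZ} to obtain that approximate \ch\ with $n+t$ cuts (for any constant $t\geq 0$) and any constant $\epsilon'>0$ is \ppad-hard. I would then recast this as $(n,\ell,2,\epsilon')$-\CKD\ with $\ell=n+t$, since \ch\ is precisely the $k=2$ instance of consensus $1/k$-division. Applying Corollary~\ref{thm:divtoapproxneckreduction} with $k=2$ and $\ell=n+t$ produces, in polynomial time, an instance of $(n,\ell,2,\varepsilon)$-\necklace\ whose solutions give solutions to the \CKD\ instance, provided $\varepsilon$ is inverse-polynomial in $n$. Because the reduction in Corollary~\ref{thm:divtoapproxneckreduction} only adds an inverse-polynomial error, we can choose $\varepsilon$ to be any $1/p_s(n)$ (sufficiently small) without losing correctness, yielding the claimed \ppad-hardness.

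For the \np-hardness part, the argument is identical in structure: \cite{FFGZ} gives \np-hardness of approximate \ch\ (viewed as $k=2$ \CKD) in the regime where only $n-1$ cuts are permitted, again for some constant $\epsilon'>0$. Feeding this instance through Corollary~\ref{thm:divtoapproxneckreduction} with $\ell=n-1$ produces an $(n,n-1,2,\varepsilon)$-\necklace\ instance with the required inverse-polynomial $\varepsilon$, establishing \np-hardness.

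The only thing that needs a brief check is the parameter matching between the two ingredients. The \CKD-to-\necklace\ reduction uses $(k-1)\cdot \ell$ cuts on the necklace to recover a $1/k$-division using $(k-1)\cdot\ell$ cuts; for $k=2$ this is exactly $\ell$ cuts on each side, so the cut budgets $n+t$ and $n-1$ transfer over unchanged. There is no genuine obstacle here—this is a routine composition of previously established reductions—so no intricate new argument is required, and both items of the corollary follow immediately.
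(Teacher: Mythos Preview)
Your proposal is correct and matches the paper's approach: the paper does not give a separate proof of this corollary but simply states that it follows by combining the hardness results of \cite{FFGZ} with Corollary~\ref{thm:divtoapproxneckreduction}, which is exactly the composition you spell out. Your additional remark that the cut budget transfers unchanged for $k=2$ (since $(k-1)\cdot\ell=\ell$) is a useful sanity check that the paper leaves implicit.
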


\section{Conclusions}\label{sec:conclusions}

We hope that the present work will lead to more \ppa-completeness results, starting with the Necklace Splitting problem. The reason for believing that \ns\ is \ppa-complete, is that it would be surprising if, in relaxing the approximation parameter $\epsilon$ from inverse-exponential to inverse-polynomial, we should lose \ppa-hardness but retain \ppad-hardness. That is what would be the case if \ns\ were merely \ppad-complete.

\end{document}